\newtheorem{definition}{Definition}
\newtheorem{prop}{Proposition}
\newtheorem{theorem}[prop]{Theorem}
\newtheorem{lemma}[prop]{Lemma}
\newtheorem{corr}[prop]{Corollary}
\newcounter{rem}
\def\ieeetit{IEEE Trans.\ Inf.\ Th.}%
\def\jsovmath{J.\ Sov.\ Math.}
\def\nature{Nature}%
\def\njp{New J.\ Phys.}%
\def\npjqi{Npj Quantum Information}
\def\pra{Phys.\ Rev.\ A}%
\def\prl{Phys.\ Rev.\ Lett.}%
\def\quantum{Quantum}%
\def\rmp{Rev.\ Mod.\ Phys.}%
\begin{document}

\author{Thomas Van Himbeeck}
\author{Stefano Pironio}
\affil{Laboratoire d'Information Quantique, Universit\'e libre de Bruxelles (ULB), Belgium}

\title{Correlations and randomness generation\\
 based on energy constraints}
\date{May 22, 2019}

\maketitle

\begin{abstract}
In a previous paper, we introduced a semi-device-independent scheme consisting of an untrusted source sending quantum states to an untrusted measuring device, with the sole assumption that the average energy of the states emitted by the source is bounded. Given this energy constraint, we showed that certain correlations between the source and the measuring device can only occur if the outcomes of the measurement are non-deterministic, i.e., these correlations certify the presence of randomness.

In the present paper, we go further and show how to quantify the randomness as a function of the correlations and prove the soundness of a QRNG protocol exploiting this relation. For this purpose, we introduce (1) a semidefinite characterization of the set of quantum correlations, (2) an algorithm to lower-bound the Shannon entropy as a function of the correlations and (3) a proof of soundness using finite trials compatible with our energy assumption. 
\end{abstract}

\section{Introduction}
Quantum Random Number Generators (QRNG) provide a practical source of randomness since they can be realized with elementary components (such as single photon detectors and lasers diodes) and reach high (Gbit/s) rates \cite{ref:QRNGrev2016,ref:QRNGrev2017}. Furthermore, unlike classical physical generators or pseudo-random algorithms, they rely on quantum processes that are inherently random. This means that even with detailed knowledge of the initial conditions of the QRNG and unlimited computational power, an external adversary can gain no information about the random output. In practice, however, assessing the actual entropy produced by a real QRNG, which may be prone to multiple imperfections and malfunctions, requires a detailed modelling, making the randomness analysis very specific to a given device, difficult to verify, and possibly relying on unwarranted trust in the device components \cite{ref:am2016}.

This weakness of standard QRNG has motivated Device-Independent (DI) QRNG schemes, where the observed violation of a Bell inequality among multiple devices serves as a rigorous certificate that a certain amount of entropy has been produced \cite{ref:pm2013,ref:fgs2013}. This assurance does not rely on any modelling of the devices, which can be treated in a black-box manner, but only requires that the devices satisfy certain causality constraints,  usually that they do not communicate. Though DI QRNGs are conceptually very compelling and have even be demonstrated experimentally \cite{ref:pam2010, ref:rkc2018, ref:lzl2018}, they require challenging loophole-free Bell tests, which precludes real-life implementations with present day technology.

The semi-DI approach aim to retain the conceptual advantages of DI schemes, while making their implementation easier, and in particular avoiding the necessity of using entanglement and loophole-free Bell tests. Their experimental requirements and generation rates are typically similar to standard QRNGs \cite{ref:lb2015, ref:bme2016}, while their theoretical analysis is similar to fully DI schemes as it relies on the observation of certain statistical features akin to the violations of Bell inequalities. However, semi-DI devices cannot be fully treated in a black-box manner, but must satisfy one or a small set of assumptions, such as a bound on the dimension of the relevant Hilbert space \cite{ref:pb2011}. 

In \cite{ref:correnergy}, we introduced a simple semi-DI prepare-and-measure scenario, where the only required assumption is a bound on the average value of a natural physical observable, such as the energy of the prepared states. The device we considered, see Fig.~\ref{fig:pm_setup}, consists of two distinguishable parts: a source (S) and a measurement apparatus (M). The source S prepares one of two quantum systems,  depending of an external control variable $x\in\{1,2\}$, which are then measured at M, yielding a binary outcome $a\in\{\pm 1\}$. The correlations between the output $b$ of the measurement apparatus M and the control variable $x$ of the source S can be quantified by the two quantities $E_1,E_2$, where
\begin{equation}\label{eq:corr}
	E_x=\text{Pr}(a=+1|x)-\text{Pr}(a=-1|x)\,,
\end{equation}
for $x\in\{1,2\}$, indicates how much the output $a$ is biased depending on $x$. 

It is shown in \cite{ref:correnergy} that the observation of certain correlations $\vE = (E_1,E_2)$ between S and M guarantees that the output $a$ is random, similarly to the observation of nonlocal correlations in Bell scenarios. This conclusion is valid assuming only a bound on the average energy (as defined precisely below) of the states emitted by S. But apart from this assumption no other assumptions are made on S or M, in particular the measurement apparatus M can be treated in a fully black-box manner. The scenario is therefore semi-DI. The interest of this proposal is that very simple optical implementations, involving only the preparation of attenuated coherent states and homodyne measurements or single-photon threshold detectors, can produce correlations in the randomness generating regime.
\begin{figure}[t]
	\centering
	\includegraphics[scale=1]{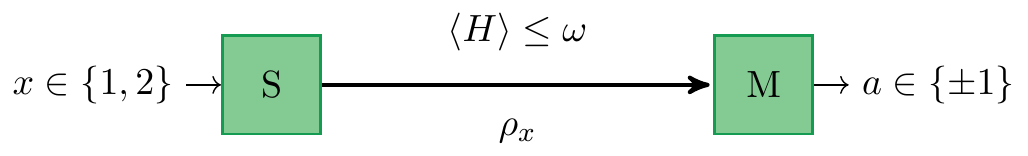}
	\caption{\label{fig:pm_setup}
		Prepare-and-measure scenario considered here involving a device made of a source S and a measurement apparatus M. We assume that the states send by S to M have a bounded average energy.}
\end{figure}

The work \cite{ref:correnergy} showed the existence of inherently random correlations in the energy constrained semi-DI scenario by deriving Bell-type inequalities which are necessarily satisfied by all correlations admitting a deterministic explanation, but which can be violated by quantum correlations. While this immediately suggest the possibility of using such quantum correlations to design semi-DI QRNG protocols, the results of \cite{ref:correnergy} did not tell how much randomness can be certified from such correlations, neither in an ideal situation nor in a real protocol where the correlations are estimated using finite statistics. In the present work, we fill this gap and explicitly introduce and prove the soundness of a semi-DI QRNG protocol based on the ideas of \cite{ref:correnergy}.

Our results are based on three new elements, each developed in their own section.
\begin{itemize}
	\item \textbf{Semidefinite charcteristion of quantum correlations.} First, in Section~\ref{sec:sdp}, we derive a new semidefinite programming (SDP) characterisation of quantum correlations in our semi-DI prepare-and-measure scenario. The set of quantum correlations was already characterized in \cite{ref:correnergy}, but the new SDP characterization is an essential ingredient for determining their randomness in the subsequent section. SDP characterisations also appear in DI scenarios \cite{ref:npa2008,ref:dt2008} and we show that they are closely related by providing an explicit mapping between the quantum set in our scenario and the quantum set in the Bell-CHSH scenario. 
	
	\item \textbf{Quantifying randomness from correlations.} Second, in Section~\ref{sec:entropy}, we provide a simple algorithm to bound the worst case conditional Shannon entropy of the output $a$ as a function of the correlations $\vE$. This quantity yields better rates than the guessing probability, which is usually used in DI QRNG \cite{ref:arv2016,ref:adfrv2018}. Our algorithm uses the previously derived SDP characterisation of the quantum set (and thus could also be adapted to DI QRNG). We illustrate our method by computing optimal asymptotic rates for some of the optical implementations proposed in \cite{ref:correnergy}.
	
	\item \textbf{Protocol for randomness generation.} Finally, in Section~\ref{sec:security_proof}, we provide an explicit semi-DI QRNG protocol and prove its soundness. Our proof relies on the framework of \cite{ref:zkb2018,ref:kzb2017}, which we generalize in two ways. First, we show how to take into account a constraint on the average energy of the states sent over multiple rounds. This type of assumption is technically different from the no-communication assumption in DI QRNG based on standard Bell tests, as the the energy is allowed to fluctuate arbitrarily from round to round provided its average is bounded. Secondly, we base our computation of the entropy rate on the single round conditional Shannon entropy, in the manner of \cite{ref:arv2016,ref:adfrv2018}, instead of the probability estimation factors of \cite{ref:kzb2017}. Tough the two alternatives have the same asymptotic rate, the Shannon entropy is easier to compute using the algorithm of Section~\ref{sec:entropy}. 
\end{itemize}

This paper is somewhat abstract since, in the DI spirit, it applies to any possible implementation of the energy constrained prepare-and-measure scenario that we consider. For the description of explicit implementations and further physical motivation of this scenario, we refer to \cite{ref:correnergy}.


\section[Semidefinite characterization of quantum correlations]{Semidefinite characterization of quantum\\ correlations}
\label{sec:sdp}

\subsection{Definition}
We start by reminding the general scenario considered in \cite{ref:correnergy}. 
As stated in the Introduction, we consider the prepare-and-measure scheme depicted in Figure~\ref{fig:pm_setup}, with $E_x$ defined in eq.~(\ref{eq:corr}) describing the correlations between the output $a\in\{\pm 1\}$ of the measurement apparatus $M$ and the input $x\in \{1,2\}$ of the source S. 
According to quantum theory, we can write in full generality
\begin{equation}
	E_x=\Tr[\rho_x M]\,,
\end{equation}
where $\rho_x$ is the state sent by the source when $x$ is selected and $M$, with $-1\leq M\leq 1$, is an observable describing the measurement process.

In the absence of any additional information, any correlations $\vect{E}=(E_1,E_2)$ are in principle possible between S and M. To constraint further our framework, we introduce a second observable $O$, with the following two properties:
\begin{flalign}
\begin{minipage}{0.8\textwidth}
	\begin{itemize}
		\item $O$ has lowest eigenvalue $0$ and unit gap, i.e., all other eigenvalues 	are greater or equal to $1$;
	\end{itemize}
\end{minipage}&&\label{eq:defineH}\\
\begin{minipage}{0.8\textwidth}
	\begin{itemize}
		\item the eigenspace associated to the lowest eigenvalue 0 is non-degenerate and corresponds to a single eigenvector $\ket{0}$.
	\end{itemize}
\end{minipage}&&\label{eq:defineH2}
\end{flalign}
Examples of such an observable, in the case were the underling Hilbert space is the Fock space of a quantum optical system, would be the particle number operator or the projection on the non-vacuum subspace. Since both such observables are related to the energy content of the quantum optical system, we will refer in the following to any observable $O$ satisfying the properties (\ref{eq:defineH}-\ref{eq:defineH2}) as an \emph{energy observable} and to the nondegenerate lowest eigenstate $|0\rangle$ as the \emph{vacuum state} (without excluding other possible physical realizations for $O$, see, e.g., example 2.3.2 of \cite{ref:correnergy}).

Having introduced the observable $O$, we now make the assumption that the average energies $\Tr[\rho_x O]$ are upper-bounded, i.e. 
\begin{equation}
	\label{eq:ub}
	\Tr[\rho_x O]\leq 	\omega_x\,,
\end{equation}
for $x\in\{1,2\}$. The purpose of introducing this assumption is that it implies fundamental constraints on the possible correlations between S and M, which are valid for any states $\rho_x$ and observable $M$.
This can intuitively be understood by noticing that the energy upper-bounds $\vect{\omega}=(\omega_1,\omega_2)$ are related to how distinguishable the two states $\rho_1$ and $\rho_2$ are. If $\omega_1$ and $\omega_2$ are very small, then the two states $\rho_1$ and $\rho_2$ must both be close to the vacuum state $|0\rangle$ and thus cannot be perfectly distinguished. This in turn implies that $E_1$ and $E_2$ cannot be too different. In the extreme case where $\omega_1=\omega_2=0$, then $\rho_1=\rho_2=|0\rangle\langle 0|$ and therefore necessarily $E_1=E_2$, which means that the observation of $a$ does not carry any information about $x$. Note that, to reach this conclusion, assumption (\ref{eq:defineH2}) on $O$ is essential. If the lowest eigenspace of $O$ is degenerate, then $\rho_1$ and $\rho_2$ can be orthogonal, hence perfectly distinguishable, even when $\omega_1=\omega_2=0$.


Let the tuple $(\vect{E},\vect{\omega})=(E_1,E_2,\omega_1,\omega_2)$ specify a possible pair of values for the correlations $(E_1,E_2)$ between S and M and a possible pair of upper-bounds $(\omega_1,\omega_2)$ on the average energies of the states emitted by S. We refer to such a tuple as a possible \emph{behaviour} of our observed system.
As discussed above, not every behaviour $(\vect{E},\vect{\omega})$ is physically realizable within quantum mechanics because of a tradeoff between the degree of correlations $\vE$ and the energy $\vect{\omega}$ of the emitted states. Formally, we define behaviours admitting a quantum representation as follows.
\begin{definition}
	\label{def:gen_q_set}
	A behaviour $(\vect{E},\vect{\omega})=(E_1,E_2,\omega_1,\omega_2)$ admits a quantum representation if there exist two states $\rho_x$, an observable $-1\leq M\leq 1 $, and an energy operator $O$, satisfying \eqref{eq:defineH} and $\eqref{eq:defineH2}$, such that 
\begin{IEEEeqnarray}{rL}
		\IEEEyesnumber*\label{eq:gen_q_set_b}\IEEEyessubnumber*
		\Tr[\rho_x M]&=E_x\label{eq:gen_q_set}\\
		\Tr[\rho_x O]&\leq \omega_x  \label{eq:gen_q_setb}\,,
	\end{IEEEeqnarray}
or if $(\vect{E},\vect{\omega})$ is a convex combination of behaviours of the above form.

We denote $\setQ \subset \R^4$  the set of all behaviours $(\vect{E},\vect{\omega})$ admitting a quantum representation. 
\end{definition}
Clearly, a valid quantum behaviour must satisfy the trivial constraints $-1\leq E_x\leq 1$ and $\omega_x\geq 0$. But it must also satisfy additional non-trivial constraints. In the next subsection, we give a complete characterization of the set $\setQ$ of quantum behaviours. But first, let us make some general remarks.

\addtocounter{rem}{1}
\paragraph{Remark \arabic{rem}.} We allow for convex combinations in the definition of a quantum behaviour. This means that the source and measurement apparatus can be correlated via some shared randomness $\lambda$. Explicitly, a behaviour $(\vect{E},\vect{\omega})$ is thus quantum if
	\begin{IEEEeqnarray*}{rL}
		\IEEEyesnumber*\IEEEyessubnumber*
		{\sum}_\lambda p_\lambda \Tr[\rho_x^\lambda M^\lambda]&=E_x\label{eq:gen_q_set_hv}\\
		{\sum}_\lambda p_\lambda \Tr[\rho_x^\lambda O^\lambda]&\leq \omega_x \label{eq:gen_q_setb_hv}
	\end{IEEEeqnarray*}
	where $p_\lambda$ is the probability distribution of the shared randomness, and where the states $\rho_x^\lambda$, observable $M^\lambda$, and energy operator $O^\lambda$ can all depend on $\lambda$. This dependency on $\lambda$ can represent, e.g., hidden physical fluctuations that can affect the source and measurement apparatus separately or jointly.

\addtocounter{rem}{1}
\paragraph{Remark \arabic{rem}.} Though we allow S and M to be correlated via shared, classical randomness, we implicitly assume that they do not share entanglement. More generally, if S and M shared prior entanglement, we should write 
$E_x = \Tr[ M_{\mathcal{SM}}\, \$^x_{\mathcal{S}}\otimes 1_{\mathcal{M}}(\rho_{\mathcal{SM}}) ]$ and 
$\omega_x \geq \Tr_S[ O_\mathcal{S} \$^x_{\mathcal{S}}(\rho_{\mathcal{S}})]$, where the indices $\mathcal{S}$, $\mathcal{M}$, refer to the source and measurement apparatus initial systems, respectively, and where $\$^x_{\mathcal{S}}$ denotes a completely positive trace-preserving super-operator acting on the source's systems depending on the value of the control variable $x$. Understanding how this broader setting modify our results is an interesting question, which we do not attempt to resolve here.

\addtocounter{rem}{1}
\paragraph{Remark \arabic{rem}.} Our framework is semi-device-independent in two possible ways. First, we could view our scenario as a prepare-and-measure scenario with two preparation choices, $x=1$ or $x=2$, and two possible measurement $M$ or $O$, so that both the values of $\vE$ and $\vect{\omega}$ are obtained through measurements (this perspective is developed in Appendix~\ref{sec:ent}). While the states $\rho_x$ and the measurement $M$ are completely arbitrary, this is not the case for the measurement $O$ which must satisfy property (\ref{eq:defineH2}) and thus be partly trusted. Thus our scenario is \emph{semi}-device-independent because of this  assumption on the energy operator $O$ (and also because we implicitly assume that S and M do not share prior entanglement). Note that property (\ref{eq:defineH}) simply defines the outcome set associated to the observable $O$ and thus does not really represent an assumption (in the same way that $-1\leq M\leq 1$ follows from the fact that $M$ has binary outcomes). 


But there is also another sense in which our framework is semi-device-independent. As presented in our previous work \cite{ref:correnergy}, in potential applications, it is more natural to see the upper-bounds (\ref{eq:ub}) on $\vect{\omega}$ as a given promise on the quantum systems, rather than information obtained through an actual measurement of $O$. That is, we may have some partial knowledge about the quantum systems produced by the source, which imply bounds of the form (\ref{eq:defineH2}). For instance, we may assume that the source sends optical quantum systems with a large vacuum component. Then the semi-device-independence aspect of our framework refers to an assumption on the source itself rather than to a hypothetical measuring device $O$. This is the approach that we generally take below (except in Appendix~\ref{sec:ent}).

\addtocounter{rem}{1}
\paragraph{Remark \arabic{rem}.}  As we explained above, the purpose of the upper-bounds \eqref{eq:ub} is to constraint the distinguishability of the states $\rho_1$ and $\rho_2$, which in turn implies constraints on $E_1$ and $E_2$. There could of course be other ways to constrain the distinguishability of the emitted states. We choose our approach because it is related to a natural physical property of the emitted states (see the discussion and motivation in \cite{ref:correnergy} for more details), moreover it applies to mixed states, contrarily to the scalar product bound used in \cite{ref:bme2016}).

\subsection[Semidefinite representation of the quantum set Q]{Semidefinite representation of the quantum set $\mathcal{Q}$}
\label{sec:sdprep}

The next proposition provides a useful characterization of the quantum set $\setQ$ in the form of a positive semidefinite (SDP) constraint. This characterization will later be shown to be equivalent to the formula given in \cite{ref:correnergy}.

\begin{theorem}
	\label{prop:sdp_char}
	A behaviour $(\vE,\vect{\omega})$ is in $\setQ$ if and only if there exist real numbers $u,v,\eta_1,\eta_2$ with $\eta_1\leq \omega_1$ and $\eta_2\leq \omega_2$ such that the symmetric matrix
	\begin{IEEEeqnarray}{rL}
		\label{eq:sdp_char}		
		\Gamma=\left(\begin{IEEEeqnarraybox}[][c]{c?c?c?c}
			1 & u & E_1 & 2 \eta_1-1\\
			  & 1 &	E_2 & 2 \eta_2-1\\
			  &   & 1      & v\\
			  &   &        & 1
		\end{IEEEeqnarraybox}\right) \succeq 0	   
	\end{IEEEeqnarray}
is positive semidefinite.
\end{theorem}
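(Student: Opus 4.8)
The plan is to prove both directions by exhibiting an explicit dictionary between a quantum representation $(\rho_1,\rho_2,M,O)$ and a feasible Gram matrix $\Gamma$. The key observation is that the $4\times 4$ matrix $\Gamma$ should be read as a matrix of inner products of four unit vectors living in a real inner-product space obtained from the quantum data. Concretely, I expect the four ``directions'' to correspond to: (i) a vector $\ket{v_1}$ encoding $\rho_1$, (ii) a vector $\ket{v_2}$ encoding $\rho_2$, (iii) the observable $M$, and (iv) the vacuum projector $\ketbra{0}{0}$ — or more precisely the reflection $2\ketbra{0}{0}-\id$, which is why the entries in the last column read $2\eta_x-1$ with $\eta_x$ playing the role of $\bracket{v_x|0}\!\bracket{0|v_x}$, the vacuum overlap of $\rho_x$. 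The energy constraint $\Tr[\rho_x O]\le\omega_x$ translates, using properties \eqref{eq:defineH}--\eqref{eq:defineH2} (lowest eigenvalue $0$, unit gap, nondegenerate), into $1-\bracket{0|\rho_x|0}\le\Tr[\rho_x O]\le\omega_x$, i.e. $\eta_x:=\bracket{0|\rho_x|0}\ge 1-\omega_x$ — wait, one must be careful with the sign convention; matching the statement, $\eta_x$ is chosen so that $\eta_x\le\omega_x$ encodes exactly this, the free parameters $u=\bracket{v_1|v_2}$ and $v=\langle M\rangle$-type overlap being unconstrained reals. The diagonal $1$'s encode that all four vectors are unit (pure states / $\pm1$-valued observable directions), and $\Gamma\succeq 0$ is then precisely the statement that a Gram matrix with these prescribed entries exists.

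For the ``only if'' direction I would start from a quantum representation, first reduce to pure states $\rho_x=\ketbra{\psi_x}{\psi_x}$ and a projective $M=P_+-P_-$ by a standard purification/dilation argument (Naimark, plus absorbing the convex-combination/shared-randomness $\lambda$ into a larger Hilbert space with $O$, $M$ block-diagonal — one needs to check the energy bound is preserved under this, which it is since it is linear in $\rho_x$). Then I define the real vectors: embed the Hermitian operators and the relevant states into $\R^d$, set $u=\re\bracket{\psi_1|\psi_2}$ (or work in the real span so everything is real), $E_x=\bracket{\psi_x|M|\psi_x}$, $v$ the corresponding cross term for $M$ against the vacuum reflection, and $\eta_x=\abs{\bracket{0|\psi_x}}^2$. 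Positivity of $\Gamma$ is automatic because it is a genuine Gram matrix, and $\eta_x\le\omega_x$ follows from \eqref{eq:gen_q_setb} together with $O\ge\id-\ketbra{0}{0}$ (the operator inequality implied by \eqref{eq:defineH}--\eqref{eq:defineH2}).

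For the ``if'' direction, given a feasible $\Gamma\succeq 0$ with $\eta_x\le\omega_x$, I would factor $\Gamma=V^TV$ to obtain four unit vectors $g_0,g_1,g_2,g_3\in\R^4$ with the prescribed inner products, then build a quantum model: take the Hilbert space spanned by these vectors (possibly tensored with a qubit to get genuine states), let $\ket{0}$ be a unit vector arranged so that $\abs{\bracket{0|\psi_x}}^2=\eta_x$, define $\rho_x$ from $g_1,g_2$ and $M$ from $g_3$ via the reflection $2\Pi-\id$ construction so that $\Tr[\rho_xM]=E_x$ and $-\id\le M\le\id$. Finally pick $O=\id-\ketbra{0}{0}$ (which satisfies \eqref{eq:defineH}--\eqref{eq:defineH2} with eigenvalue $0$ nondegenerate and gap $1$), giving $\Tr[\rho_xO]=1-\eta_x$; but $\eta_x\le\omega_x$ gives $1-\eta_x$ possibly $>\omega_x$, so actually one should choose $O$ to have its nonzero eigenvalue large, or rather realise $\rho_x$ with vacuum weight at least $1-\omega_x$ — this matches once the sign convention in the matrix ($2\eta_x-1$, $\eta_x\le\omega_x$) is pinned down, and I would simply verify it directly on the explicit model.

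The main obstacle I anticipate is \emph{not} the Gram-matrix bookkeeping but (a) getting the reduction to pure states and projective measurements right while respecting the energy inequality and the degeneracy assumption \eqref{eq:defineH2} — in particular showing the nondegenerate-vacuum structure can be preserved in the dilation — and (b) handling the convexity clause of Definition~\ref{def:gen_q_set} cleanly, i.e. checking that the set of $(\vE,\vomega)$ satisfying the SDP is already convex (which is clear, since it is the projection of a spectrahedron) so that closing under convex combinations on the quantum side does not enlarge it. Once those two points are settled, the equivalence with the formula of \cite{ref:correnergy} follows by eliminating the auxiliary variables $u,v,\eta_1,\eta_2$ from the $4\times4$ determinantal/ principal-minor conditions, which is the routine computation I would defer.
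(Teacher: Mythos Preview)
Your high-level instinct --- that $\Gamma$ is the Gram matrix of four unit vectors built from the quantum data --- is exactly right and matches the paper. But the proposal has a genuine gap in how those four vectors are actually constructed, and this is where the paper's argument differs from yours in an essential way.

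You try to build the Gram matrix directly in the (possibly high-dimensional) Hilbert space, mixing state vectors $\ket{\psi_x}$ with operators $M$ and $2\ket{0}\!\bra{0}-\id$. But these objects do not live in a common real inner-product space in which they are all \emph{unit} vectors and in which the off-diagonal entries of $\Gamma$ arise as honest inner products. For instance, $E_x=\bra{\psi_x}M\ket{\psi_x}$ is not an inner product between $\ket{\psi_x}$ and some fixed vector independent of $x$; and in the Hilbert--Schmidt space the operators $M$ and $2\ket{0}\!\bra{0}-\id$ do not have norm $1$ in dimension $>2$. Your Naimark/block-diagonal dilation does not fix this and, as you already noticed, collides with the nondegenerate-vacuum requirement \eqref{eq:defineH2}: a block-diagonal $O$ over the shared-randomness register would have a degenerate ground space.

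The paper resolves this by first proving a separate qubit reduction (Lemma~\ref{lemma1}): every \emph{extremal} behaviour of $\setQ$ admits a two-dimensional representation $\rho_x=\tfrac12(\id+\vect n_x\cdot\vsigma)$, $M=\vect m\cdot\vsigma$, $O=\tfrac12(\id+\vect k\cdot\vsigma)$ with $\vect n_x,\vect m,\vect k$ unit vectors in $\R^3$. Only \emph{after} this reduction does one get a clean Gram matrix: $\Gamma$ is literally the Gram matrix of $(\vect n_1,\vect n_2,\vect m,\vect k)$, with unit diagonal for free, $E_x=\vect n_x\cdot\vect m$, and $2\eta_x-1=\vect n_x\cdot\vect k$. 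Convexity of the SDP constraint then propagates the result from extremal to all behaviours. For sufficiency, one factors $\Gamma$ into four unit vectors in $\R^4$, projects $\vect m$ onto the $3$-dimensional span of $\vect n_1,\vect n_2,\vect k$, and reads off a qubit model via \eqref{eq:bloch_mapping}.

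A second, smaller issue: you interpret $\eta_x$ as the vacuum overlap $\abs{\braket{0|\psi_x}}^2$ and then struggle with the inequality direction. In the paper's parametrisation $\eta_x$ is the \emph{energy} itself, $\eta_x=\Tr[\rho_x O]=\tfrac12(1+\vect n_x\cdot\vect k)$, so $\eta_x\le\omega_x$ is exactly \eqref{eq:gen_q_setb} and there is no sign problem in either direction.
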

The necessary part of this statement is a direct consequence of the following Lemma.
\begin{lemma}\label{lemma1}
	Any extremal behaviour $(\vE,\vect{\omega})$ of the quantum set $\setQ$ admits a representation (\ref{eq:gen_q_set}),(\ref{eq:gen_q_setb}) which is two-dimensional and of the form
	\begin{IEEEeqnarray*}{rL}
		\IEEEyesnumber*
		\label{eq:bloch_mapping}
		\IEEEyessubnumber*
		\rho_x &=\frac{1}{2}\left(\id + \vect{n}_x \cdot \vsigma \right)\label{eq:bloch_mapping_r}\\
		M&=\vect{m}\cdot \vsigma\label{eq:bloch_mapping_m}\\
		O &= \frac{1}{2}\left(\id+\vect{k}\cdot\vsigma\right)\label{eq:bloch_mapping_h}
	\end{IEEEeqnarray*} 
	where $\vect{n}_x$, $\vect{m}$, and $\vect{k}$ are unit vectors in $\mathbb{R}^3$ and $\vsigma=(\sigma_x,\sigma_y,\sigma_z)$ is the vector of Pauli matrices.
\end{lemma}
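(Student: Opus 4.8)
The plan is to start from an arbitrary quantum representation of an extremal behaviour and simplify it one feature at a time, always staying at the same point $(\vE,\vomega)$, using extremality to rule out the configurations that would genuinely require more than two dimensions. First, since $\setQ$ is the convex hull, taken in $\R^4$, of the behaviours admitting a representation \eqref{eq:gen_q_set}--\eqref{eq:gen_q_setb}, an extremal behaviour must itself admit such a representation $(\rho_1,\rho_2,M,O)$ on some Hilbert space $\mathcal H$: for if it were written as a finite convex combination of basic behaviours, extremality would force each of them to equal $(\vE,\vomega)$. Fix such a representation. The properties \eqref{eq:defineH}--\eqref{eq:defineH2} of the energy operator yield the operator inequality $O \succeq \id - \ket 0\bra 0$, so replacing $O$ by $O' := \id - \ket 0\bra 0$ — which still obeys \eqref{eq:defineH}--\eqref{eq:defineH2} — can only relax the energy constraint, $\Tr[\rho_x O'] \le \Tr[\rho_x O] \le \omega_x$, while leaving the behaviour unchanged. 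From now on the only relevant data about the states are $E_x = \Tr[\rho_x M]$ and $\langle 0|\rho_x|0\rangle$, the energy constraint reading $\langle 0|\rho_x|0\rangle \ge 1-\omega_x$.

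Next I would invoke extremality twice. The energy constraints must be \emph{saturated}, $\Tr[\rho_x O'] = 1-\langle 0|\rho_x|0\rangle = \omega_x$: otherwise, say with slack in the first coordinate, one has $(\vE,\vomega) = \tfrac12(E_1,E_2,\omega_1-\epsilon,\omega_2) + \tfrac12(E_1,E_2,\omega_1+\epsilon,\omega_2)$ for small $\epsilon>0$, where the first endpoint is realized by the same representation and the second lies in $\setQ$ because $\setQ$ is upward closed in $\vomega$ (immediate from Definition~\ref{def:gen_q_set}) — contradicting extremality. Moreover the states may be taken \emph{pure}: writing $\rho_1 = \sum_i q_i \ket{\psi_i}\bra{\psi_i}$ exhibits $(\vE,\vomega)$ as the convex combination $\sum_i q_i(\vE^{(i)},\vomega^{(i)})$ of the legitimate behaviours obtained by the substitutions $\rho_1 \mapsto \ket{\psi_i}\bra{\psi_i}$, $\omega_1 \mapsto 1-\abs{\langle 0|\psi_i\rangle}^2$ (their average reproduces $\omega_1$ precisely because of saturation), so extremality lets us assume $\rho_1 = \ket{\psi_1}\bra{\psi_1}$ and likewise $\rho_2 = \ket{\psi_2}\bra{\psi_2}$.

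The crux is then the reduction of the dimension to two. Restricting $M$ and $O'$ to the subspace $V = \operatorname{span}\{\ket 0,\ket{\psi_1},\ket{\psi_2}\}$ gives a representation of $(\vE,\vomega)$ on $V$, and $\dim V \le 3$, so it remains to bring $\ket 0$ into $W := \operatorname{span}\{\ket{\psi_1},\ket{\psi_2}\}$. If $\ket 0 \notin W$ but $\Pi_W\ket 0 \ne 0$, replace the vacuum by $\ket\phi := \Pi_W\ket 0/\norm{\Pi_W\ket 0}$ (and $O'$ by $\id-\ket\phi\bra\phi$): since $\abs{\langle\phi|\psi_x\rangle}^2 = \abs{\langle 0|\psi_x\rangle}^2/\norm{\Pi_W\ket 0}^2 \ge \abs{\langle 0|\psi_x\rangle}^2$, strictly whenever $\langle 0|\psi_x\rangle \ne 0$, this creates slack in a saturated energy constraint — impossible. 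The only other case, $\Pi_W\ket 0 = 0$, forces $\omega_1 = \omega_2 = 1$; but any such behaviour is non-extremal, as it can be realized with some state equal to $\ket 0$ itself (and $M$ chosen to reproduce $E_1,E_2$), which leaves slack. Hence $\ket 0 \in W$, the representation lives on the qubit $W$ (embedded into $\C^2$ if $\dim W = 1$), and it only remains to read it off in Bloch form: a pure state is $\tfrac12(\id+\vect n\cdot\vsigma)$ with $\abs{\vect n}=1$, the operator $O' = \id-\ket 0\bra 0$ equals $\tfrac12(\id+\vect k\cdot\vsigma)$ with $\vect k$ antipodal to the Bloch vector of $\ket 0$, and $M = m_0\id + \vect m'\cdot\vsigma$ with $\abs{\vect m'}\le 1-\abs{m_0}$ is pushed to an extremal observable $M = \vect m\cdot\vsigma$, $\abs{\vect m}=1$, by one last use of extremality (a non-extremal $M$ decomposes the behaviour through a convex combination with unchanged $\vomega$); the degenerate cases $M=\pm\id$ occur only for $E_1=E_2=\pm1$, where one re-chooses $\vect m$ with $\vect n_1=\vect n_2=\vect m$. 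This is a representation of the form \eqref{eq:bloch_mapping}.

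I expect the dimension reduction from three to two to be the main obstacle. The non-degeneracy assumption \eqref{eq:defineH2} on $O$ is exactly what prevents $\setQ$ from being closed under block-diagonal constructions, so the dimension cannot be cut down by a routine purification or dilation argument; instead one must exploit both that the overlap $\langle\psi_1|\psi_2\rangle$ is not pinned down by $(\vE,\vomega)$ and that extremality saturates the energy bounds. The delicate points are verifying the edge cases ($\ket 0$ orthogonal to a state, $\ket{\psi_1}\parallel\ket{\psi_2}$, $M=\pm\id$) and checking that each reduction preserves the behaviour exactly, not merely up to a convex combination.
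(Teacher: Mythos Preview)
Your proof is correct, but you take a longer path than the paper for the dimension reduction from $V=\operatorname{span}\{\ket 0,\ket{\psi_1},\ket{\psi_2}\}$ down to $W=\operatorname{span}\{\ket{\psi_1},\ket{\psi_2}\}$. You first establish that the energy constraints are \emph{saturated} at an extremal point, and then use this to argue by contradiction that $\ket 0\in W$: otherwise the normalized projection $\ket\phi=\Pi_W\ket 0/\norm{\Pi_W\ket 0}$ would give strict slack in a saturated constraint. The paper never needs saturation nor the conclusion $\ket 0\in W$. It simply observes that, since the energy bound \eqref{eq:gen_q_setb} is an \emph{inequality}, one may replace $O$ by $\tilde O=\id-\ket{\tilde 0}\bra{\tilde 0}$ on $W$ (with $\ket{\tilde 0}$ the normalized projection of $\ket 0$ onto $W$, or arbitrary if that projection vanishes): the same inequality $\abs{\langle\tilde 0|\psi_x\rangle}^2\ge\abs{\langle 0|\psi_x\rangle}^2$ that you use for a contradiction is used directly to conclude $\Tr[\rho_x\tilde O]\le\Tr[\rho_x O]\le\omega_x$, yielding a valid two-dimensional representation in one step.

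So the two proofs pivot on the very same inequality; yours extracts from it the stronger structural fact that at an extremal behaviour the vacuum actually lies in the span of the two states (a nice observation in its own right), at the price of the detours through saturation and the case $\Pi_W\ket 0=0$. The paper's route is shorter precisely because a representation of $(\vE,\vomega)$ only needs the energy bound as an upper bound, not as an equality. Your handling of purity (using saturation to make the convex decomposition of $\rho_1$ match the $\omega_1$ coordinate exactly) and of the case $M=\pm\id$ is essentially the same as the paper's, if somewhat tersely stated; in that degenerate case the paper explicitly re-realizes the behaviour with $\ket{\rho_x}=\ket 0$, $M=\pm\sigma_z$, and $O=\ket 1\bra 1$, which is what your ``re-choose $\vect m$ with $\vect n_1=\vect n_2=\vect m$'' amounts to once you also specify $\vect k$.
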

This lemma can also be used to map behaviours in our scenario to the behaviours of regular Bell tests, see Appendix~\ref{sec:ent}.
\begin{proof}[Proof of Lemma~\ref{lemma1}]
Consider an extremal behaviour $(\vE,\vect{\omega})$ in $\mathcal{Q}$ and let $\rho_1,\rho_2,M,O$ be a quantum representation for it as in Definition~1. Since $(\vE,\vect{\omega})$ is extremal, we can assume that the states $\rho_x$ are pure, i.e, $\rho_x=|\rho_x\rangle\langle\rho_x|$. (Indeed, behaviours having a mixed-state representation can be seen as convex combination of pure-state behaviours). The two pure states $|\rho_1\rangle$, $|\rho_2\rangle$ span a two-dimensional subspace $\mathcal{V}$ of the entire Hilbert space $\mathcal{H}$ and can thus be written in that subspace as in eq.~(\ref{eq:bloch_mapping_r})
for some unit vectors $\vect{n}_x$. 

The correlations $E_x$ obviously only depend on the projection of $M$ on that two-dimensional subspace $\mathcal{V}$. This projection can be written in full generality as
\begin{equation}
M =m_0\,\id + \vect{\tilde m} \cdot \vsigma\,,
\end{equation}
where $|m_0|+|\vect{\tilde m}|\leq 1$ (and where by a slight abuse of notation, we use the same symbol for $M$ and its projection on $\mathcal{V}$). The observable $M$ can be interpreted  as a convex combination of three simpler measurements since we can write
\begin{equation}\label{eq:dec_M}
M=p_1\,\id +p_2\, (-\id) + p_3\, \vect{m}\cdot \vsigma\,,
\end{equation}
where $p_1=(1+m_0-|\vect{\tilde m}|)/2$, $p_2=(1-m_0-|\vect{\tilde m}|)/2$, $p_3=|\vect{\tilde m}|$, $\vect{m}=\frac{\vect{\tilde m}}{|\vect{\tilde m}|}$, and where the condition $|m_0|+|\vect{\tilde m}|\leq 1$ guarantees that $p_1,p_2\geq 0$. Since we consider an extremal behaviour, we deduce that $M$ is either of the form $M=\pm \id$ or of the form $M=\vect{m}\cdot \sigma$. 

The case $M=\pm \id$ corresponds to a measurement that yields a deterministic outcome that does not depend on $x$ : $E_1=E_2=\pm1$. But such correlations can always be simulated by sending the vacuum state of $O$, i.e., $\ket{\rho_1}=\ket{\rho_2}=\ket{0}$, and using the measurement $M=\pm (\ket{0}\bra{0}-\ket{1}\bra{1})=\pm\vect{1}_z\cdot \sigma$. This represents a valid quantum realization since the mean energies of the states, being $0$, are lower than $\omega_x$ in accordance with the bound (\ref{eq:gen_q_setb}).  Furthermore, it is of the form of eqs.~(\ref{eq:bloch_mapping}) with $\vect{n}_x=\vect{1}_z$, $\vect{m}=\pm\vect{1}_z$, $\vect{k}=-\vect{1}_z$, and thus the statement of the Lemma is verified in that case. 

For the case $M=\vect{m}\cdot \sigma$, it only remains to show (\ref{eq:bloch_mapping_h}), since $\rho_x$ and $M$ have already been shown to be of the form (\ref{eq:bloch_mapping_r}) and (\ref{eq:bloch_mapping_m}), respectively. Let $\tilde O=\id-|\tilde 0\rangle\langle \tilde 0|$, where $|\tilde 0\rangle$ is the normalized projection of $|0\rangle$ (the vacuum state of $O$) on the two-dimensional space $\mathcal{V}$ (if the projection of $|0\rangle$ in $\mathcal{V}$ is zero, simply define $|\tilde 0\rangle$ as an arbitrary state in $\mathcal{V}$). It is easily verified that $\langle\rho_x|O|\rho_x\rangle\geq \langle\rho_x|\left(\id - |0\rangle\langle 0|\right)|\rho_x\rangle\geq \langle\rho_x|	\left(\id - |\tilde 0\rangle\langle \tilde 0|\right)|\rho_x\rangle=\langle\rho_x|\tilde O|\rho_x\rangle$. Thus, $\omega_x\geq \langle\rho_x|O|\rho_x\rangle\geq \langle\rho_x|\tilde O|\rho_x\rangle$, and we can replace the operator $O$ by $\tilde O$ while still having a proper quantum realization. The energy operator $\tilde O$ is now a rank-one projector in the two-dimensional space $\mathcal{V}$ and can thus be written as in (\ref{eq:bloch_mapping_h}).	
\end{proof}

\begin{proof}[Proof of Theorem~\ref{prop:sdp_char}]
At first, we consider an extremal behaviour $(\vE,\vect{\omega})$ and consider the Gram matrix $\Gamma$ of the unit vectors $\vect{n}_1$, $\vect{n}_2$, $\vect{m}$, $\vect{k}$, which are defined in Lemma~1. Then using the inner product between Block vectors, we find
\begin{equation}
\Gamma=\left(\begin{IEEEeqnarraybox}[][c]{c?c?c?c}
			1 & \vn_1\cdot\vn_2 & \vn_1\cdot\vm & \vn_1\cdot \vect{k}\\
			  & 1 				& \vn_2\cdot\vm & \vn_2\cdot \vect{k}\\
			  &   				& 1      		& \vm\cdot \vect{k}\\
			  &   				&       		& 1
		\end{IEEEeqnarraybox}\right)=\left(\begin{IEEEeqnarraybox}[][c]{c?c?c?c}
			1 & u & E_1 & 2\eta_1-1\\
			  & 1 				& E_2 & 2\eta_2-1\\
			  &   				& 1      		& v\\
			  &   				&       		& 1
		\end{IEEEeqnarraybox}\right)
\end{equation}
where $u,v,\eta_1\leq \omega_1, \eta_2\leq \omega_2$ are real numbers. By construction $\Gamma$ is semi-positive definite. Thus for any extremal behaviour $(\vE,\vect{\omega})$ belonging to the set $\mathcal{Q}$, there exists a matrix $\Gamma$ satisfying the SDP constraint (\ref{eq:sdp_char}). As this is a convex constraint in the variables $(\vE,\vect{\omega})$ \cite{ref:boyd}, the SDP constraint must hold also when considering convex combination of extremal behaviours, i.e., it must hold for all behaviours in $\setQ$. This establishes the necessary condition in the Theorem.

To prove sufficiency, consider a semi-definite positive matrix $\Gamma$, as in \eqref{eq:sdp_char} with unit diagonal elements. Then there must exist four 4-dimensional unit vectors $\vn_1,\vn_2,\vm, \vect{k}$ such that $E_x=\vn_x \cdot \vm$ and $2 \eta_x-1=\vn_x \cdot \vect{k}$. We can project $\vm \mapsto \vm'$ (with $|\vm'|\leq 1$) onto the space spanned by $\vn_1,\vn_2,\vect{k}$, as this leaves $E_x=\vn_x \cdot \vm$ and $2 \eta_x-1=\vn_x \cdot \vect{k}$ invariant. Since this is a three dimensional subspace, this defines four Bloch vectors that can be mapped through \eqref{eq:bloch_mapping} to a valid quantum representation (with the only difference that $|\vm'|\leq 1$)\footnote{%
	The fact that $|\vm'|\leq 1$ in the sufficiency part of the proof originates from the fact that the starting behaviour $(\vE,\vect{\omega})$ is not necessarily extremal. Given an arbitrary measurement $M'=\vm'\cdot \vsigma$, with $|\vm'|\leq 1$, we can, however, rewrite it in the form \eqref{eq:dec_M} with $p_1=p_2=(1-|\vm'|)/2$, $p_3=|\vm'|$, $\vm=\vm'/|\vm'|$, and thus interpret it, as below eq.~\eqref{eq:dec_M}, as a convex combination of three strategies with measurements given by unit Bloch vectors.}. This proves that the SDP constraint \eqref{eq:sdp_char} is tight.%
\end{proof}

An immediate consequence of the construction at the end of the proof of Theorem~\ref{prop:sdp_char} is that any behaviour $(\vE,\vect{\omega})\in\setQ$ can be realized without shared randomness:

\begin{corr}
	Any point in the set $\setQ$ can be realized with a two-dimensional representation of the form of eqs.~(\ref{eq:bloch_mapping}), where $\vect{n}_x$ and $\vect{\omega}$ are unit vectors and $|\vect{m}|\leq 1$.	 
\end{corr}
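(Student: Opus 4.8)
The plan is to extract the statement directly from the sufficiency argument in the proof of Theorem~\ref{prop:sdp_char}, essentially by chasing through that construction once more and observing that no shared randomness was ever introduced. First I would take an arbitrary behaviour $(\vE,\vect{\omega})\in\setQ$. By the (already established) necessity direction of Theorem~\ref{prop:sdp_char}, there exist real numbers $u,v,\eta_1,\eta_2$ with $\eta_x\leq\omega_x$ such that the matrix $\Gamma$ of eq.~(\ref{eq:sdp_char}) is positive semidefinite. Since $\Gamma\succeq 0$ is a $4\times 4$ Gram matrix with unit diagonal, it is the Gram matrix of four unit vectors $\vn_1,\vn_2,\vm,\vect{k}$ in $\R^4$, with $E_x=\vn_x\cdot\vm$ and $2\eta_x-1=\vn_x\cdot\vect{k}$.

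Next I would perform the projection step: replace $\vm$ by its orthogonal projection $\vm'$ onto $\sspan\{\vn_1,\vn_2,\vect{k}\}$, noting $|\vm'|\leq 1$ and that the inner products $\vn_x\cdot\vm$ and $\vn_x\cdot\vect{k}$ are unchanged, so we still have $E_x=\vn_x\cdot\vm'$ and $2\eta_x-1=\vn_x\cdot\vect{k}$. The four vectors $\vn_1,\vn_2,\vm',\vect{k}$ now all lie in a subspace of dimension at most $3$, which we may identify with $\R^3$. Feeding these Bloch vectors into eqs.~(\ref{eq:bloch_mapping}) produces states $\rho_x=\tfrac12(\id+\vn_x\cdot\vsigma)$, a measurement $M=\vm'\cdot\vsigma$ with $|\vm'|\leq 1$, and an energy operator $O=\tfrac12(\id+\vect{k}\cdot\vsigma)$; one checks $\Tr[\rho_x M]=\vn_x\cdot\vm'=E_x$ and $\Tr[\rho_x O]=\tfrac12(1+\vn_x\cdot\vect{k})=\eta_x\leq\omega_x$, so this is a valid quantum representation in the sense of Definition~\ref{def:gen_q_set}, and it is manifestly a single two-dimensional realization with no convex mixture — i.e.\ no shared randomness $\lambda$. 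Finally, since $O=\tfrac12(\id+\vect{k}\cdot\vsigma)$ with $|\vect{k}|=1$ is a rank-one projector on a two-dimensional space, it automatically satisfies properties (\ref{eq:defineH}) and (\ref{eq:defineH2}), so the representation is legitimate.

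There is no real obstacle here: the corollary is a direct by-product of the explicit construction already carried out in the sufficiency half of the proof of Theorem~\ref{prop:sdp_char}, and the only point worth emphasizing is that this construction never invokes a convex combination, so the resulting behaviour is realized by a single quantum strategy. The one mild subtlety to state carefully is that $|\vm|\leq 1$ rather than $=1$, which is already acknowledged in the theorem's proof and its footnote; if one insists on a unit measurement vector, one reintroduces the convex decomposition of eq.~(\ref{eq:dec_M}), which would defeat the point, so the corollary is stated with $|\vect{m}|\leq 1$ exactly as above.
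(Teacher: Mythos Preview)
Your proposal is correct and is exactly the approach the paper takes: the corollary is stated in the paper as an immediate consequence of the explicit construction in the sufficiency half of the proof of Theorem~\ref{prop:sdp_char}, and you have simply written that construction out in full, including the projection $\vm\mapsto\vm'$ and the observation that no convex combination is used. Your added checks that $O=\tfrac12(\id+\vect{k}\cdot\vsigma)$ is a rank-one projector and hence a valid energy operator, and your remark on why one must accept $|\vect m|\leq 1$ rather than $=1$, are welcome clarifications that the paper leaves implicit.
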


\subsection{Closed-form representations}

The SDP characterization in Theorem~\ref{prop:sdp_char} is very convenient numerically, but not so easy to picture geometrically. We now provide two alternative closed-form characterizations of the quantum set~$\setQ$. First of all, let us remark that the energies $\omega_1$, $\omega_2$ must be sufficiently small -- specifically their sum $\omega_1+\omega_2$ must be bounded by one -- to get non-trivial constraints on the set of quantum behaviours.

\begin{prop}
	\label{prop:trivial_cons}
	Let $(\vE,\vect{\omega})$ be a behaviour satisfying the trivial constraints $-1\leq E_x\leq 1$ and $\omega_x\geq 0$ and the condition $\omega_1+\omega_2\geq 1$. Then it belongs to the quantum set $\mathcal{Q}$.		
\end{prop}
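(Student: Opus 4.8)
The plan is to use the SDP characterization of Theorem~\ref{prop:sdp_char} and simply exhibit a feasible Gram matrix $\Gamma$ for any behaviour satisfying the hypotheses. Given $(\vE,\vect{\omega})$ with $-1\le E_x\le 1$, $\omega_x\ge 0$ and $\omega_1+\omega_2\ge 1$, I would first note that it suffices to treat the case $\omega_1+\omega_2=1$, since enlarging the $\omega_x$ only relaxes the constraints $\eta_x\le\omega_x$; equivalently, I may simply pick $\eta_1,\eta_2\ge 0$ with $\eta_1+\eta_2=1$ and $\eta_x\le\omega_x$ (such a choice exists precisely because $\omega_1+\omega_2\ge 1$ and each $\omega_x\ge 0$), so that $2\eta_1-1=-(2\eta_2-1)$. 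Write $t:=2\eta_1-1\in[-1,1]$, so the last column of $\Gamma$ is $(t,-t,v,1)^{\mathsf T}$.

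Next I would make the natural geometric guess suggested by Lemma~\ref{lemma1}: take the vacuum Bloch vector $\vect{k}$ and the two state vectors $\vect{n}_1,\vect{n}_2$ coplanar, with $\vect{n}_1,\vect{n}_2$ symmetric about $\vect{k}$, i.e.\ $\vn_1\cdot\vect{k}=t$, $\vn_2\cdot\vect{k}=-t$ forces them on opposite sides; then choose $\vm$ in the plane they span so as to realise $E_1=\vn_1\cdot\vm$ and $E_2=\vn_2\cdot\vm$. Concretely, since $\vn_1,\vn_2$ are two fixed unit vectors in a plane with prescribed inner products against a third unit vector $\vect{k}$, and $\vm$ ranges over the unit circle of that plane, the achievable pairs $(\vn_1\cdot\vm,\vn_2\cdot\vm)$ form an ellipse (or its interior, once we allow $|\vm|\le1$ as permitted by the Corollary / the footnote to Theorem~\ref{prop:sdp_char}); I would check that this region contains the full square $[-1,1]^2$, which handles all admissible $(E_1,E_2)$. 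The cleanest way to organize this is to not fix $\vn_1,\vn_2$ rigidly but to let $u=\vn_1\cdot\vn_2$ and $v=\vm\cdot\vect{k}$ be free parameters and just verify directly that $\Gamma\succeq0$ for a suitable choice; a convenient shortcut is to take $\vn_1=\vn_2$ (so $u=1$), giving $E_1=E_2$, and more generally interpolate — but since we need arbitrary $E_1\ne E_2$, it is cleaner to exhibit explicit 3-dimensional unit vectors and compute.

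The concrete construction I would write down: put $\vect{k}=\vect{1}_z$, $\vn_x = (\sin\theta_x\cos\phi_x,\sin\theta_x\sin\phi_x,\cos\theta_x)$ with $\cos\theta_1 = t$, $\cos\theta_2=-t$ and $\phi_1=0$, $\phi_2=\pi$ (opposite azimuthal half-planes so the $\vn_x$ lie in the $xz$-plane). Then $\vn_1,\vn_2$ span the $xz$-plane, and choosing $\vm=(m_x,0,m_z)$ with $|\vm|\le1$ we get $E_x=\vn_x\cdot\vm = m_x\sin\theta_x(\pm1)+m_z\cos\theta_x$; since $\sin\theta_1=\sin\theta_2=\sqrt{1-t^2}$ and $\cos\theta_1=-\cos\theta_2=t$, this is a linear bijection $(m_x,m_z)\mapsto(E_1,E_2)$ with Jacobian $2t\sqrt{1-t^2}$, plus one must check $m_x^2+m_z^2\le1$ over $(E_1,E_2)\in[-1,1]^2$. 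Here I would not fix $t$ but choose it to make the construction work for the given $(E_1,E_2)$ — e.g.\ write $E_1=m_z t + m_x\sqrt{1-t^2}$, $E_2=-m_z t + m_x\sqrt{1-t^2}$, so $m_z t=(E_1-E_2)/2$ and $m_x\sqrt{1-t^2}=(E_1+E_2)/2$; then I would argue that one can pick $t$ (consistent with $\eta_x\le\omega_x$) so that $m_x^2+m_z^2 = \tfrac{(E_1+E_2)^2}{4(1-t^2)}+\tfrac{(E_1-E_2)^2}{4t^2}\le1$.

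The main obstacle — and the step deserving the most care — is this last inequality: showing that the freedom in choosing $t$ within the allowed interval (determined by $\eta_1\le\omega_1$, $\eta_2\le\omega_2$, $\eta_1+\eta_2=1$, i.e.\ $t=2\eta_1-1$ ranges over an interval containing $0$ and touching one endpoint) is always enough to make $m_x^2+m_z^2\le1$ for every $(E_1,E_2)\in[-1,1]^2$. I would handle it by optimizing over $t$: the function $f(t)=\tfrac{(E_1+E_2)^2}{4(1-t^2)}+\tfrac{(E_1-E_2)^2}{4t^2}$ is minimized at $t^2 = \frac{|E_1-E_2|}{|E_1+E_2|+|E_1-E_2|} = \frac{|E_1-E_2|}{2\max(|E_1|,|E_2|)}$, with minimal value $\big(\tfrac{|E_1+E_2|+|E_1-E_2|}{2}\big)^2 = \max(E_1^2,E_2^2)\le1$; so the unconstrained optimum already lands in $[-1,1]^2$, and it remains only to check that this optimal $t$ (which satisfies $t^2\le 1/2$, hence $|2\eta_1-1|\le1/\sqrt2$, hence $\eta_1,\eta_2\in[0,1]$) is compatible with $\eta_x\le\omega_x$ — which, because we may also slide $t$ freely around $0$ and because $\omega_1+\omega_2\ge1$ gives genuine room, I expect to go through after a short case analysis; if it does not for extreme $\omega$, one falls back to the boundary value of $t$ and re-checks $f(t)\le1$ there directly.
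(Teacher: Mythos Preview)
Your approach via Theorem~\ref{prop:sdp_char} has a genuine gap, and in fact the restriction you impose at the outset breaks the argument. You commit to $\eta_1+\eta_2=1$, arguing that this is ``equivalent'' to reducing to $\omega_1+\omega_2=1$. It is not: even after reducing the energies to $\omega'_x$ with $\omega'_1+\omega'_2=1$, Theorem~\ref{prop:sdp_char} still allows any $\eta_x\le\omega'_x$, and in particular $\eta_1+\eta_2<1$. Your restriction to the slice $\eta_1+\eta_2=1$ is a genuine loss of freedom, and there are admissible behaviours for which no $\Gamma\succeq0$ exists on that slice. Take $(E_1,E_2,\omega_1,\omega_2)=(1,1,0,1)$: then $\eta_1\le0$ forces $\eta_1=0$, hence $\eta_2=1$, and the $3\times3$ principal minor on rows $2,3,4$ of
\[
\Gamma=\begin{pmatrix}1&u&1&-1\\ u&1&1&1\\ 1&1&1&v\\ -1&1&v&1\end{pmatrix}
\]
has determinant $-(v+1)^2-?$\,; more directly, rows $1,3,4$ force $v=-1$, after which the minor on rows $2,3,4$ has determinant $-4<0$. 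So no $u,v$ work. (By contrast, $\eta_1=\eta_2=0$, $u=1$, $v=-1$ gives a rank-one PSD $\Gamma$, showing Theorem~\ref{prop:sdp_char} does apply --- just not on your slice.) Your fallback ``re-check $f(t)\le1$ at the boundary value of $t$'' cannot save this: with $t=-1$ and $E_1+E_2\ne0$ the term $(E_1+E_2)^2/4(1-t^2)$ diverges. There are also two incidental slips: your choice $\phi_2=\pi$ actually makes $\vn_2=-\vn_1$ (forcing $E_2=-E_1$), inconsistent with the formulas you then write; and the claim $t^2\le 1/2$ at the optimum is false (take $E_1=1$, $E_2=-1$, where the optimal $t^2=1$).

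The paper's own proof is considerably simpler and sidesteps Theorem~\ref{prop:sdp_char} entirely: it works directly from Definition~\ref{def:gen_q_set}. Pick $\omega'_x\le\omega_x$ with $\omega'_1+\omega'_2=1$, set $O=\id-\ket{0}\bra{0}$, and take the two pure states $\ket{\rho_1}=\sqrt{1-\omega'_1}\ket{0}+\sqrt{\omega'_1}\ket{1}$ and $\ket{\rho_2}=\sqrt{1-\omega'_2}\ket{0}-\sqrt{\omega'_2}\ket{1}$. The condition $\omega'_1+\omega'_2=1$ makes them \emph{orthogonal}, so $M=E_1\ket{\rho_1}\bra{\rho_1}+E_2\ket{\rho_2}\bra{\rho_2}$ realises any $(E_1,E_2)\in[-1,1]^2$. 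The key point your Bloch-vector construction misses is that this $M$ is not traceless (in Bloch language $m_0=(E_1+E_2)/2\ne0$), which is precisely why forcing a unit-vector $\vm$ in your 3D picture cannot reach $E_1=E_2=1$ when the states are orthogonal.
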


\begin{proof}
Though this can be established from the semidefinite positivity of the matrix $\Gamma$ in \eqref{eq:sdp_char}, it also follows directly from the definition \eqref{eq:gen_q_set_b}. The necessity of the trivial constraints $-1\leq E_x\leq 1$ and $\omega_x\geq 0$ is obvious. If in addition $\omega_1+\omega_2\geq 1$, one can define $\omega'_1\leq \omega_1$, $\omega'_2\leq \omega_2$ with $\omega'_1+\omega'_2=1$. Let $O=I-|0\rangle\langle 0|$. Then the two states $|\rho_1\rangle = \sqrt{1-\omega'_1}|0\rangle +\sqrt{\omega'_1}|1\rangle$ and  
$|\rho_2\rangle = \sqrt{1-\omega'_2}|0\rangle -\sqrt{\omega'_2}|1\rangle$ have mean energies $\omega'_1$ and $\omega'_2$, in accordance with (\ref{eq:gen_q_setb}). Furthermore, they are orthogonal -- thus perfectly distinguishable -- and we can reproduce as in \eqref{eq:gen_q_set} any correlations $(E_1,E_2)\in[-1,1]^2$ using the observable $M=E_1 |\rho_1\rangle\langle \rho_1|+E_2 |\rho_2\rangle\langle \rho_2|$. 
\end{proof}

The following corollary to Theorem~\ref{prop:sdp_char} provides two closed-form expressions, \eqref{eq:trig_form} and (\ref{eq:scal_form}), for the non-trivial constraints characterizing $\setQ$ when $\omega_1+\omega_2<1$. The formula \eqref{eq:scal_form} was already proven in \cite{ref:correnergy}, but we rederive it here. Formula (\ref{eq:trig_form}) is new. It is interesting to compare it to the linear inequality characterizing the classical set in our scenario, which was defined and derived in \cite{ref:correnergy}:
\begin{IEEEeqnarray}{rL}
	\label{eq:classical_bound1}
	|E_1 - E_2| \leq 2(\omega_1 + \omega_2)\,.
\end{IEEEeqnarray}

\begin{corr}\label{th:closedform}
Let $(\vE,\vect{\omega})$ be a behaviour satisfying the trivial constraints $-1\leq E_x\leq 1$ and $\omega_x\geq 0$ and the condition $\omega_1+\omega_2\leq 1$. Then it belongs to the quantum set $\mathcal{Q}$ if and only if  
\begin{IEEEeqnarray}{rL}
	\label{eq:trig_form}
		|\asin E_1-\asin E_2| 
			&\leq 2(\asin \sqrt{\omega_1} + \asin\sqrt{\omega_2})\,,
	\end{IEEEeqnarray}
or, equivalently, if and only if 
\begin{IEEEeqnarray}{rL}
	\label{eq:scal_form}
	\tfrac{1}{2}\left( \sqrt{1+E_1}\sqrt{1+E_2} +\sqrt{1-E_1}\sqrt{1-E_2}\right)\geq \sqrt{1-\omega_1}\sqrt{1-\omega_2}-\sqrt{\omega_1}\sqrt{\omega_2}\,.
	\end{IEEEeqnarray} 
\end{corr}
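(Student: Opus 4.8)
The plan is to convert the SDP feasibility condition of Theorem~\ref{prop:sdp_char} into an elementary question about the angles between four unit vectors. A real symmetric $4\times4$ matrix with unit diagonal is positive semidefinite precisely when it is the Gram matrix of four unit vectors $\vn_1,\vn_2,\vm,\vect{k}\in\R^4$; applied to the matrix $\Gamma$ of \eqref{eq:sdp_char}, and noting that $u=\vn_1\cdot\vn_2$ and $v=\vm\cdot\vect{k}$ are unconstrained while $\eta_x=\tfrac12(1+\vn_x\cdot\vect{k})$ is automatically in $[0,1]$ for unit vectors (so $\eta_x$ can be eliminated in favour of the inequality $\vn_x\cdot\vect{k}\le2\omega_x-1$), Theorem~\ref{prop:sdp_char} becomes: $(\vE,\vomega)\in\setQ$ iff there exist unit vectors $\vn_1,\vn_2,\vm,\vect{k}\in\R^4$ with $\vn_x\cdot\vm=E_x$ and $\vn_x\cdot\vect{k}\le2\omega_x-1$ for $x\in\{1,2\}$.

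Next I would introduce the angles $\alpha_x:=\acos E_x\in[0,\pi]$, $\gamma_x:=\acos(2\omega_x-1)\in[0,\pi]$ (well-defined since the trivial constraints and $\omega_1+\omega_2\le1$ give $E_x\in[-1,1]$ and $2\omega_x-1\in[-1,1]$), and $\theta:=\acos(\vm\cdot\vect{k})\in[0,\pi]$, and eliminate the vectors. Fixing $\vm$ and $\vect{k}$, every unit $\vn_1$ with $\vn_1\cdot\vm=\cos\alpha_1$ is of the form $\cos\alpha_1\,\vm+\sin\alpha_1\,\vu_1$ with $\vu_1$ a unit vector in the $3$-dimensional space orthogonal to $\vm$; as $\vu_1$ varies, $\vu_1$ can be aligned with $\pm$ the normalized component $\vect{k}^{\perp}$ of $\vect{k}$ orthogonal to $\vm$, so $\vn_1\cdot\vect{k}$ sweeps the whole interval $[\cos(\alpha_1+\theta),\cos(\alpha_1-\theta)]$. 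Hence a $\vn_1$ meeting the extra constraint $\vn_1\cdot\vect{k}\le\cos\gamma_1$ exists iff $\cos(\alpha_1+\theta)\le\cos\gamma_1$, i.e.\ (since $\alpha_1+\theta\in[0,2\pi]$ and $\gamma_1\in[0,\pi]$) iff $\gamma_1\le\alpha_1+\theta\le2\pi-\gamma_1$; the same holds for $\vn_2$, and since $u$ and $v$ in \eqref{eq:sdp_char} are free, $\vn_1$ and $\vn_2$ may be chosen independently. Therefore $(\vE,\vomega)\in\setQ$ iff some single $\theta\in[0,\pi]$ satisfies $\gamma_x-\alpha_x\le\theta\le2\pi-\alpha_x-\gamma_x$ for $x=1,2$.

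Such a $\theta$ exists iff $\max(0,\gamma_1-\alpha_1,\gamma_2-\alpha_2)\le\min(\pi,2\pi-\alpha_1-\gamma_1,2\pi-\alpha_2-\gamma_2)$. Of the nine pairwise inequalities this unpacks into, all but two are automatic from $\alpha_x,\gamma_x\in[0,\pi]$; the remaining two, $\gamma_1-\alpha_1\le2\pi-\alpha_2-\gamma_2$ and $\gamma_2-\alpha_2\le2\pi-\alpha_1-\gamma_1$, together read $|\alpha_1-\alpha_2|+\gamma_1+\gamma_2\le2\pi$. Finally I substitute back, using $\asin E_x=\tfrac\pi2-\alpha_x$ (so $\asin E_1-\asin E_2=\alpha_2-\alpha_1$) and $\cos\gamma_x=2\omega_x-1$, whence $\cos^2(\gamma_x/2)=\omega_x$ and $\gamma_x=\pi-2\asin\sqrt{\omega_x}$; this turns the last inequality into \eqref{eq:trig_form}. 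For the equivalence with \eqref{eq:scal_form}, squaring gives $\tfrac12\big(\sqrt{1+E_1}\sqrt{1+E_2}+\sqrt{1-E_1}\sqrt{1-E_2}\big)=\cos\big(\tfrac12|\asin E_1-\asin E_2|\big)$, and writing $\sqrt{\omega_x}=\sin(\asin\sqrt{\omega_x})$ gives $\sqrt{1-\omega_1}\sqrt{1-\omega_2}-\sqrt{\omega_1}\sqrt{\omega_2}=\cos\big(\asin\sqrt{\omega_1}+\asin\sqrt{\omega_2}\big)$; since $\omega_1+\omega_2\le1$ forces $\asin\sqrt{\omega_1}+\asin\sqrt{\omega_2}\le\asin\sqrt{\omega_1}+\asin\sqrt{1-\omega_1}=\tfrac\pi2$, both arguments lie in $[0,\tfrac\pi2]$, where $\cos$ is strictly decreasing, so \eqref{eq:scal_form}$\iff$\eqref{eq:trig_form}.

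The main obstacle I anticipate is the reduction in the second paragraph: pinning down the achievable range $[\cos(\alpha_x+\theta),\cos(\alpha_x-\theta)]$ of $\vn_x\cdot\vect{k}$ exactly and checking that the requirements on $\vn_1$ and $\vn_2$ genuinely decouple (this is where the freedom of the off-diagonal entry $u$ and the availability of enough dimensions are used). After that everything is bookkeeping — the slightly tedious enumeration of which of the nine pairings is automatic, and the routine half-angle identities linking \eqref{eq:trig_form} and \eqref{eq:scal_form}.
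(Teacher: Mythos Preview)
Your argument is correct and takes a genuinely different route from the paper's. The paper invokes an external characterisation (Lemma~13 of \cite{ref:npa2008}) to translate $\Gamma\succeq0$ into a pair of inequalities in the angular variables $\theta_x=\asin E_x$ and $\epsilon_x=\asin(2\eta_x-1)$, and then argues geometrically that the resulting constraint region for $(\epsilon_1,\epsilon_2)$ is a rectangle whose intersection with the admissible region $\epsilon_x\le\mu_x:=\asin(2\omega_x-1)$ is nonempty iff $|\theta_1-\theta_2|-(\mu_1+\mu_2)\le\pi$, which is \eqref{eq:trig_form} in disguise. You instead unpack the Gram-matrix interpretation of $\Gamma\succeq0$ directly: fixing the angle $\theta$ between $\vm$ and $\vect{k}$, an explicit spherical-geometry computation shows $\vn_x\cdot\vect{k}$ sweeps exactly $[\cos(\alpha_x+\theta),\cos(\alpha_x-\theta)]$ under the constraint $\vn_x\cdot\vm=\cos\alpha_x$, reducing everything to a one-parameter interval-intersection problem in $\theta$. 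This is more self-contained --- no black-box lemma --- and the geometric picture is arguably cleaner, at the price of having to check by hand that $\vn_1$ and $\vn_2$ decouple; this does go through, since in $\R^4$ the orthogonal complement of $\vm$ is three-dimensional and the free entry $u$ means nothing ties the two choices together. The derivation of the equivalence \eqref{eq:trig_form}$\Leftrightarrow$\eqref{eq:scal_form} via half-angle identities is essentially the same in both proofs.
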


\begin{proof}
Define $\theta_x = \asin E_x$ and $\mu_x = \asin (2\omega_x -1)$. All these angles are well-defined since $-1\leq E_x\leq 1$ and $0\leq \omega_x\leq 1$, where the last condition follows from $\omega_1+\omega_2<1$.

Now assume in addition that $(\vE,\vect{\omega})\in\setQ$, i.e., that $(\vE,\vect{\omega})$ satisfy Theorem~\ref{prop:sdp_char}. Lemma 13 in \cite{ref:npa2008} gives condition for a matrix of the form (\ref{eq:sdp_char}) to be positive semidefinite. Using this Lemma and setting $\epsilon_x = \asin (2\omega_x - 1)$, we find that Theorem~\ref{prop:sdp_char} is equivalent to the statement that there exist $\epsilon_x$ such that
\begin{equation}\label{eq:epsmu}
-\pi/2\leq \epsilon_x\leq \mu_x
\end{equation}
and
\begin{IEEEeqnarray*}{rL}
	\IEEEyesnumber \label{eq:sin_cond} \IEEEyessubnumber*
	|\theta_1 - \theta_2| + |\epsilon_1 + \epsilon_2 | &\leq \pi\\
	|\theta_1 + \theta_2| + |\epsilon_1 - \epsilon_2 | &\leq \pi\,.
\end{IEEEeqnarray*}
For given $\theta_1,\theta_2$, the set $R_1$ of couples $(\epsilon_1,\epsilon_2)$ satisfying Equation~\eqref{eq:epsmu} is a rectangle depicted in Figure~\ref{fig:intersection}. A straightforward calculation shows that the set $R_2$ of couples $(\epsilon_1,\epsilon_2)$ satisfying \eqref{eq:sin_cond} is also a rectangle with four corners of the form $\pm (a,b), \pm (b,a)$ with $a \geq \pi/2$. From the Figure, it is clear that the two rectangles have a non-empty intersection, i.e., that Theorem~\ref{prop:sdp_char} is satisfied, if and only if 
\begin{IEEEeqnarray}{rL}
	\label{eq:trig_form_alt}
	|\theta_1 - \theta_2 | - (\mu_1 + \mu_2) \leq \pi.
\end{IEEEeqnarray}
Using elementary trigonometric relations, this last condition can be rewritten as \eqref{eq:trig_form}.

\begin{figure}
	\centering
	\includegraphics[scale=1]{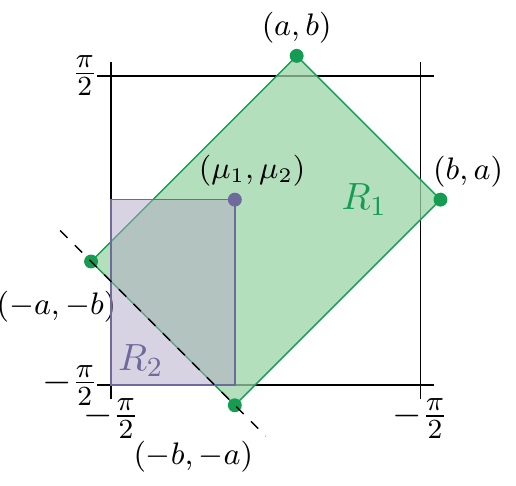}
	\caption{ 
	\label{fig:intersection} The rectangles $R_1$ and $R_2$ have a non-empty intersection if and only if the upper-right corner $(\mu_1,\mu_2)$ of $R_1$ satisfies the lower-left facet inequality $\epsilon_1+\epsilon_2\geq |\theta_1 - \theta_2 | - \pi$ of the rectangle $R_2$.
	}
\end{figure}

Using some trigonometric manipulations, we now show that the formula \eqref{eq:scal_form} is equivalent to \eqref{eq:trig_form}. Define $\alpha_x,\beta_x \in [0,\pi/2]$ such that $\sin \alpha_x = \sqrt{\frac{1+E_x}{2}}$ and $\sin\beta_x = \sqrt{\omega_x}$. Then \eqref{eq:scal_form} can be rewritten as $\cos(\alpha_1 - \alpha_2) \geq \cos(\beta_1 + \beta_2)$ which is equivalent to $|2\alpha_1-2\alpha_2| \leq 2(\beta_1 + \beta_2)$ in the range $-\pi/2\leq \alpha_1-\alpha_2\leq \pi/2$ and $0\leq \beta_1+\beta_2\leq \pi$. Writing $|2\alpha_1-2\alpha_2|=|2\alpha_1-\pi/2-(2\alpha_2-\pi/2)|$ and using the relations $\pm (2\alpha_x-\pi/2) = \pm \asin E_x$ and $\beta_x = \asin\sqrt{\omega_x}$, we find that this is equivalent to equation \eqref{eq:trig_form}, as claimed.
\end{proof}

Remark that in the proof of Corollary~\ref{th:closedform}, we actually only use the condition $\omega_1+\omega_2\leq 1$ to derive the weaker conditions $\omega_x\leq 1$. The statement of the Corollary is thus also true assuming only these weaker conditions, with eqs. (\ref{eq:trig_form}) and (\ref{eq:scal_form}) being trivially satisfied when $\omega_1+\omega_2\geq 1$.


\section{Quantifying randomness from correlations}
\label{sec:entropy}

It was shown in \cite{ref:correnergy} that there are certain  quantum behaviours $(\vE,\vect{\omega}) \in \setQ$ which exhibit genuine quantum randomness in the sense that the output $a$ of the device cannot be perfectly predicted whatever the underlying quantum representation giving rise to this behaviour. This was shown by deriving inequalities which are necessarily satisfied by any behaviours admitting a deterministic representation and then finding quantum behaviours $(\vE,\vect{\omega})$ which violate these inequalities. This result is similar in spirit to the violation of Bell inequalities by separated no-signalling devices, which witness genuine randomness independently of the devices' implementation. As a matter of fact, the violation of a Bell inequality only implies the presence of a non-zero amount of randomness. But it is also possible to obtain quantitative lower-bounds on the amount of randomness compatible with given non-local correlations \cite{ref:pam2010, ref:nps2014, ref:bss2014}. Similarly, we show in this section how to obtain quantitative bounds on the amount of randomness compatible with a given behaviour in the scenario that we consider here.

\subsection{Formulation of the problem}
We start by defining precisely what we mean by `randomness', how we measure it, and what is the problem we aim to solve.

Assume that we hold a prepare-and-measure device, as defined in the previous section, choose an input $x\in\{1,2\}$ according to a known probability distribution $p(x)$, enter $x$ in the device, and obtain the output $a\in\{\pm 1\}$. We do not make any detailed assumptions about how the devices operate internally to give rise to the output $a$ -- apart from the fact that $i)$ it should arise from a valid quantum representation  and $ii)$ be compatible with certain energy constraints defined further below.

We are interested in quantifying how random the output $a$ is from the point of view of a hypothetical adversary who, unlike us, could have a detailed physical description of the device. We represent by the symbol $\lambda$ the collection of physical parameters which determine the  behaviour of the device from the adversary's point of view. These parameters may themselves fluctuate randomly, and thus are described by a probability distribution $p(\lambda)$, that is unknown to us. It could for instance be that the randomness that we observe is entirely due to statistical fluctuations of $\lambda$ and that the output of the device is completely deterministic for an adversary happening to known the precise value of $\lambda$.

From the point of view of the adversary, the behaviour of the device is thus characterized by an ensemble $\{p(\lambda), (\vE^\lambda,\vect{\omega}^\lambda) \in \setQ\}$ of behaviours.  The correlations $\vE$ characterizing the output $a$ as seen from the external point of view of the user which has not access to the internal description of the device are then given by
\begin{equation}
	\sum_\lambda p(\lambda) \vE^\lambda=\vE\,.
\end{equation}
We make the following two assumptions regarding the distribution of energies $\vect{\omega}^\lambda$:
\begin{IEEEeqnarray}{rrL}
	\label{eq:maxav}
	&\sum_\lambda\,p(\lambda)\, \vect{\omega}^\lambda &\leq \vect{\omega}_{\text{avg}}\,,
\end{IEEEeqnarray}
and 
\begin{IEEEeqnarray}{rL}
	\label{eq:maxpeak}
	\vect{\omega}^\lambda &\leq \vect{\omega}_{\text{pk}} \fa \lambda\,.
\end{IEEEeqnarray}
The first assumption, which we call the \emph{max-average} assumption, states that there exists an upper-bound on the average energy, where the average is taken over the hidden parameters. The second one, which we call the \emph{max-peak} assumption, states that there is in addition an absolute bound on the energy satisfied for all individual values of the hidden parameters. Of course, this later assumption requires some additional trust in the devices. Both assumptions were already introduced and motivated in \cite{ref:correnergy}. The max-average constraint is particularly interesting from the point of view of semi-device-independent randomness certification since, contrarily to the max-peak assumption, it does not constraint the behaviour of the device at the individual hidden level, but only on average. In particular, it is conceivable to verify it experimentally by measuring the average energy of the states sent over the channel over a large number of rounds. We refer to \cite{ref:correnergy} for a further discussion of these two assumptions.

Note that in the following we do not necessarily need to impose both the max-average and max-peak assumptions, but possibly only one of them. The case where the upper-bound $\vect{\omega}_{\text{avg}}$ on the average energy satisfies $\vect{\omega}_{\text{avg}}\geq \vect{\omega}_{\text{pk}}$ effectively means that one is considering only the max-peak assumption, since the average energy is always bounded by the peak value: $\sum_\lambda\,p(\lambda)\, \vect{\omega}^\lambda \leq \vect{\omega}_{\text{pk}}$. The case where the upper-bound $\vect{\omega}_{\text{pk}}$ on the peak energy satisfies $\vect{\omega}_{\text{pk}}=(\omega_{\text{pk},1},\omega_{\text{pk},2})\geq (1,1)=\vect{1}$ effectively means that one is considering only the max-average assumption since, as follows from Proposition~\ref{prop:trivial_cons}, there are no constraints on the quantum correlations $\vE^\lambda$ when one increase the energy $\vect{\omega}^\lambda$ beyond the value $\vect{1}$. Without loss of generality, we thus always assume in the following that
\begin{equation}\label{eq:ass_energies}
\vect{0}\leq\vect{\omega}_{\text{avg}}\leq \vect{\omega}_{\text{pk}}\leq \vect{1}\,,
\end{equation}
where $\vect{\omega}_{\text{avg}}= \vect{\omega}_{\text{pk}}$ means that we consider only the max-peak assumption (no constraint on the average energy) and $\vect{\omega}_{\text{pk}}=\vect{1}$ means that we consider only the max-average assumption (no constraint on the peak energy).

In the following, we assume that $\vE$, $\vect{\omega}_{\text{avg}}$, and $\vect{\omega}_{\text{pk}}$ are known and given and we seek to find out how random the output $a$ is, from the point of view of the adversary. Note that in a real randomness generation protocol, as considered in the next section, we would estimate the correlations $\vE$ by probing sufficiently many times the device. However, in the present section we assume that we know beforehand this information as our aim for now is simply to understand at a fundamental level, given a certain observed behaviour of the device, how random the output $a$ is.

\subsection{Randomness measures}
Given $\lambda$ and assuming the adversary is also given the input $x$, the output $a$ arises from his point of view with probability $p(a|x,\lambda)=\tfrac{1}{2}(1+aE^\lambda_x)$. The randomness associated with this situation, averaged over the possible values of $\lambda$ and $x$, can be characterized using different quantities. One possibility is the conditional Shannon entropy \cite{cover_elements_2006} $H(A|X,\Lambda) = -{\sum}_{a,x,\lambda} p(a,x,\lambda) \log_2 p(b|x,\lambda)$. If the inputs are chosen independently of the devices so that $p(x,\lambda)=p(x)p(\lambda)$, it can be rewritten as
\begin{IEEEeqnarray}{rL}
	H(A|X,\Lambda) &= \sum_\lambda p(\lambda) H(\vE^\lambda),
\end{IEEEeqnarray}
where
\begin{equation}
H(\vE) = - \sum_{a,x} p(x) p(a|x) \log p(a|x)=- \sum_{a,x} p(x)\frac{1+aE_x}{2}  \log \frac{1+aE_x}{2}\,.
\end{equation}
Note that $H(\vE)$ depends not only on $\vE$, but also on $p(x)$ (but to simplify the notation and because we assume $p(x)$ to be fixed, we do not explicitly indicates this dependence in the notation $H(\vE)$). 

Another possibility is to use the guessing probability \cite{tomamichel_quantum_2016}
\begin{IEEEeqnarray}{rL}
	G(A|X,\Lambda)&=\sum_{\lambda}p(\lambda)G(\vE^\lambda)\,,
\end{IEEEeqnarray}
where $G(\vE^\lambda)=\sum_x p(x) \max_b p(a|x,\lambda)$. The guessing probability can be used to define the min-entropy $H_{\min}(A|X,\Lambda)=-\log_2 G(A|X,\Lambda)$, which lower-bounds the conditional entropy: $H(A|X,\Lambda)\geq H_{\min}(A|X,\Lambda)$.

To obtain the best lower-bound on the device's randomness that is valid independently of the adversary's knowledge and of the device implementation, we must actually optimise the above measures of randomness over all possible ensembles $\{p(\lambda),\vE^\lambda,\vect{\omega}^\lambda\}$ of hidden behaviours compatible with the constraints. For instance in the case of the conditional entropy, we have $H(A|X,\Lambda)\geq H^\star$, with
\begin{IEEEeqnarray*}{R'r'L}
	\label{eq:opti_entropy}
	\IEEEyesnumber
	\IEEEyessubnumber*
	H^\star=
		&\min_{\{p(\lambda),\vE^\lambda,\vect{\omega}^\lambda \}}
			& {\sum}_\lambda p(\lambda) H(\vE^\lambda) \\
		&\text{subject to } 
			& {\sum}_\lambda p(\lambda) \vE^\lambda = \vE\\
			&& {\sum}_\lambda p(\lambda) \vect{\omega}^\lambda = \vect{\omega}_{\text{avg}} \label{eq:opti_max_av_cons}\\
			&& (\vE^\lambda, \vect{\omega}^\lambda)\in \setQ_{\vect{\omega}_{\text{pk}}}\, ,
\end{IEEEeqnarray*}
where $p(\lambda)$ is a valid probability distribution and where
\begin{equation}\label{eq:qpeak}
\mathcal{Q}_{\vect{\omega}_{\text{pk}}}=\{(\vE,\vect{\omega})\in\mathcal Q\text{ with }\vect{\omega}\leq \vect{\omega}_{\text{pk}}\}\,.
\end{equation} 
is the set of behaviours satisfying the energy constraint $\vect{\omega}\leq \vect{\omega}_{\text{pk}}$. The value of $H^\star$ implicitly depends on the correlations $\vE$, the energy assumptions $\vect{\omega}_{\text{avg}},\vect{\omega}_{\text{pk}}$ and the choice of input distribution $p_X$. A similar upper-bound $G(A|X,\Lambda)\leq G^\star$ on the guessing probability  (corresponding to a lower-bound on the min-entropy) can also be obtained by solving the corresponding optimisation problem $G^\star = \max_{\{\{p(\lambda),\vE^\lambda,\vect{\omega}^\lambda \}}{\sum}_\lambda p(\lambda) G(\vE^\lambda)$.

Note that without loss of generality, we have used an equality sign in the constraint (\ref{eq:opti_max_av_cons}) instead of an inequality sign as in (\ref{eq:maxav}), because if there exists a solution with average energy strictly lower than $\vect{\omega}_{\text{avg}}$, one can always increase the energies $\vect{\omega}^\lambda$ to make it exactly equal to $\vect{\omega}_{\text{avg}}$. Note further that if $\vect{\omega}_{\text{avg}}=\vect{\omega}_{\text{pk}}$, i.e., there is no max-average assumption, then one can remove the constraint (\ref{eq:opti_max_av_cons}) in the above formulation.

We show in the following subsection how to solve numerically the above optimization problem using the characterization of the quantum set obtained in Subsection~\ref{sec:sdprep}.

\addtocounter{rem}{1}
\paragraph{Remark \arabic{rem}.} The subset of $(\vE,\vomega) \in \setQ$ for which the lower-bound is zero, i.e. $H^\star = 0$ or $-\log_2 G^\star= 0$, is the subset of behaviours that are not useful from an RNG perspective because there exists a deterministic model that satisfies the assumptions on the energy and reproduces the observed statistics. Indeed in the optimisation problem \eqref{eq:opti_entropy}, if $H^\star = 0$ then $H(A|X,\Lambda=\lambda) = 0$ for all $\lambda$, because $H(A|X,\Lambda=\lambda)\geq 0$, and when $p(x) > 0$ for all $x$, this implies that $E_x^\lambda = \pm 1$ for all $x$ and $\lambda$. Behaviours that admit such a decomposition were characterized in \cite{ref:correnergy}, where they were called classical correlations. When using only a max-average assumption, a behaviour is classical if and only if
\begin{IEEEeqnarray}{rL}
	\label{eq:classical_bound}
	|E_1 - E_2| \leq 2(\omega_{\mathrm{avg},1} + \omega_{\mathrm{avg},2})\,.
\end{IEEEeqnarray}
When using solely a max-peak assumption, a behaviour is classical if and only if $E_1 = E_2$ (in the non-trivial zone  $\omega_{\text{pk},1} + \omega_{\text{pk},2}< 1$). In the case where the input distribution is maximally biased towards a specific value $x = x_0$, we have $H^\star = 0$ if and only if the output is deterministic for that input $x_0$: $E_{x_0}^\lambda = \pm 1$ (but possibly $E_{x}^\lambda \neq \pm 1$ for $x\neq x_0$). The set of behaviours with this property was also characterized in \cite{ref:correnergy}.

For behaviours that are outside the classical sets identified in \cite{ref:correnergy}, the conditional entropy $H^\star$ or the min-entropy $-\log_2 G^\star$ thus take positive values, which can be determined by solving the corresponding optimisation problem.  


\addtocounter{rem}{1}
\paragraph{Remark \arabic{rem}.} We have here implicitly focused on a situation where the adversary has only classical-side information about the output $a$ of the device, represented by the variables $\lambda$. But one could also consider a more general situation where the adversary holds quantum-side information, so he holds a quantum system that is entangled to the device. This would not require a modification of the definition of the guessing probability $G(A|X,\Lambda)$, as shown in \cite{konig_operational_2009}, but would require considering the conditional von Neumann entropy $S(A|X,\Lambda)$ instead of the Shannon entropy $H(A|X,\Lambda)$. However, a (non-optimal) lower bound on the quantum conditional entropy can be obtained from the guessing probability, $S(A|X,\Lambda)\geq -\log_2 G(A|X,\Lambda)$, i.e., with the techniques discussed below.

\subsection[Algorithm for finding entropy bounds through semidefinite programming]{Algorithm for finding entropy bounds through\\ semidefinite programming}

\label{subsec:algo_entropy}

To solve the optimization problem \eqref{eq:opti_entropy}, we provide an algorithm that computes a converging series of lower-bonds $H^\star_{k} \leq H^\star$, for $k \in \N$, with $\lim_{k \rightarrow \infty} H^\star_k = H^\star$, using semidefinite programming. This algorithm relies on three essential elements developed below: (a) the dual formulation of \eqref{eq:opti_entropy}, (b) a linearization of the entropy function $H(\vE)$, and (c) a way to optimize linear constraints over $\setQ$ as a SDP.

Note that the optimisation problem \eqref{eq:opti_entropy} and the alternative version with the guessing probability are part of a more general class of optimisation problems, closely related to convex-roof extensions in entanglement theory. We present some of their properties in Appendix~\ref{sec:apprendix}, in particular their dual formulation.

\paragraph{Dual formulation.} Consider the following dual formulation \cite{ref:boyd} of \eqref{eq:opti_entropy}
\begin{IEEEeqnarray}{R'r'L}
	\label{eq:opti_dual}
		\IEEEyesnumber
		\IEEEyessubnumber*
		H^\star_{dual} =& 
			\sup_{\{\alpha, \vect \beta, \vect \gamma \}}
				& \alpha + \vect \beta \cdot \vect E + \vect \gamma\cdot \vomega_{\avg}\label{eq:opti_dual_obj}\\
			&\st 
				& \alpha + \vect \beta \cdot \vect E + \vect \gamma\cdot \vomega \leq H(\vE) \fa (\vE,\vomega) \in  \setQ_{\vomega_{\pk}} \label{eq:opti_dual_cons}
\end{IEEEeqnarray}
It is easy to see that feasible points of the problem \eqref{eq:opti_dual} provide lower-bounds on $H^\star$. Moreover, it is shown in Appendix~\ref{sec:apprendix} that strong duality holds for problems of this form: $H^\star_{dual} = H^\star$ (and furthermore that the  supremum is actually a maximum when $(\vE,\vomega_{avg}) \in \interior(\setQ)$ is not on the border of the quantum set). Due to the structure of the problem, we also have the following simplification that will be useful later.
\begin{lemma}
	In the dual problem \eqref{eq:opti_dual}, we can restrict to $\vect \gamma \preceq 0$ in general and we can put $\vect \gamma = \vect 0$ when we do not use a max-average assumption (i.e.\ $\vomega_\avg = \vomega_\pk$).
\end{lemma}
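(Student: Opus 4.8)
The plan is to exploit a monotonicity property of the quantum set. It is immediate from Definition~\ref{def:gen_q_set} that if $(\vE,\vomega)\in\setQ$ and $\vomega'\geq\vomega$ componentwise, then $(\vE,\vomega')\in\setQ$: the only way $\vomega$ enters a quantum representation is through the inequalities $\Tr[\rho_x O]\leq\omega_x$, which only become easier to satisfy when $\vomega$ grows. Consequently $\setQ_{\vomega_\pk}$ is ``upward closed inside the box $\{\vomega\leq\vomega_\pk\}$'': if $(\vE,\vomega)\in\setQ_{\vomega_\pk}$ then also $(\vE,(\omega_{\pk,1},\omega_2))$, $(\vE,(\omega_1,\omega_{\pk,2}))$ and $(\vE,\vomega_\pk)$ belong to $\setQ_{\vomega_\pk}$. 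I expect this single observation to carry the argument, the rest being bookkeeping on the linear objective.

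For the sign restriction, let $(\alpha,\vect\beta,\vect\gamma)$ be feasible for \eqref{eq:opti_dual} and suppose $\gamma_1>0$. I would replace it by $\alpha'=\alpha+\gamma_1\,\omega_{\pk,1}$, $\gamma_1'=0$, $\gamma_2'=\gamma_2$ (keeping $\vect\beta$). Feasibility: given any $(\vE,\vomega)\in\setQ_{\vomega_\pk}$, the bumped point $(\vE,(\omega_{\pk,1},\omega_2))$ lies in $\setQ_{\vomega_\pk}$ by the monotonicity above, and the original constraint \eqref{eq:opti_dual_cons} applied to it is precisely $\alpha'+\vect\beta\cdot\vE+\vect\gamma'\cdot\vomega\leq H(\vE)$. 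The objective changes by $\alpha'-\alpha+(\gamma_1'-\gamma_1)\,\omega_{\avg,1}=\gamma_1\,(\omega_{\pk,1}-\omega_{\avg,1})\geq 0$, using $\omega_{\avg,1}\leq\omega_{\pk,1}$ from \eqref{eq:ass_energies}. So a positive component of $\vect\gamma$ can always be zeroed out without loss; applying this to both components yields a feasible point with $\vect\gamma\preceq 0$ and at least the same objective.

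For the second claim, assume $\vomega_\avg=\vomega_\pk$ and, by the previous paragraph, start from a feasible $(\alpha,\vect\beta,\vect\gamma)$ with $\vect\gamma\preceq 0$. Set $\vect\gamma\to 0$ and $\alpha\to\alpha+\vect\gamma\cdot\vomega_\pk$. With a vanishing $\vect\gamma$ the constraint \eqref{eq:opti_dual_cons} no longer depends on $\vomega$, so it only has to hold for every $\vE$ that appears in some behaviour of $\setQ_{\vomega_\pk}$ --- and by monotonicity these are exactly the $\vE$ with $(\vE,\vomega_\pk)\in\setQ$. For such $\vE$ the original constraint evaluated at $\vomega=\vomega_\pk$ gives $(\alpha+\vect\gamma\cdot\vomega_\pk)+\vect\beta\cdot\vE\leq H(\vE)$, exactly what is required; and because $\vomega_\avg=\vomega_\pk$ the objective $\alpha+\vect\gamma\cdot\vomega_\pk+\vect\beta\cdot\vE$ of the new point equals that of the old one. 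The only step requiring any care is that the energy-bumped behaviours really stay inside $\setQ_{\vomega_\pk}$, which is exactly the monotonicity remark and is immediate from Definition~\ref{def:gen_q_set}, so I do not anticipate a genuine obstacle. (Alternatively one could argue on the primal side: when $\vomega_\avg=\vomega_\pk$, the constraint \eqref{eq:opti_max_av_cons} together with $\vomega^\lambda\leq\vomega_\pk$ forces $\vomega^\lambda=\vomega_\pk$ on the support of $p$, making \eqref{eq:opti_max_av_cons} vacuous and its multiplier $\vect\gamma$ removable; but the dual argument above also delivers the sign constraint, so I would present it that way.)
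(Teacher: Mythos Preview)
Your proof is correct and follows essentially the same route as the paper's: both use the monotonicity of $\setQ$ in $\vomega$ to bump an arbitrary $(\vE,\vomega)\in\setQ_{\vomega_\pk}$ up to $(\vE,(\omega_{\pk,1},\omega_2))$ or $(\vE,\vomega_\pk)$, absorb the corresponding $\gamma$-term into $\alpha$, and check that the objective does not decrease using $\vomega_\avg\leq\vomega_\pk$. The paper also notes, as you do parenthetically, that the second claim is the dual counterpart of dropping constraint \eqref{eq:opti_max_av_cons} in the primal.
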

\begin{proof}
	The first statement can be seen as follows. Assume that we have an optimal solution of \eqref{eq:opti_dual} with $ \gamma_1 > 0$ (a similar argument holds for $\gamma_2> 0)$. Consider some arbitrary $(\vE,\vomega) \in \setQ_{\omega_{\pk}}$. We then have $\alpha + \vect \beta\cdot \vect E + \gamma_1\omega_{1} + \gamma_2 \omega_2 \leq \alpha + \vect \beta\cdot \vect E + \gamma_1\omega_{\pk,1} + \gamma_2 \omega_2$ since $\gamma_1>0$ and $\omega_1\leq \omega_{\pk}$. Furthermore, it follows from the definition~\ref{def:gen_q_set} of the quantum set and the definition~(\ref{eq:qpeak}) of $\setQ_{\omega_{\pk}}$ that the behaviour $(\vE,\omega_{\pk,1},\omega_2)$ belongs to $\setQ_{\vect{\omega}_{\pk}}$ and thus that $\alpha + \vect \beta\cdot \vect E + \gamma_1\omega_{\pk,1} + \gamma_2 \omega_2\leq H(\vE)$. If we define $\tilde \alpha = \alpha + \gamma_1 \omega_{\pk,1}$ and $\tilde{\vect \gamma} = (0,\gamma_2)$, we thus have shown that $\tilde \alpha + \vect \beta\cdot \vect E +\tilde{\vect \gamma}\cdot \vomega \leq H(E)$ for all $(\vE,\vomega) \in \setQ_{\omega_{\pk}}$, i.e., we have defined a new feasible solution $(\tilde\alpha,\vect{\beta},\tilde{\vect{\gamma}}	)$ satisfying the dual constraint \eqref{eq:opti_dual_cons} and such that $\tilde\gamma_1=0$. Furthermore it achieves a higher value of the objective function because 
	$\tilde \alpha + \vect \beta \cdot \vE + \tilde{\vect \gamma} \cdot \vomega_\avg 
	= (\alpha + \gamma_1 \omega_{\pk,1}) + \vect \beta\cdot \vE + \gamma_2 \omega_{\avg,2} 
	\geq \alpha + \vect \beta \cdot \vE + \vect \gamma \cdot \vomega_\avg $ since we have assumed that $\vomega_\avg \preceq \vomega_\pk$.
	
	For the second statement, assume $\vomega_\pk = \vomega_\avg$ and an optimal solution with $\vect \gamma \leq 0$. Let's define a new solution $\tilde\alpha = \alpha + \vect \gamma \cdot \vomega_\pk$ and $\tilde{\vect \gamma} = \vect 0$. This leaves the objective function \eqref{eq:opti_dual_obj} unchanged, while also satisfying the constraints \eqref{eq:opti_dual_cons} because for all $(\vE,\vomega)\in \setQ_{\vomega_\pk}$, $\tilde \alpha + \vect \beta \cdot \vE + \tilde{\vect \gamma}\cdot \vect \omega = \alpha + \vect \beta \cdot \vE + \vect \gamma \cdot \vomega_\pk \leq H(\vE) $, where the last inequality follows, as above, from the fact that if $(\vE,\vomega)\in \setQ_{\vomega_\pk}$, then $(\vE,\vomega_\pk)\in \setQ_{\vomega_\pk}$. The fact that we can set $\vect \gamma = \vect 0$ when there is no max-average assumption is the dual version of the fact that one can  remove the constraint (\ref{eq:opti_max_av_cons}) in the primal problem (\ref{eq:opti_entropy}). 

\end{proof}

\paragraph{Approximation scheme.} In the dual formulation of the optimisation problem, the constraint \eqref{eq:opti_dual_cons} is difficult to evaluate, even if we have an efficient representation of the set $\mathcal{Q_{\vomega_{\pk}}}$, because the function $H(\vE)$ is non-linear in $\vE$. The central idea behind our algorithm is the observation that the entropy function $H(\vE)$ is concave and that it can therefore be lower-bounded by the pointwise minimum of a finite family of linear functions. 

Specifically, the entropy function is of the form $H(\vE) = \sum_x p(x)h_{bin}(E_x)$, where we used the binary entropy function $h_{bin}(E) = - \sum_{a = \pm 1}\frac{1+aE}{2} \log \frac{1 + aE}{2}$, which is depicted in Figure~\ref{fig:entr_approx}. By dividing the interval $[-1,1]$ in $k$ segments of equal length, computing the value of $h_{bin}(E)$ at the ends of the segments and connecting the dots as in Figure~\ref{fig:entr_approx}, one can find parameters $(c_i,d_i)$ for $1\leq i \leq k$ such that 
\begin{IEEEeqnarray}{rL}
	\label{eq:entr_approx}
	h_{bin}(E)\geq \min_{i} \{ c_i E + d_i \}.
\end{IEEEeqnarray}
\begin{figure}
	\centering
	\includegraphics[scale=1]{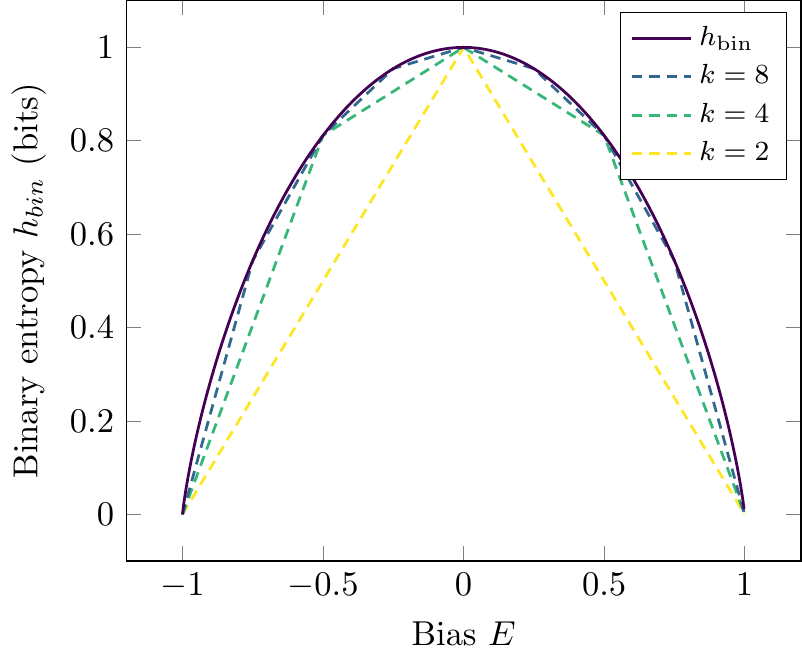}
	\caption{Comparison between the binary entropy and piece-wise linear approximations, for a number of segments equal to $k=2,4,8$.  \label{fig:entr_approx} }
\end{figure}%
This then yields the following lower-bound on $H(\vE)$ 
\begin{IEEEeqnarray}{rL}
	\label{eq:lb_entr}
	H(\vE) = \sum_{x=1}^2 p(x) h_{bin}(E_x) &\geq \min_{(i_1,i_2)}\left\lbrace \sum_{x=1}^2 p(x) (c_{i_x} E_x + d_{i_x})\right\rbrace\\
	 	&= \min_{(i_1,i_2)}\{r_{(i_1,i_2)}+\vect r_{(i_1,i_2)}\cdot \vect E\}\, ,\\
	 	&\equiv H_k(\vE)\label{eq:piece}\,,
\end{IEEEeqnarray}
where $(i_1,i_2) \in \{1,\cdots, k\}^2$, and where we have defined $r_{(i_1,i_2)}=\sum_{x=1}^2 p(x) b_{i_x}$ and $\vect r_{(i_1,i_2)}=(p(1) a_{i_1},p(2) a_{i_2})$. The piecewise linear approximations $H_k(\vE)$  uniformly converge to $H(\vE)$ in the limit $k\rightarrow \infty$.

We can then replace the constraint \eqref{eq:opti_dual_cons} in the dual problem, with the stronger set of constraints
\begin{IEEEeqnarray}{rL}
	\label{eq:opti_dual_cons_new}
	\alpha + \vect \beta \cdot \vE + \vect \gamma \cdot \vomega \leq r_{(i_1,i_2)} + \vect r_{(i_1,i_2)} \cdot \vE\,, \quad \text{ for all }&(\vE,\vomega) \in \setQ_{\vomega_\pk} \IEEEnonumber\\
	\text{ and } &(i_1,i_2) \in \{1, \cdots, k\}^2\,.
\end{IEEEeqnarray}
This set of constraints is stronger because it implies \eqref{eq:opti_dual_cons}. 
 Since they become equivalent to \eqref{eq:opti_dual_cons} in the limit $k\rightarrow \infty$, the value of $H^\star$ can be found up to arbitrary precision. Note also that a similar method could be used in scenarios with more that two outputs $|X|\geq 2$, but the approximation scheme would be more involved.

\paragraph{Semidefinite constraints.} Finally, we need a way to enforce the new set of constraints \eqref{eq:opti_dual_cons_new}, which are of the form 
\begin{IEEEeqnarray}{rL}
	\label{eq:gen_dual_cons}
	\alpha' + \vect \beta' \cdot \vE + \vect \gamma \cdot \vomega \leq 0 \fa (\vE,\vomega) \in \setQ_{\vomega_\pk}
\end{IEEEeqnarray}
with $\alpha' = \alpha - r_{(i_1,i_2)}$ and $\vect \beta' = \vect \beta - \vect r_{(i_1,i_2)}$. Such constraints can be recast in a semidefinite form, using the semidefinite representation of the quantum set $\setQ$ derived in Proposition~\ref{prop:sdp_char}, as shown explicitly in the following proposition.
\begin{prop}
	\label{prop:dual_sdp}
	Let $\alpha' \in \R$ and $\vect{\beta}',\vect{\gamma} \in \R^2$ be given, with $\vect{\gamma}\leq 0$ without loss of generality, following the remark below the dual formulation	. Then the constraint \eqref{eq:gen_dual_cons} is equivalent to the existence of $\vect{\gamma}'\in \mathbb{R}^2$ and  $\vect{\delta} \in \R^4$ such that
	\begin{IEEEeqnarray}{c?c?c}
		\label{eq:sdp_ma}
		A(\alpha',\vect\beta') + C(\vect\gamma+\vect\gamma')+E(\vect{\delta}) \preceq 0,
		& \sum_{i=1}^4 \delta_i + \vect{\gamma}'\cdot \vect{\omega}_{\pk}\geq 0,
		& \vect \gamma '\preceq 0\,,
	\end{IEEEeqnarray}
	where $A(\alpha',\vect\beta')$, $C(\vect\gamma)$, and $E(\vect{\delta})$ are $\R^{4\times 4}$ matrices depending linearly on their arguments as follows	
	\begin{IEEEeqnarray}{c?c}
		A(\alpha',\vect \beta')=\frac{1}{4}\left(
			\begin{IEEEeqnarraybox}[][c]{c,c,c,c}
				\alpha' & 0 & 2\beta'_1 &0\\
				0 & \alpha' & 2\beta'_2 &0\\
				2\beta'_1 & 2\beta'_2 &\alpha' & 0\\
				0 & 0 & 0 &\alpha' 
			\end{IEEEeqnarraybox}
		\right) &
		C(\vect\gamma)=\frac{1}{8}\left(
			\begin{IEEEeqnarraybox}[][c]{c,c,c,c}
				\gamma_1+\gamma_2& 0 & 0 &2\gamma_1\\
				0 & \gamma_1+\gamma_2 & 0 &2\gamma_2\\
				0 & 0 &\gamma_1+\gamma_2 & 0\\
				2\gamma_1 & 2\gamma_2 & 0 &\gamma_1+\gamma_2
			\end{IEEEeqnarraybox}
		\right)\,,
	\end{IEEEeqnarray}		
	and with $E(\vect{\delta})$ the matrix that has $(E)_{ii} =\delta_i$ as unique non-zero entries.
\end{prop}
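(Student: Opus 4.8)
The plan is to read \eqref{eq:gen_dual_cons} as the statement that the affine functional $L(\vE,\vomega):=\alpha'+\vect{\beta}'\cdot\vE+\vect{\gamma}\cdot\vomega$ is nonpositive on $\setQ_{\vomega_\pk}$, and to certify this through the semidefinite description of $\setQ$ furnished by Theorem~\ref{prop:sdp_char}. The whole argument rests on three elementary trace identities: writing $\langle X,Y\rangle:=\Tr[XY]$ and letting $\Gamma$ be any symmetric matrix of the form \eqref{eq:sdp_char} with unit diagonal (so that $\Gamma_{13}=E_1$, $\Gamma_{23}=E_2$, $\Gamma_{14}=2\eta_1-1$, $\Gamma_{24}=2\eta_2-1$), multiplying out the block structures of $A,C,E$ gives $\langle A(\alpha',\vect{\beta}'),\Gamma\rangle=\alpha'+\vect{\beta}'\cdot\vE$, $\langle C(\vect{c}),\Gamma\rangle=c_1\eta_1+c_2\eta_2$ for every $\vect{c}\in\R^2$, and $\langle E(\vect{\delta}),\Gamma\rangle=\sum_{i=1}^4\delta_i$. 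Consequently $\langle A(\alpha',\vect{\beta}')+C(\vect{\gamma}+\vect{\gamma}')+E(\vect{\delta}),\Gamma\rangle=\alpha'+\vect{\beta}'\cdot\vE+(\vect{\gamma}+\vect{\gamma}')\cdot(\eta_1,\eta_2)+\sum_i\delta_i$, which is the handle everything else uses.

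For the ``if'' direction I would argue directly. Suppose $\vect{\gamma}'\preceq 0$ and $\vect{\delta}$ satisfy \eqref{eq:sdp_ma}, and take any $(\vE,\vomega)\in\setQ_{\vomega_\pk}$; by Theorem~\ref{prop:sdp_char} there is a witness $\Gamma\succeq 0$ of the form \eqref{eq:sdp_char} together with $\eta_1,\eta_2$ obeying $\eta_x\le\omega_x\le\omega_{\pk,x}$. Pairing the matrix inequality in \eqref{eq:sdp_ma} against $\Gamma\succeq 0$ and using the identities yields $\alpha'+\vect{\beta}'\cdot\vE+(\vect{\gamma}+\vect{\gamma}')\cdot(\eta_1,\eta_2)+\sum_i\delta_i\le 0$. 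Then I invoke the sign hypotheses twice: since $\vect{\gamma}\preceq 0$ and $(\eta_1,\eta_2)\le\vomega$ we have $\vect{\gamma}\cdot\vomega\le\vect{\gamma}\cdot(\eta_1,\eta_2)$; and since $\vect{\gamma}'\preceq 0$ and $(\eta_1,\eta_2)\le\vomega_\pk$ we have $\vect{\gamma}'\cdot(\eta_1,\eta_2)+\sum_i\delta_i\ge\vect{\gamma}'\cdot\vomega_\pk+\sum_i\delta_i\ge 0$ by the scalar condition in \eqref{eq:sdp_ma}. Combining the three inequalities gives $L(\vE,\vomega)\le\alpha'+\vect{\beta}'\cdot\vE+\vect{\gamma}\cdot(\eta_1,\eta_2)\le-\bigl(\vect{\gamma}'\cdot(\eta_1,\eta_2)+\sum_i\delta_i\bigr)\le 0$, which is exactly \eqref{eq:gen_dual_cons}.

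For the ``only if'' direction I would pass through semidefinite duality. Because $\vect{\gamma}\preceq 0$, lowering $\omega_x$ down to $\eta_x$ never decreases $L$, so \eqref{eq:gen_dual_cons} is equivalent to $p^\star\le 0$, where $p^\star$ is the supremum of $\langle A(\alpha',\vect{\beta}')+C(\vect{\gamma}),\Gamma\rangle$ over symmetric $\Gamma$ with $\Gamma\succeq 0$, $\Gamma_{ii}=1$, $\Gamma_{14}\le 2\omega_{\pk,1}-1$ and $\Gamma_{24}\le 2\omega_{\pk,2}-1$; this is a genuine SDP with compact feasible set. I would then write its Lagrangian dual, using a positive semidefinite multiplier for $\Gamma\succeq 0$, free multipliers $\delta_i$ for the four diagonal equalities, and nonnegative multipliers $\mu_1,\mu_2$ for the two scalar inequalities. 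After setting $\vect{\gamma}':=-(\mu_1,\mu_2)\preceq 0$ and absorbing a scalar multiple of the identity (produced by the $\tfrac14$ and $\tfrac18$ normalizations of $A$ and $C$) into the $\delta_i$, the dual reduces to minimizing $-(\sum_i\delta_i+\vect{\gamma}'\cdot\vomega_\pk)$ subject to $A(\alpha',\vect{\beta}')+C(\vect{\gamma}+\vect{\gamma}')+E(\vect{\delta})\preceq 0$ and $\vect{\gamma}'\preceq 0$. The dual is strictly feasible --- take $\vect{\gamma}'=(-1,-1)$ and $\vect{\delta}$ sufficiently negative to make the matrix negative definite --- so by standard SDP duality $p^\star$ equals the dual optimum; hence $p^\star\le 0$ forces the existence of a feasible dual pair $(\vect{\gamma}',\vect{\delta})$ with $-(\sum_i\delta_i+\vect{\gamma}'\cdot\vomega_\pk)\le 0$, i.e.\ satisfying \eqref{eq:sdp_ma}.

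I expect the main obstacle to be the bookkeeping of this dualization: checking that the PSD multiplier, the diagonal multipliers and the reorganized inequality multipliers $\mu_x$ reassemble --- after absorbing the identity shift --- into precisely the matrices $A(\alpha',\vect{\beta}')$, $C(\vect{\gamma}+\vect{\gamma}')$, $E(\vect{\delta})$ with the stated factors, and that the dual objective collapses exactly to $-(\sum_i\delta_i+\vect{\gamma}'\cdot\vomega_\pk)$. A minor secondary point is justifying strong duality (and, away from degenerate choices of $\vomega_\pk$, attainment of the dual optimum), which is covered by the strict feasibility of the dual noted above together with compactness of the primal feasible set.
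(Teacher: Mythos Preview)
Your proposal is correct and follows essentially the same route as the paper: reduce \eqref{eq:gen_dual_cons} (using $\vect{\gamma}\preceq 0$) to nonpositivity of $\Tr[(A(\alpha',\vect\beta')+C(\vect\gamma))\Gamma]$ over the SDP-representable set $\{\Gamma\succeq 0,\ \Gamma_{ii}=1,\ \langle C(e_x),\Gamma\rangle\le\omega_{\pk,x}\}$, then dualize. The paper's own proof is simply the terse version of yours --- it writes the objective as a trace, states the primal constraints, and says ``taking the dual formulation of this SDP'' to read off \eqref{eq:sdp_ma} --- without spelling out the trace identities, the direct ``if'' argument, or the strong-duality justification that you supply.

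One small remark: the ``identity shift'' you worry about disappears if you phrase the peak constraints as $\langle C(e_x),\Gamma\rangle\le\omega_{\pk,x}$ (which is what the paper does) rather than as $\Gamma_{x4}\le 2\omega_{\pk,x}-1$; then the dual multipliers $\mu_x$ enter directly as $-\gamma'_x$ multiplying $C(e_x)$, and the dual objective is literally $-(\sum_i\delta_i+\vect{\gamma}'\cdot\vomega_\pk)$ with no reorganization needed. Your anticipated bookkeeping obstacle is thus milder than you expect.
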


\begin{proof}
	By Theorem~\ref{prop:sdp_char}, if $(\vE,\vect \omega)\in \setQ_{\vomega_\pk}$, then there exists a $\vomega'$ with the following properties: $i)$ $\vomega'\preceq \vomega$, $ii)$ $\Gamma(\vE,\vect \omega')\geq 0$ where $\Gamma(\vE,\vect \omega)$ is a matrix of the form \eqref{eq:sdp_char}, and $iii)$ $(\vE,\vect \omega')\in \setQ_{\vomega_\pk}$. Using $i)$, $iii)$, and the fact that $\vect{\gamma}\preceq 0$, we have that $\alpha' + \vect \beta' \cdot \vE + \vect \gamma \cdot \vomega \leq  \alpha' + \vect \beta' \cdot \vE + \vect \gamma \cdot \vomega'\leq 0$. Thus checking that the linear constraint $\vect \beta' \cdot \vE + \vect \gamma \cdot \vomega'\leq 0$ holds on the set $\{(\vE,\vect \omega) \text{ s.t. } \Gamma(\vE,\vect \omega)\geq 0 \text{ and } \vect{\omega}\leq \vect{\omega}_\text{pk}\}$ is a sufficient condition for (\ref{eq:gen_dual_cons}). It is also necessary because this set belongs to $\setQ_{\vomega_\pk}$.
	
Expressing $\alpha' + \vect \beta' \cdot \vect E + \vect \gamma \cdot \vect \omega$ as $\Tr[(A(\alpha',\vect\beta')+C(\vect\gamma)) \Gamma(\vE,\vect \omega)]$, the constraint \eqref{eq:gen_dual_cons} is thus equivalent to showing that $\max_\Gamma\Tr[(A(\alpha,\vect\beta)+C(\vect\gamma)) \Gamma]\leq 0$ subject to the constraints $\Gamma \succeq 0$,  $\Tr[E(\delta_i) \Gamma] = 1$ for $1\leq i \leq 4$ and $\Tr[C(\delta_x) \Gamma] \leq \omega_{\pk,x}$ for $1 \leq x \leq 2$. Taking the dual formulation of this SDP, we find that this holds if and only if there exists $\vect \delta \in \R^4$ and $\vect \gamma' \in \R^2$ such that $\vect \gamma' \preceq  0$ and $\sum_i \delta_i +\vect \gamma \cdot \vect \omega_{\mathrm pk} \geq 0$. This establishes \eqref{eq:sdp_ma}.
\end{proof}

\paragraph{Algorithm.} Putting everything together, we have an algorithm that computes a lower-bound $H_k^\star \leq H^{\star}$ on the worst case Shannon entropy, using semidefinite programming. Remember one is given $\vE$, $\vomega_\pk$, $\vomega_\avg$ and $p(x)$. First, fix $k\in \N$ and determine the $k^2$ coefficients $r_{(i_1,i_2)} ,\vect r_{(i_1,i_2)}$ satisfying \eqref{eq:piece} with the method described above. Then use semidefinite programming to find the optimal value of $\alpha + \vect \beta \cdot \vE + \vect \gamma \cdot \vomega_{\avg}$ (depending on the  variables $(\alpha,\vect \beta,\vect \gamma)$), while imposing $k^2$ constraints of the form \eqref{eq:sdp_ma} with $\alpha' = \alpha - r_{(i_1,i_2)}$ and $\vect \beta' = \vect \beta - \vect r_{(i_1,i_2)}$. By Proposition~\ref{prop:dual_sdp}, these constraints are equivalent to \eqref{eq:opti_dual_cons_new}, which is a stronger constraint than the initial dual constraints \eqref{eq:opti_dual_cons} $\alpha + \vect \beta \cdot \vE + \vect \gamma \cdot \vomega$ for all $(\vE,\vomega)\in \setQ_{\omega_{\text{pk}}}$. This implies that $H_k^\star \leq H^\star$.

\addtocounter{rem}{1}
\paragraph{Remark \arabic{rem}.} Though we focused on the Shannon entropy, the above algorithm can also be straightforwardly adapted to bound the min-entropy
$H_{\min}^\star$, or equivalently, the guessing probability $G^\star$. Actually, in this case the optimal guessing probability can be solved using a single SDP. Indeed 
\begin{equation}\label{eq:pw_guess}
G(\vE)=\sum_{x=1}^2 p(x) \max_a \frac{1+aE_x}{2}=\max_{a_1,a_2}\sum_{x=1}^2 p(x) \frac{1+a_xE_x}{2}
\end{equation}
is the exact pointwise maximum
of the four linear functions $\sum_{x=1}^2 p(x) \frac{1+a_xE_x}{2}$ indexed by the four values $(a_1,a_2)\in\{-1,1\}^2$. Thus the dual constraint $\alpha+\vect{\beta}\cdot\vE+\vect{\gamma}\cdot\vect{\omega}\geq G(\vE)$, the analogue of the dual constraint (\ref{eq:opti_dual_cons}), can be exactly expressed as a SDP constraint without involving an approximation scheme (the resulting SDP is then similar to the one introduced in \cite{ref:bss2014,ref:nps2014} in the context of standard Bell scenarios).

\addtocounter{rem}{1}
\paragraph{Remark \arabic{rem}.} The sequence of SDPs for bounding the Shannon entropy has nice convergence properties. Clearly, it convergences to the optimal value: $\lim_{k \rightarrow \infty} H_k^\star =  H^\star$. Furthermore, one gets a strictly increasing sequence when using powers $k = 2^l$, because $H_{2^l}(\vE) \leq H_{2^{l+1}}(\vE)$. However, even for a finite $k\geq 2$, the value $H_k^\star$ has several nice properties. First, the lower-bound $H_k^\star \leq H^\star$, is sufficient to certify the presence of a finite amount of randomness given by $H_k^\star$. Second, whenever the correlations $\vE$ are non-classical, i.e.\ $H^\star >0$, we also have $H_k^\star>0$. This is because the function $H_k(\vE)$ , defined in \eqref{eq:piece}, also has the property that $H_k(\vE) = 0$ if and only if $E_x = \pm 1$, for all $x$ with $p(x)>0$ (see the Remark 1 in Subsection 3.2). Third, the first non-trivial case, corresponding to $k = 2$, already gives a better lower-bound $H^\star\geq H_2^\star$ on $H^\star$, than the one $H^\star\geq H_{\min}^\star=-\log_2 G^\star$ that can be obtained by solving the SDP corresponding to the min-entropy. Indeed, one can see that $H_{\min}^\star\leq H_2^\star$ as follows. First since any feasible solution of the SDP corresponding to $H_{\min}^\star$ is also a feasible solution of the SDP corresponding to $H_2^\star$, one only needs to show that the objective function of the first SDP is always smaller than the objective function of the second SDP. The min-entropy SDP has objective function $H_{\min}=-\log G=-\log_2 \sum_{x,\lambda}p(x)p(\lambda)G(E_x^\lambda)$,  where $G(E_x^\lambda)=\max_a\frac{1+aE_x^\lambda}{2}$. Using  the concavity of the log function, this can be upper-bounded as $-\log G=-\log_2 \sum_{x,\lambda}p(x)p(\lambda)G(E_x^\lambda)\leq -\sum_{x,\lambda}p(x)p(\lambda)\log_2 G(E_x^\lambda)$. Observe that on the interval $E_x\in[-1,1]$, $-\log_2\max_a\frac{1+aE_x}{2}\leq 1-|E_x|$, thus one can further upper-bound the objective function as $-\log G\leq \sum_{x,\lambda}p(x)p(\lambda)(1-|E_x^\lambda|)$. But this last expression is simply the objective function of $H_2^\star$.

\addtocounter{rem}{1}
\paragraph{Remark \arabic{rem}.} Finally, note that instead of fixing the two values $\vE=(E_1,E_2)$, one can also merely fix a linear function $E=c_1 E_1+c_2 E_2$ of them in the above SDPs, and similarly, one can fix a linear function of the two averages energies $\vomega_\avg=(\omega_{\avg,1},\omega_{\avg,2})$. We use this feature in the numerical examples below.

\subsection[Computation of the entropy for several concrete examples]{Computation of the entropy for several concrete\\ examples}
\label{sec:entropy_examples}
We now  illustrate our algorithm by computing the conditional entropy $H^\star$ on several examples. 

Fist, we apply our method to a case where we have a max-average constraint of the form $\vomega_\avg = (\omega,\omega)$, with $\omega  = 0.3$ but no max-peak constraint. We compute the entropy as a function of the violation $E_{-} = \frac{1}{2}(E_1 - E_2)$ of the classical bound \eqref{eq:classical_bound}. When $|E_-|\leq 2 \omega = 0.6$, the behaviour admits a deterministic decomposition so that $H^\star = 0$, but this bound can be violated by quantum devices, because the maximum quantum value is $ 2\sqrt{\omega(1-\omega)} \approx 0.92$ \cite{ref:correnergy}. In Figure~\ref{fig:entropy_violation}, we compute the lower-bounds $H_k^\star \leq H^\star$, for different number of segments $k$ used in the approximation of the binary entropy.

Secondly, we illustrate our algorithm in the more general case where one uses all the measurement statistics to compute the entropy. In Figure~\ref{fig:entropy_domain} we compute the entropy as a function of the correlations $\vE$ for the two different types of assumptions. We take symmetric constraints of the form $\omega_1 = \omega_2 = 0.15$, 

\begin{figure}[t]
	\centering
	\includegraphics[scale=1]{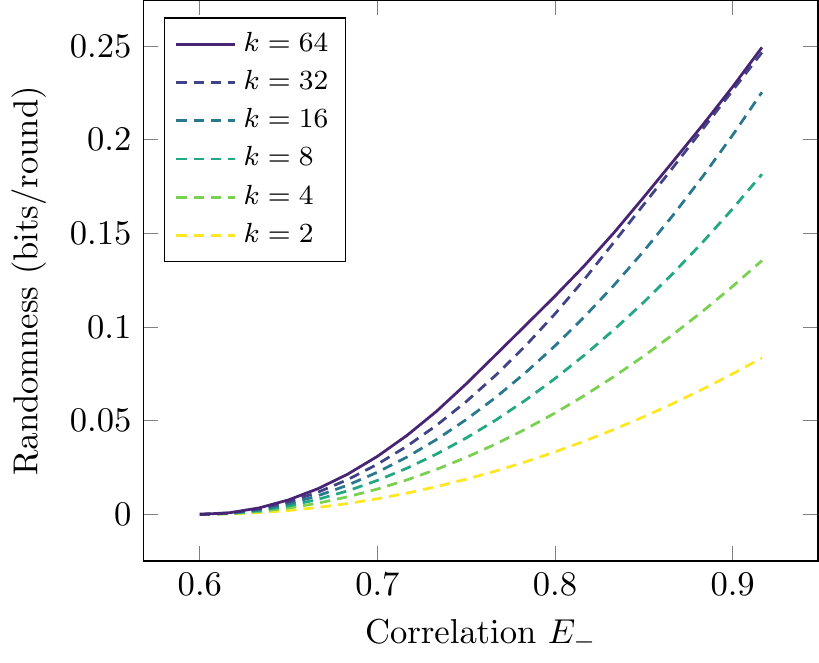}
	\caption{\label{fig:entropy_violation} A converging series of lower-bounds $H_k^\star$ on the worst case conditional entropy $H^\star$, as a function of $E_- = \frac{1}{2}(E_1 - E_2)$, assuming the max-average constraints $\vomega_{\avg} = (\omega,\omega)$ with $\omega = 0.3$ and a uniform input distribution. The maximum quantum value of $E_-$ is $\max_{\vE\in \setQ_{\vomega}} E_- = 2\sqrt{\omega(1-\omega)} \approx 0.92$ but the correlations admit a deterministic decomposition if and only if $|E_-| \leq 0.6$.}
\end{figure}
\begin{figure}[h!]
	\centering
	\begin{subfigure}{0.4\linewidth}
		\includegraphics[scale=1]{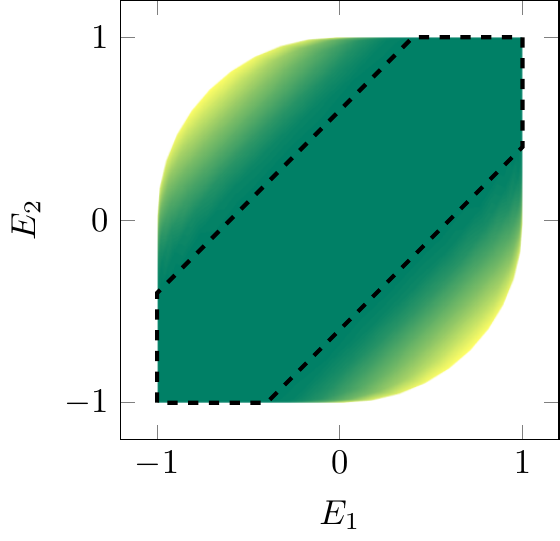}
		\subcaption{Max-average assumption}
	\end{subfigure}%
	\begin{subfigure}{0.5\linewidth}
		\includegraphics[scale=1]{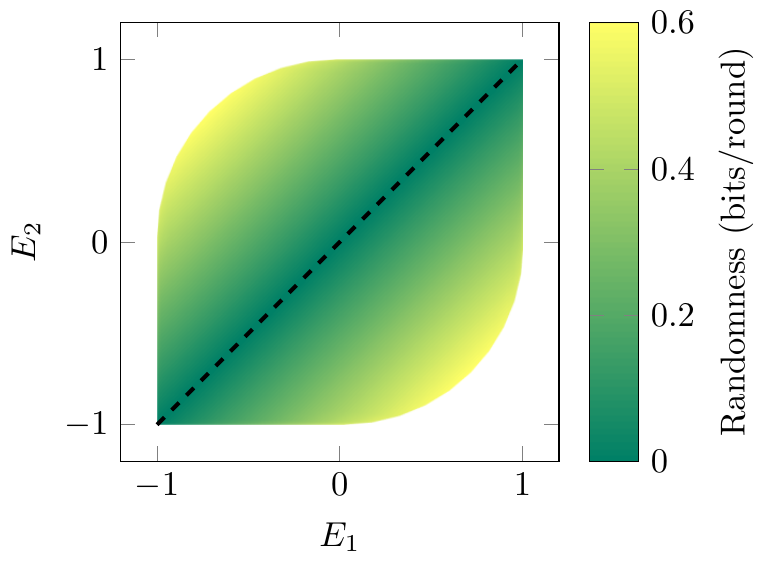}
		\subcaption{Max-peak assumption}
	\end{subfigure}
	\caption{\label{fig:entropy_domain} Entropy $H^\star$ as function the correlation $\vE$, for two different types of assumptions : (a) the max-average assumption $\vomega_\avg = (0.15,0.15)$ (putting trivially $\vomega_\pk = (1,1)$) and (b) the max-peak assumption $\vomega_\pk = (0.15,0.15)$ (putting trivially $\vomega_\avg = (0.15,0.15)$) . The figures were obtained by computing a lower bound $H^\star_k$ on the entropy with an approximation scheme with $k=16$ segments. The dotted regions are the classical regions where $H^\star = 0$ (respectively $|E_1-E_2|\leq 0.6$ and $E_1 - E_2 = 0$), outside of them one can certify a positive rate of randomness generation, given by $H_k^\star >0$.
}
\end{figure}
Let us now apply the algorithm to two experimental implementation proposed in \cite{ref:correnergy}, to show its practical relevance:  the Binary Phase Shift Keying (BPSK) implementation and the On-Off Keying (OOK) implementation.

\paragraph{Binary Phase Shift Keying} The BPSK implementation of \cite{ref:correnergy} is based on displaced coherent states $\ket{\psi_{x}} = \ket{\pm \xi}$, which are defined in the phase space $(X,P)$ of a single mode (with the convention that $[X,P] = i$), and on a binned homodyne measurement of the $X$ quadrature $M = \sgn (X)$. See Figure~\ref{fig:BPSK_scheme} for an experimental implementation using quantum optics components. Taking into account a finite detection efficiency $\eta$, the implementation produces the expected correlations $\vE = \left(\erf\left(\eta\xi\right),-\erf\left(\eta\xi\right)\right)$, while the states have an energy $\bracket{N} = \xi^2/2$. In Figure~\ref{fig:BPSK_entropy}, we compute a lower bound on $H^\star_k \leq H^\star$, using all the measurement statistics $\vE$ and assuming only a bound on the average energy 
\begin{IEEEeqnarray}{rL}
	\label{eq:average_max_average}
	\sum_x \sum_\lambda p(x)p(\lambda)\omega^\lambda_x = \sum_x p(x) \omega_{\avg,x} = \bar \omega_{\avg}\,,
\end{IEEEeqnarray}
where we averaged over the hidden variables $\lambda$ (max-average assumption), as well as over the inputs $x$. In addition to taking into account the noise, we also study the effect of using a safety margin $\delta \geq 0$ on the energy thresholds $\omega$, so we chose $\bar{\omega}_\avg = (1+\delta)\xi^2/2$. The entropies in Figure~\ref{fig:BPSK_scheme} were computed with an approximation scheme with $k=32$ segments and a uniform input distribution. See \cite{ref:correnergy} for a further discussion of the validity of our assumptions for this implementation and the role the the local oscillator.

\paragraph{On-Off Keying} As a last example, we study the On-Off Keying (OOK) implementation of \cite{ref:correnergy}. The correlations are obtained by sending the coherent states $\ket{\psi_1} = \ket{0}$ or $\ket{\psi_2} = \ket{\xi}$ and using a single photon detector with efficiency $\eta$. The output is labelled $1$ if the detector clicks and $-1$ otherwise. The expected correlations are $E_1 = -1$ and $E_2 = 1 - 2e^{-\eta\xi^2/2}$ and the energies are $\vomega_\pk = (0,\xi^2/2)$. It turns out that for this implementation one needs the stronger max-peak assumption, since the max-average assumption alone gives a zero rate (the correlation $\vE$ is in the classical set, see \cite{ref:correnergy}). The upside of this implementation is that, when applied to it, our analysis tolerates arbitrary small detection inefficiencies. This implementation admits a direct analytical formula for the entropy $H^\star$, since the observed correlations $\vE$ are on the border of the set $\setQ_{\vomega}$ with $\vomega = (0,\omega_2)$ and so there is a unique way to decompose $\vE$ into extremal points of $\setQ_{\vomega}$ (these are $\vE^1 = (-1,-1)$ and $\vE^2 = (-1, -1 + 2\omega_2)$). We find
\begin{IEEEeqnarray}{rL}
	H^\star = p_X(1)\frac{1+E_2} {2\omega_{\pk,2}} h_{bin} \left(\omega_{\pk,2}\right)\,.
\end{IEEEeqnarray}
The entropy is shown in Figure \ref{fig:entropy_OOK} for different regimes of operation $\omega_{\pk,1} = \xi^2/2$ and different detector efficiencies $\eta$.

\clearpage
\begin{figure}
	\centering
	\includegraphics[scale=1]{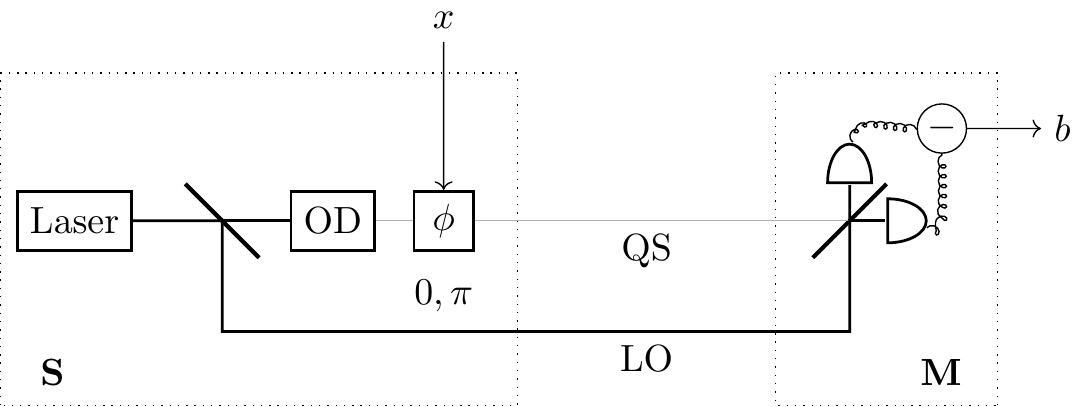}
	\caption{Binary Phase Shift Keying (BPSK) implementation. A highly attenuated laser beam (OD is an optical density) is send though a phase shifter ($\phi$) which is controlled by the input $x$ and applies the phase $0$ or $\pi$. This produces the Quantum Signal (QS) (one of the two coherent states $\ket{\psi_x} = \ket{\pm \xi}$) which is send to the measurement device. A homodyne measurement of the $X$ quadrature is then performed by interference with a Local Oscillator (LO) which was previously extracted from the laser. The final output is $b = \sgn (X)$ (Figure taken from \cite{ref:correnergy}.)
	\label{fig:BPSK_scheme}}
\end{figure}
\begin{figure}[h]
	\centering
	\includegraphics[scale=1]{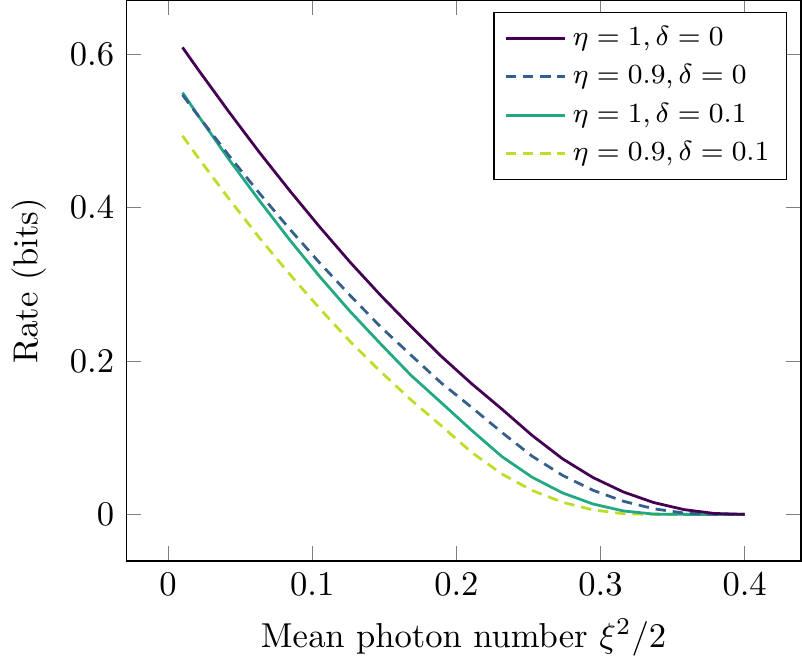}
	\caption{Lower-bound $H_k^\star$ on the conditional Shannon entropy $H(A|X\Lambda)$ for the BPSK implementation as a function of the mean photon number $\xi^2/2$ of the implementation. We analyse the rate for different detector efficiencies $\eta$ and security margins $\epsilon$ on the energy bound. The rates correspond to lower-bounds $H_k^\star$ computed with an approximation scheme with $k=32$ segments, using the assumption \eqref{eq:average_max_average} and a uniform input distribution. The energy threshold was chosen conservatively as $\omega = (1+\delta)\xi^2/2$. Note that the rate is larger in the low energy regime. This is because the coherent states are close to the vacuum so the output is almost unbiased.
	\label{fig:BPSK_entropy}}
\end{figure}
\clearpage
\begin{figure}[t]
	\centering
	\includegraphics[scale=1]{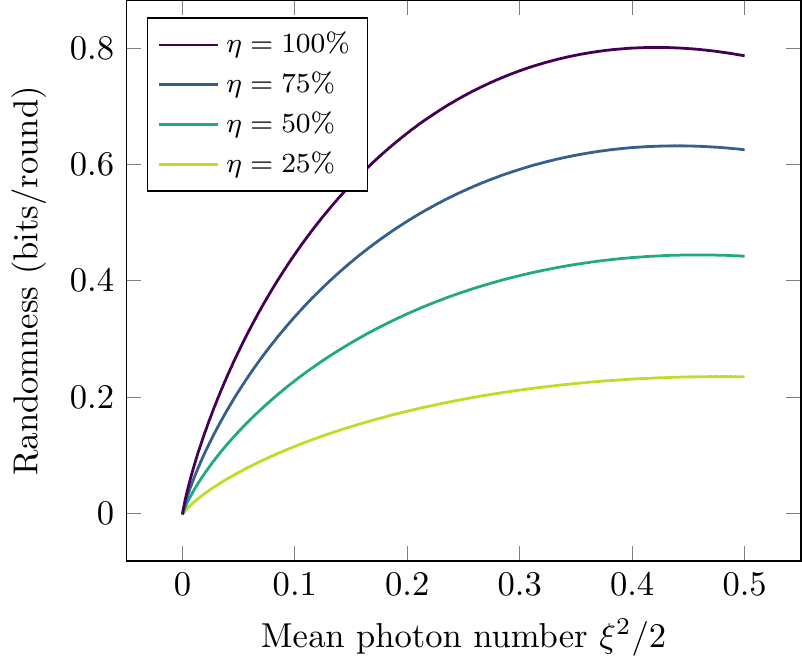}
	\caption{ \label{fig:entropy_OOK} Worst case conditional Shannon entropy $H^\star$ for the OOK implementation using a max-peak assumption for different regimes of operation and detector efficiencies.}
\end{figure}


\section{Protocol for randomness certification}
\label{sec:security_proof}

In this section, we analyse the randomness in the case where a device is used multiple rounds consecutively and we construct an explicit protocol for testing the device and certifying the output randomness.

\subsection{General model and assumptions}
\label{subsec:assumptions}
We consider the prepare-and-measure device defined in the previous sections, when it is used $n$ times successively. The classical random variables observed by the user of the device are the inputs $X^n = (X_1\cdots X_n)$ and the outputs $A^n = (A_1 \cdots A_n)$%
\footnote{%
	As is standard practice, from now on we denote random variable by uppercase letters and the values they take by lowercase letters.
}%
. The classical information that a potential adversary, Eve, has on the device is represented by a random variable $\Lambda$.  The correlation between the inputs, outputs, and Eve's information is represented by a probability distribution $\mu_{A^n X^n \Lambda}$ unknown to the user of the protocol.

We make the following assumptions.
\begin{itemize}
	\item \emph{Choice of input distribution}: The input $X_i$ at round $i$ is generated independently of the past $W_i=(A^{i-1},X^{i-1})$ and of $\Lambda$ and with identical distribution $p_X$ for all $i$:
\begin{equation}\label{eq:ass1}
\mu_{X_i|W_i\Lambda}(x)=p_X(x)\,.
\end{equation}	

	\item \emph{Existence of a valid quantum representation for each round $i$ conditioned on the past}: The output $A_i$ at round $i$ originates from a device used with input $X_i$ and characterized by a valid quantum behaviour $(\vect E_i(W_i,\Lambda),\linebreak[1]\vect \omega_i(W_i,\Lambda))\in \mathcal{Q}$. We can thus in particular write
	\begin{equation}\label{eq:ass2}
	\mu_{A_i|X_i,W_i\Lambda}	(a|x) = \frac{1}{2}(1+aE_{x,i}(W_i,\Lambda))\,.
	\end{equation}
	As the notation indicates, the correlations $\vect E_i(W_i,\Lambda)$ and the energies $\vect \omega_i(W_i,\Lambda)$ can depend on the past $W_i$ and on $\Lambda$.
	
	\item \emph{Max-peak assumption}: We assume a max-peak bound on the energies
	\begin{IEEEeqnarray}{rL}
		\label{eq:maxpk_protocol}
		\vomega_i(W_i,\Lambda) \leq \vomega_{\pk} \fa i = 1,\cdots ,n\,.
	\end{IEEEeqnarray}
	This assumption is simply the direct transcription of \eqref{eq:maxpeak}, the idea that there is an absolute energy limit satisfied by the states prepared by the source. More generally, we could also require \eqref{eq:maxpk_protocol} to hold only with high probability or on a large fraction of the rounds, but this would make our subsequent analysis more cumbersome.
	
	\item \emph{Max-average assumption}: 	Finally, we assume that for some energy thresholds $\vect \omega_\avg$ and for some (small) $\epsilon_{\vect \omega} \geq 0$, 
	\begin{IEEEeqnarray}{rL}
		\label{eq:maxavg_protocol}
		\Pr\left(\frac{1}{n}\sum_{i=1}^n \vect{\omega}_i(W_i,\Lambda)\preceq \vect{\omega}_\avg\right)\geq 1-\epsilon_{\vect \omega} \,.
	\end{IEEEeqnarray}
	This is the non-i.i.d. version of the max-average assumption \eqref{eq:maxav}, expressing the fact that the energies can fluctuate from one round to the other, provided that the overall average over $n$ rounds stays bounded. Note that we require the bound to hold only with high probability $1-\epsilon_{\vect \omega}$, because we want our analysis to cover simple i.i.d.\ strategies for Eve where she chooses at each run with probability $p(\lambda)$ energies $\vect{\omega}_\lambda$ satisfying $\sum_\lambda p(\lambda) \vect{\omega}_\lambda\leq \vomega_\avg-\delta$, for some security margin $\vect \delta \succeq 0$. If Eve follows such a strategy it is expected  that $\frac{1}{n}\sum_{i=1}^n \vect{\omega}_i\preceq \vect{\omega}_\avg$ only with high probability, so it may happen that $\frac{1}{n}\sum_{i=1}^n \omega_{i,x}\succeq \omega_{\avg,x}$ for some $x$, albeit with very small probability if $n$ is large.
\end{itemize}

The first two assumptions are entirely similar to their counterparts in device-independent (DI) protocols with classical-side information \cite{ref:kzb2017,ref:zkb2018}. The max-peak assumption simply constrains the set of quantum behaviours of the devices at the level of individual runs and is thus not fundamentally different than the no-communication assumption in DI QRNG.

The novelty of our randomness estimation analysis, on the other hand, lies in the max-average assumption which constrains the mean behaviour of the devices over $n$ rounds, and not at the individual level, where it can arbitrary fluctuate. This assumption cannot be directly used in existing randomness estimation frameworks, so we provide a new one by generalizing the techniques from \cite{ref:kzb2017,ref:zkb2018} to take into account fluctuating energy. 

Note that, although the thresholds $\vomega_\avg$ can be chosen based on some partial knowledge of the source coming from a theoretical model, it also conceivable to estimate them before the experiment by measuring the average energy using a trusted energy meter. For optical applications, like the ones in Section~\ref{sec:entropy_examples}, this amounts to placing a photo-detector between the source and the measurement apparatus and measuring the average number of photons of the pulses (and assuming some stability overt time of the average energy output of the source).

In the following, we present our security analysis in the general case where one uses both a max-peak and max-average assumption, but note that it also applies when using only one of the two assumption. Without loss of generality, we also assume the bounds of eq. (\ref{eq:ass_energies}).

\subsection{Trade-off Functions and randomness estimation} 

The main tool we use to estimate randomness are Trade-off Functions (TF).

\begin{definition}
	\label{def:TF} 
	Let $p(x)$ be given. We say that $(\vect \alpha,\vect \beta,\vect \gamma) \in (\R^2)^3$ is a \emph{Trade-off Function} with max-peak energies $\vomega_\pk$ if $\vect \gamma \leq 0$ and 
	\begin{IEEEeqnarray}{l}
		\label{eq:TF}
		\alpha + \vect \beta \cdot \vE + \vect \gamma \cdot \vect \omega \leq H(\vE) \fa (\vect E,\vect \omega)\in \setQ_{\vomega_\pk} \,,
	\end{IEEEeqnarray}
	where $\alpha = \sum_x \alpha_x$.
\end{definition}
A Trade-off Function is a linear lower bound on $H(\vE)$ that holds for any quantum behaviour that satisfies the max-peak assumption $\setQ_{\vomega_\pk} = \{(\vE,\vomega) \in \setQ| \vomega \preceq \vomega_\pk \}$. It is therefore a feasible point of the dual optimisation problem \eqref{eq:opti_dual}, used in Section \ref{sec:entropy} to compute a lower-bound on $H^\star$. Such a TF can be found and optimized for a specific use with the algorithm described in Section~\ref{sec:entropy}. Note that there is a small additional degree of freedom since we define $\alpha = \alpha_1+\alpha_2$; this will be used below.

Such TF are closely related to other functions used in Device-Independent security proofs to characterize the randomness as a function of the correlations. The min-tradeoff functions in \cite{ref:arv2016,ref:adfrv2018} are linear lower-bounds on the conditional von Neumann entropy. The Probability Estimation Factors in \cite{ref:kzb2017,ref:zkb2018} are stronger that the TF but reduce tot them in the limit $\beta \rightarrow 0$. Finally in \cite{ref:nbsp2016}, the randomness bounding functions are a convex lower-bound on the surprisal $-\log p(a|x)$ for a subset of the inputs.

In a randomness generating protocol the left hand side of \eqref{eq:TF} has to be determined to get the rate of randomness generation. To evaluate the value of $\alpha + \vect \beta \cdot \vE$, we define the unbiased estimator 
\begin{IEEEeqnarray}{rL}
	\label{eq:estimator}
	\xi(a,x) = \frac{1}{p(x)}(\alpha_x + a \beta_x)\, ,
\end{IEEEeqnarray}
which satisfies $\esp{\xi(A,X)} = \alpha + \vect \beta \cdot \vE$, when $\esp{A|x} = E_x$. The value of $\alpha + \vect \beta \cdot \vE$ is then estimated by computing 
\begin{equation}
\bracket{\xi}_{A^nX^n} = \frac{1}{n}\sum_i \xi(A_i,X_i)\,.
\end{equation}

The following theorem is the central result of this section. We derive a lower-bound on the surprisal $-\log \mu(A^n|X^n\Lambda)$ of the outcome $A^n$ given $X^n\Lambda$ as a function of the value of estimator $\bracket{\xi}_{A^nX^n}$ and the energy upper-bounds  $\vomega_\pk$, $\vect \omega_{\avg}$. These are variables in the hands of the user. 

\begin{theorem}
	\label{prop:rand_TF} 
	Let $\epsilon_t >0$, let the distribution $\mu_{A^nX^n\Lambda}$ satisfy the assumptions of section \ref{subsec:assumptions} and let $(\vect \alpha,\vect \beta,\vect \gamma)$ be a Trade-off Function with max-peak energies $\vomega_\pk$, then the bound
	\begin{IEEEeqnarray}{rL}
		\label{eq:rand_TF}
		-\frac{1}{n}\log \mu(A^n|X^n,\Lambda) \geq \bracket{ \xi }_{A^nX^n} + \vect \gamma \cdot\vect\omega_\avg - t \,,
	\end{IEEEeqnarray}
	holds with a probability greater than $1- \epsilon_\omega - \epsilon_t$. The error term is
	\begin{IEEEeqnarray}{rL}
		\label{eq:error_term}
		t = \sqrt{2V}\sqrt{\frac{\log(1/\epsilon_t)}{n}} + \frac{\xi^+}{3} \frac{\log(1/\epsilon_t)}{n}
	\end{IEEEeqnarray}
with $V = \max\{(\xi^+ + \bar\gamma)^2,(\xi^- + \bar\gamma)^2\} + 2 \max\{\log |A|(\bar \gamma + \xi^-),0)  + \frac{4|A|}{e^2}(\log_2 e)^2$, where $\xi^+ = \max_{a,x} \xi(a,x)$, $\xi^- = \min_{a,x} \xi(a,x)$, $\bar{\gamma}= \sum_x \gamma_x$, and where $|A|$ is the cardinality of the random variable $A$. 
\end{theorem}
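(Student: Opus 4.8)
The plan is to reduce \eqref{eq:rand_TF} to a Bernstein-type concentration bound for a supermartingale built from the per-round entropy deficit. First I would invoke the chain rule: writing the law of $(\Lambda,X^n,A^n)$ in the order $\Lambda,X_1,A_1,\dots,X_n,A_n$, assumption \eqref{eq:ass1} makes the inputs i.i.d.\ and independent of the past (so future inputs drop out), while \eqref{eq:ass2} identifies $\mu(A_i\mid X^i,A^{i-1},\Lambda)=\mu(A_i\mid X_i,W_i,\Lambda)=\tfrac12(1+A_iE_{X_i,i}(W_i,\Lambda))$; hence $-\log\mu(A^n\mid X^n,\Lambda)=\sum_{i=1}^n\bigl(-\log\mu(A_i\mid X_i,W_i,\Lambda)\bigr)$. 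With the filtration $\mathcal F_{i-1}=\sigma(\Lambda,W_i)$ I then define $g_i=\xi(A_i,X_i)+\vect\gamma\cdot\vomega_i(W_i,\Lambda)+\log\mu(A_i\mid X_i,W_i,\Lambda)$ and check $\esp{g_i\mid\mathcal F_{i-1}}\le0$: indeed $\esp{\xi(A_i,X_i)\mid\mathcal F_{i-1}}=\alpha+\vect\beta\cdot\vE_i$ by the definition \eqref{eq:estimator}, $\esp{-\log\mu(A_i\mid X_i,W_i,\Lambda)\mid\mathcal F_{i-1}}=H(\vE_i)$, and $\vomega_i$ is $\mathcal F_{i-1}$-measurable, so $\esp{g_i\mid\mathcal F_{i-1}}=\alpha+\vect\beta\cdot\vE_i+\vect\gamma\cdot\vomega_i-H(\vE_i)$, which is $\le0$ by the Trade-off inequality \eqref{eq:TF} applied to $(\vE_i,\vomega_i)$ --- legitimate because $(\vE_i,\vomega_i)\in\setQ$ by \eqref{eq:ass2} and $\vomega_i\preceq\vomega_\pk$ by \eqref{eq:maxpk_protocol}, so $(\vE_i,\vomega_i)\in\setQ_{\vomega_\pk}$.

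Next I would assemble the two quantitative estimates feeding Bernstein's inequality. An upper bound $g_i\le\xi^+$ holds because $\log\mu\le0$, $\vect\gamma\preceq0$ and $\vomega_i\succeq0$. For the conditional second moment I split $g_i=s_i+\ell_i$ into the bounded summand $s_i=\xi(A_i,X_i)+\vect\gamma\cdot\vomega_i(W_i,\Lambda)$, whose range lies in $[\xi^-+\bar\gamma,\xi^+]$ (recall $\vect\gamma\cdot\vomega_i\in[\bar\gamma,0]$ since $\vomega_i\preceq\vect1$), and the unbounded-below part $\ell_i=\log\mu(A_i\mid X_i,W_i,\Lambda)\le0$; expanding $g_i^2=s_i^2+2s_i\ell_i+\ell_i^2$ and taking conditional expectations, the key ingredient is the elementary bound $\sup_{p\in(0,1]}p(\log_2p)^2=4e^{-2}(\log_2e)^2$, which gives $\esp{\ell_i^2\mid\mathcal F_{i-1}}=\sum_xp(x)\sum_a\tfrac{1+aE_{x,i}}{2}\bigl(\log_2\tfrac{1+aE_{x,i}}{2}\bigr)^2\le\tfrac{4|A|}{e^2}(\log_2e)^2$; the cross term is handled via $s_i\ell_i\le(\text{negative part of }s_i)\,|\ell_i|$ together with $\esp{|\ell_i|\mid\mathcal F_{i-1}}=H(\vE_i)\le\log|A|$, and $\esp{s_i^2\mid\mathcal F_{i-1}}$ from the range of $s_i$. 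Bookkeeping the signs of $\bar\gamma$ and $\xi^\pm$ in these three contributions yields $\esp{g_i^2\mid\mathcal F_{i-1}}\le V$ with $V$ as in the statement. I expect this second-moment estimate to be the main obstacle: because $\log\mu(A_i\mid X_i,W_i,\Lambda)$ is unbounded below, $g_i$ is not a bounded martingale difference, so Hoeffding--Azuma and symmetric Bernstein bounds do not apply directly; one must isolate the unbounded $\log\mu$ part, bound its second moment through $\sup_pp(\log p)^2$, and control its interaction with the bounded summands --- which is exactly what produces the somewhat awkward form of $V$, and where the sign accounting is delicate.

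With $\esp{g_i\mid\mathcal F_{i-1}}\le0$, $g_i\le\xi^+$ and $\esp{g_i^2\mid\mathcal F_{i-1}}\le V$ in hand, I would run the exponential-supermartingale argument. For $\lambda>0$, monotonicity of $u\mapsto(e^u-1-u)/u^2$ gives $e^u\le1+u+u^2\,\frac{e^{\lambda\xi^+}-1-\lambda\xi^+}{(\lambda\xi^+)^2}$ for all $u\le\lambda\xi^+$, whence $\esp{e^{\lambda g_i}\mid\mathcal F_{i-1}}\le\exp\!\bigl(\tfrac{V}{(\xi^+)^2}(e^{\lambda\xi^+}-1-\lambda\xi^+)\bigr)=e^{\psi(\lambda)}$, so $M_k=\exp\!\bigl(\lambda\sum_{i\le k}g_i-k\psi(\lambda)\bigr)$ is a nonnegative supermartingale with $\esp{M_0}=1$. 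Ville's inequality then yields $\Pr\bigl(\tfrac1n\sum_ig_i\ge x\bigr)\le e^{-\lambda nx+n\psi(\lambda)}$ for every $\lambda>0$, and the standard Bennett--Bernstein optimisation over $\lambda$ shows that, with probability at least $1-\epsilon_t$,
\[
\tfrac1n{\textstyle\sum_i}g_i\le\sqrt{2V}\sqrt{\tfrac{\log(1/\epsilon_t)}{n}}+\tfrac{\xi^+}{3}\tfrac{\log(1/\epsilon_t)}{n}=t\,,
\]
i.e.\ $\tfrac1n\sum_i\bigl(-\log\mu(A_i\mid X_i,W_i,\Lambda)\bigr)\ge\bracket{\xi}_{A^nX^n}+\tfrac1n\sum_i\vect\gamma\cdot\vomega_i(W_i,\Lambda)-t$ with probability at least $1-\epsilon_t$.

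Finally I would bring in the max-average assumption \eqref{eq:maxavg_protocol}: on the event $\bigl\{\tfrac1n\sum_i\vomega_i(W_i,\Lambda)\preceq\vomega_\avg\bigr\}$, which has probability at least $1-\epsilon_\omega$, the inequality holds componentwise, and since $\vect\gamma\preceq0$ this reverses to $\tfrac1n\sum_i\vect\gamma\cdot\vomega_i(W_i,\Lambda)\ge\vect\gamma\cdot\vomega_\avg$. A union bound over the two failure events, together with $-\tfrac1n\log\mu(A^n\mid X^n,\Lambda)=\tfrac1n\sum_i\bigl(-\log\mu(A_i\mid X_i,W_i,\Lambda)\bigr)$ from the first step, then gives \eqref{eq:rand_TF} with probability at least $1-\epsilon_\omega-\epsilon_t$, as claimed.
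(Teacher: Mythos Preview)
Your proposal is correct and mirrors the paper's proof: the paper defines the same per-round variables $T_i=\xi(A_i,X_i)+\vect\gamma\cdot\vomega_i(W_i,\Lambda)+\log\mu(A_i\mid X_i,W_i,\Lambda)$, proves $\esp{T_i\mid W_i,\Lambda}\le0$, $T_i\le\xi^+$ and $\esp{T_i^2\mid W_i,\Lambda}\le V$ as separate lemmas, invokes the Bernstein-type supermartingale inequality (cited from Fan--Grama--Liu rather than rederived as you do), and concludes via the max-average assumption and a union bound.

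The one technical step you handle differently is the second-moment bound. To obtain the stated $V$ with first term $\max\{(\xi^++\bar\gamma)^2,(\xi^-+\bar\gamma)^2\}$, the paper first replaces $\vect\gamma\cdot\vomega_i$ by the constant $\bar\gamma$: it shows $\esp{T_i^2}\le\esp{(T_i')^2}$ with $T_i'=\xi(A_i,X_i)+\bar\gamma+\log\mu(A_i\mid X_i,W_i,\Lambda)$, using that $\Delta=\vect\gamma\cdot\vomega_i-\bar\gamma\ge0$ (since $\vomega_i\preceq\vect 1$) together with $\esp{T_i}\le0$ to get $\esp{(T_i-\Delta)^2}=\esp{T_i^2}+\Delta^2-2\Delta\,\esp{T_i}\ge\esp{T_i^2}$, and only then expands the square of $T_i'$. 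Your direct split $s_i=\xi+\vect\gamma\cdot\vomega_i\in[\xi^-+\bar\gamma,\,\xi^+]$ would instead give $\max\{(\xi^-+\bar\gamma)^2,(\xi^+)^2\}$ for that term, which is a perfectly valid bound but not the $V$ of the statement; so your ``bookkeeping yields $V$ as in the statement'' is not quite accurate without this shift trick.
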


Roughly speaking, a lowerbound on the surprisal $-\log \mu(A^n|X^n\Lambda) \geq k$ certifies the presence of $k$ bits of randomness in the outputs and is directly proportional to the length $|K|$ of a uniform key that can be extracted from the raw output string $A^n$. So this theorem establishes the relation between the amount of randomness and the observed behaviour, when the device is used $n$ times. This will be made more precise in Subsection~\ref{subsec:protocol_sec_proof}. 

Defining the rate as the key length per round $R = |K|/n$, Theorem~\ref{prop:rand_TF} then directly provides the leading terms of the rate $R$ as a function of $n$: first a leading constant term $\bracket{ \xi }_{A^n,X^n} + \vect \gamma \cdot\vect\omega_\avg$, which is the value of the Trade-off Function evaluated at the observed behaviour and which gives the asymptotic rate, and then a sub-leading error term given by $- \sqrt{2V}\sqrt{\frac{\log(1/\epsilon_t)}{n}}$, which scales as $O(1/\sqrt n)$. 

Theorem~\ref{prop:rand_TF} follows from the following new construction, which was inspired by the Probability Estimation Factors in \cite{ref:kzb2017}. Let $(\vect E,\vect \omega)\in Q$ be some behaviour and let the random variables $A X$ have the distribution $p(a,x) = \frac{p(x)}{2}(1+aE_x)$. Then we can define the random variable
\begin{IEEEeqnarray}{rL}
	\label{eq:def_T}
	T = \xi(A,X) + \vect \gamma \cdot \vect \omega +\log p(A|X)\, ,
\end{IEEEeqnarray}
Defined as such, the variable $T$ satisfies the following two lemmae.

\begin{lemma}
	\label{prop:T_bound}
	Let $(\vE,\vect \omega) \in \setQ_{\omega_\pk}$ and let $(\vect \alpha,\vect \beta,\vect \gamma)$ be a TF with max-peak energies $\omega_\pk$. Then the variable $T$ defined in \eqref{eq:def_T} satisfies 
	\begin{IEEEeqnarray}{rL}
		\esp{T}\leq 0 \,.
	\end{IEEEeqnarray}
\end{lemma}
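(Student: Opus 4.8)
The plan is to compute $\E[T]$ directly from the definition \eqref{eq:def_T} and recognize that it collapses to the Trade-off Function inequality \eqref{eq:TF}. First I would take the expectation term by term. Since $AX$ has distribution $p(a,x) = \frac{p(x)}{2}(1+aE_x)$, the estimator term gives $\E[\xi(A,X)] = \sum_{a,x} \frac{p(x)}{2}(1+aE_x)\cdot\frac{1}{p(x)}(\alpha_x + a\beta_x)$, which, after expanding and using $\sum_{a=\pm1} a = 0$ and $\sum_{a=\pm 1} a^2 = 2$, equals $\sum_x(\alpha_x + \beta_x E_x) = \alpha + \vect\beta\cdot\vE$ — this is exactly the unbiasedness property already noted below \eqref{eq:estimator}. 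The term $\vect\gamma\cdot\vect\omega$ is deterministic, so its expectation is itself. Finally, $\E[\log p(A|X)] = \sum_{a,x} p(x)\,p(a|x)\log p(a|x) = -H(\vE)$ by the definition of $H(\vE)$ given in Section~3. Putting the three pieces together,
\begin{equation}
\E[T] = \alpha + \vect\beta\cdot\vE + \vect\gamma\cdot\vect\omega - H(\vE)\,.
\end{equation}

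Now, by hypothesis $(\vE,\vect\omega)\in\setQ_{\vomega_\pk}$ and $(\vect\alpha,\vect\beta,\vect\gamma)$ is a Trade-off Function with max-peak energies $\vomega_\pk$, so the defining inequality \eqref{eq:TF} applies verbatim: $\alpha + \vect\beta\cdot\vE + \vect\gamma\cdot\vect\omega \leq H(\vE)$. Hence $\E[T]\leq 0$, which is the claim.

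I do not anticipate a genuine obstacle here: the lemma is essentially a restatement of the Trade-off Function property once one observes that $T$ was constructed precisely so that its expectation is the slack in \eqref{eq:TF}. The only points requiring mild care are (i) keeping track of which $x$-dependent pieces pair up — i.e.\ checking $\sum_x \alpha_x = \alpha$ matches the convention $\alpha = \alpha_1+\alpha_2$ introduced in Definition~\ref{def:TF}, and (ii) noting that the inequality uses $(\vE,\vect\omega)$ itself (with $\vect\omega\preceq\vomega_\pk$), not $\vomega_\pk$, so no monotonicity argument in $\vect\gamma$ is needed at this stage — that refinement is what the companion lemma and the $\vect\gamma\preceq 0$ restriction handle elsewhere. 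Thus the proof is a short three-line expectation computation followed by invoking \eqref{eq:TF}.
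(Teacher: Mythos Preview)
Your proposal is correct and follows essentially the same approach as the paper: compute $\E[T]$ term by term to obtain $\E[T] = \alpha + \vect\beta\cdot\vE + \vect\gamma\cdot\vect\omega - H(\vE)$, then invoke the Trade-off Function inequality \eqref{eq:TF}. The paper's proof is simply a more terse version of yours, citing the unbiasedness $\E[\xi(A,X)] = \alpha + \vect\beta\cdot\vE$ and $\E[-\log p(A|X)] = H(\vE)$ without the explicit summation you spell out.
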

\begin{proof}
	This follows directly from the definition of a TF \eqref{eq:TF}. Using the equalities $\esp{\xi(A,X)} = \alpha + \vect \beta \cdot \vE$ and $\esp{-\log p(A|X)} = -\sum_{a,x}p(x)p(a|x)\log p(a|x)=H(\vE)$, we find that
	\begin{IEEEeqnarray}{rL}
		\esp{T} &= \alpha + \vect \beta \cdot \vE + \vect \gamma \cdot \vect \omega - H(\vE) \leq 0 \,.
	\end{IEEEeqnarray}		

\end{proof}

\begin{lemma}
	\label{prop:V_bound}
	Under the same assumptions as Lemma~\ref{prop:T_bound}, the variable $T$ defined in \eqref{eq:def_T} satisfies 
	\begin{IEEEeqnarray}{r?c?l}
		\esp{T^2}\leq V & \text{ and } & T \leq \xi^+ \,,
	\end{IEEEeqnarray}
	with $V$ and $\xi^+$ defined as in the statement of Theorem~\ref{prop:T_bound}.
\end{lemma}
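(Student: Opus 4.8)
The plan is to establish the two claims separately: the one-sided bound $T \le \xi^+$ is immediate, while $\esp{T^2}\le V$ comes from splitting $T$ into a bounded piece and a logarithmic piece and estimating the three resulting terms. For the first claim, $\xi(A,X)\le \xi^+$ by definition of $\xi^+$; since $(\vect\alpha,\vect\beta,\vect\gamma)$ is a Trade-off Function we have $\vect\gamma\preceq 0$, and since $(\vE,\vomega)\in\setQ_{\vomega_\pk}$ we have $\vomega\succeq 0$, so $\vect\gamma\cdot\vomega\le 0$; and $p(A|X)\le 1$ gives $\log p(A|X)\le 0$. Adding the three estimates yields $T\le\xi^+$.

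For $\esp{T^2}$ I would write $T = Y + Z$ with $Y = \xi(A,X)+\vect\gamma\cdot\vomega$ and $Z = \log p(A|X)\le 0$, and expand $\esp{T^2}=\esp{Y^2}+2\esp{YZ}+\esp{Z^2}$. The piece $Y$ is bounded: $\xi(A,X)\in[\xi^-,\xi^+]$ and $\vect\gamma\cdot\vomega\in[\bar\gamma,0]$ (using $\vect\gamma\preceq 0$ together with $\vect 0\preceq\vomega\preceq\vomega_\pk\preceq\vect 1$), so $Y$ lies in an interval whose endpoints are built from $\xi^\pm$ and $\bar\gamma$, and $\esp{Y^2}$ is at most the larger of the squares of those endpoints, giving the first term of $V$. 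For $\esp{Z^2}=\sum_{a,x}p(x)p(a|x)\bigl(\log_2 p(a|x)\bigr)^2$ I would bound each summand using $\max_{0<z\le 1} z(\log_2 z)^2$; a one-variable calculus computation shows this maximum is attained at $z=e^{-2}$ and equals $\tfrac{4}{e^2}(\log_2 e)^2$, and summing over the $|A|$ values of $a$ (with $\sum_x p(x)=1$) gives the last term of $V$. For the cross term, $Z\le 0$ forces $YZ\le 0$ on the event $\{Y\ge 0\}$, while on $\{Y<0\}$ one has $|Y|\le|\xi^-+\bar\gamma|$; hence $YZ\le\max\{-(\xi^-+\bar\gamma),0\}\,(-Z)$ pointwise, and taking expectations with $\esp{-Z}=\esp{-\log p(A|X)}=H(\vE)\le\log|A|$ yields the middle term of $V$. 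Adding the three bounds gives $\esp{T^2}\le V$.

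The routine but fiddly part is the sign bookkeeping: $\xi^+$, $\xi^-$ and $\bar\gamma$ may each be positive or negative, so the endpoint comparison for $\esp{Y^2}$ and the negative-part step in the cross-term estimate each need a small case split, and all bounds must be made uniform over $(\vE,\vomega)\in\setQ_{\vomega_\pk}$ --- in particular the fixed but arbitrary $\vomega$ is replaced by its extreme values wherever it enters through $\vect\gamma\cdot\vomega$. The only genuinely non-elementary ingredient is the maximization of $z(\log_2 z)^2$ on $(0,1]$ underlying the $\esp{Z^2}$ bound; everything else reduces to the definition \eqref{eq:TF} of a TF, the energy bounds \eqref{eq:ass_energies}, and the elementary identities $\esp{\xi(A,X)}=\alpha+\vect\beta\cdot\vE$ and $\esp{-\log p(A|X)}=H(\vE)$ already used in Lemma~\ref{prop:T_bound}.
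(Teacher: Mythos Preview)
Your argument for $T\le\xi^+$ and your bounds on $\esp{Z^2}$ and on the cross term are correct and match the paper's. The gap is in your bound on $\esp{Y^2}$. With $Y=\xi(A,X)+\vect\gamma\cdot\vomega$ and $\vect\gamma\cdot\vomega\in[\bar\gamma,0]$, the interval you actually obtain is $Y\in[\xi^-+\bar\gamma,\,\xi^+]$, so the square bound your method produces is $\max\{(\xi^+)^2,(\xi^-+\bar\gamma)^2\}$, not the first term of $V$, which is $\max\{(\xi^++\bar\gamma)^2,(\xi^-+\bar\gamma)^2\}$. Since $\bar\gamma\le 0$, one typically has $(\xi^+)^2>(\xi^++\bar\gamma)^2$ (e.g.\ $\xi^+=3$, $\xi^-=-1$, $\bar\gamma=-1$ gives $9$ versus the paper's $4$), so your inequality does not establish the stated $V$.

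The paper closes this gap with an extra step you are missing: before expanding the square it replaces $T$ by $T'=\xi(A,X)+\bar\gamma+\log p(A|X)$, i.e.\ it substitutes the worst-case constant $\bar\gamma$ for the variable $\vect\gamma\cdot\vomega$. Writing $\Delta=T-T'=\vect\gamma\cdot\vomega-\bar\gamma\ge 0$ (from $\vomega\preceq\vect 1$) and using $\esp{T}\le 0$ from Lemma~\ref{prop:T_bound}, one gets $\esp{T'^2}=\esp{T^2}+\Delta^2-2\Delta\,\esp{T}\ge\esp{T^2}$. After this substitution the ``bounded piece'' is $\xi(A,X)+\bar\gamma\in[\xi^-+\bar\gamma,\,\xi^++\bar\gamma]$, which is exactly what yields the first term of $V$. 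So the missing idea is that Lemma~\ref{prop:T_bound} is used not only for its own sake but also as an ingredient here, to trade the $\vomega$-dependent term for the constant $\bar\gamma$ before squaring.
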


\begin{proof}
	The bound $T \leq \xi^+$ holds because of $\xi(A,X) + \vect \gamma \cdot \vect \omega +\log p(A|X) \leq \xi(A,X) \leq \xi^+$, where we used that fact that $\vomega \succeq 0$ and $\vect \gamma \preceq 0$. 
	
	To prove the second bound, we first show that $\esp{T^2}\leq \esp{T'^2}$, where $T' = \xi(A,X) + \bar\gamma +\log p(A|X)$. To see this, we observe that $\Delta = T-T' = \vect \gamma \cdot \vect \omega - \bar\gamma\geq 0$ (because we have assumed that $\vomega\preceq \vomega_\pk \preceq \vect 1$). This indeed entails that $\esp{T'^2} = \esp{(T-\Delta)^2} = \esp{T^2} + \Delta^2 - 2\Delta\esp{T} \geq \esp{T^2}$, where we used $\esp{T}\leq 0$ from Lemma~\ref{prop:T_bound}. 
	
	Next, we expand the square as a sum of three terms $\esp{T'^2} = \esp{ (\xi(A,X)+\bar \gamma)^2} \allowbreak$ $+ 2 \esp{(\xi(A,X)+\bar \gamma)\log p(A|X)} + \esp{(\log p(A|X))^2}$ and bound each term individually :
	\begin{IEEEeqnarray}{rL}
		\esp{ (\xi(A,X)+\bar \gamma)^2} 
					& = \sum_{ax} p(a,x) (\xi(a,x)+\bar\gamma)^2				\\
					& \leq \max\{(\xi^+ +\bar \gamma)^2,(\xi^- + \bar \gamma)^2\} \,, 	\\
		\esp{(\xi(A,X)+ \bar \gamma)\log p(A|X)} 
					& = \sum_{ax} p(x) p(a|x)\log p(a|x) (\xi(a,x)+ \bar \gamma)\\
					& \leq -H(A|X) (\xi^- + \bar \gamma)\\
					& \leq \max \{ -\log |A| (\xi^- + \bar \gamma),0 \} \\
		\esp{(\log p(A|X))^2}
					& \leq \sum_{ax} p(x) p(a|x)(\log p(a|x))^2\\
					& \leq \frac{4|A|}{e^2}(\log_2 e)^2\,.
	\end{IEEEeqnarray}
	We have used the inequalities $ p(a|x)\log p(a|x) \leq 0$ , $H(A|X) \leq \log_2|A|$, as well as $p(a|x)(\log_2 p(a|x))^2 \leq \frac{4 }{e^2}(\log_2 e)^2$. This concludes the proof of the lemma.
\end{proof}

To complete the proof of Theorem~\ref{prop:rand_TF}, we need the following Hoeffding type concentration inequality for super-martingales, in addition to the two lemmae.

\begin{prop}[Equation (18) in \citep{ref:fgl2012}] \label{prop:ci1}
	Let $\left(T_i\right)$ be a sequence of random variables, with $i \in \{0,\cdots n\}$ that (a) satisfies the property of a supermartingale difference : $\esp{T_i|T_1^{i-1}}\leq 0$, for all $i$, and (b) is such that $T_i\leq \xi^+$ and $\esp{(T_i)^2|T_1^{i-1}} \leq V$, for all $i$, then 
	\begin{IEEEeqnarray}{rL}
		\Pr \left( \frac{1}{n} \sum_{i=1}^n T_i \geq t \right) \leq \epsilon_t
	\end{IEEEeqnarray}
	with $t = \sqrt{2V}\sqrt{\frac{\log(1/\epsilon_t)}{n}} + \frac{\xi^+}{3} \frac{\log(1/\epsilon_t)}{n}$.
\end{prop}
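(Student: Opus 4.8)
The plan is to prove this by the Cram\'er--Chernoff exponential method applied to the supermartingale $S_n = \sum_{i=1}^n T_i$, so that Proposition~\ref{prop:ci1} becomes a Bernstein/Freedman-type tail bound. Writing $\mathcal F_{i-1} = \sigma(T_1,\dots,T_{i-1})$ and fixing any $\lambda > 0$, Markov's inequality applied to $e^{\lambda S_n}$ gives $\Pr\!\left(\tfrac1n\sum_i T_i \geq t\right) \leq e^{-\lambda n t}\,\esp{e^{\lambda S_n}}$, and the whole problem reduces to controlling the conditional moment generating function of each increment.

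The key lemma is a one-sided bound on $\esp{e^{\lambda T_i}\mid \mathcal F_{i-1}}$. Setting $\psi(u) = e^u - 1 - u$ and using that $x \mapsto \psi(\lambda x)/x^2$ is increasing on $\R$, one has for every $x \leq \xi^+$ the elementary inequality $e^{\lambda x} \leq 1 + \lambda x + \frac{\psi(\lambda \xi^+)}{(\xi^+)^2}\,x^2$. Taking the conditional expectation and invoking the three hypotheses $\esp{T_i\mid\mathcal F_{i-1}}\leq 0$ (with $\lambda>0$), $T_i \leq \xi^+$, and $\esp{T_i^2\mid \mathcal F_{i-1}}\leq V$, together with $1+y\leq e^y$, yields $\esp{e^{\lambda T_i}\mid \mathcal F_{i-1}} \leq \exp\!\big(\tfrac{V}{(\xi^+)^2}\psi(\lambda \xi^+)\big)$. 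Iterating this over $i=n,n-1,\dots,1$ via the tower property (peeling off the last factor at each step) gives the deterministic, path-independent bound $\esp{e^{\lambda S_n}} \leq \exp\!\big(\tfrac{nV}{(\xi^+)^2}\psi(\lambda\xi^+)\big)$.

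Substituting into the Chernoff bound produces $\Pr(\tfrac1n\sum_i T_i \geq t)\leq \exp\!\big(-\lambda n t + \tfrac{nV}{(\xi^+)^2}\psi(\lambda\xi^+)\big)$, and I would then optimise over $\lambda>0$. The exact optimiser yields Bennett's function in the exponent; bounding it below by the Bernstein form $\frac{nt^2}{2(V+\xi^+ t/3)}$, setting the resulting exponent equal to $\log(1/\epsilon_t)$, solving the induced quadratic in $t$, and using $\sqrt{a+b}\leq\sqrt a+\sqrt b$ to separate the two contributions reproduces the stated error $t = \sqrt{2V}\sqrt{\log(1/\epsilon_t)/n} + \tfrac{\xi^+}{3}\log(1/\epsilon_t)/n$.

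The main obstacle is pinning down the precise constant $\xi^+/3$ in the linear part of $t$: a generic Bernstein inversion as sketched above is slightly lossy in the one-sided setting and tends to produce a larger coefficient, so matching Equation~(18) of \citep{ref:fgl2012} exactly requires their sharpened one-sided exponential estimate for supermartingales rather than the textbook Bennett/Bernstein bound. The remainder of the argument---the increment MGF lemma and the telescoping over the filtration---is routine and robust, relying on no i.i.d.\ structure, only on the supermartingale and conditional-variance hypotheses (a) and (b).
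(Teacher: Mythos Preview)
The paper does not prove this proposition at all: it is quoted verbatim as Equation~(18) of \cite{ref:fgl2012} and used as a black-box concentration input in the proof of Theorem~\ref{prop:rand_TF}. There is therefore no ``paper's own proof'' to compare your attempt against.

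That said, your sketch is the standard route to a Bernstein--Freedman tail bound for supermartingale differences and is essentially the argument of Fan--Grama--Liu: control the conditional MGF of each increment via the one-sided bound $e^{\lambda x}\leq 1+\lambda x+\psi(\lambda\xi^+)x^2/(\xi^+)^2$ for $x\leq\xi^+$, telescope along the filtration, and optimise the Chernoff exponent. Your own caveat is accurate: the naive Bernstein inversion gives a constant slightly worse than $\xi^+/3$ in the linear term, and matching the stated $t$ exactly requires the sharper one-sided estimate that \cite{ref:fgl2012} establishes. For the purposes of this paper, however, the precise constant is immaterial---any bound of the form $t=O(\sqrt{V\log(1/\epsilon_t)/n})+O(\xi^+\log(1/\epsilon_t)/n)$ would suffice for Theorem~\ref{prop:rand_TF} and for the protocol analysis, so your argument, even without the refinement, would be perfectly adequate here.
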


With this concentration inequality, we can finally prove Theorem~\ref{prop:rand_TF}.

\begin{proof}[Proof of Theorem~\ref{prop:rand_TF}]
	Let us define the random variables
	\begin{IEEEeqnarray}{rL}
		T_i = \xi(A_i,X_i) + \vect \gamma \cdot \vect \omega _i(W_i,\Lambda)) + \log \mu(A_i|X_i;W_i,\Lambda)\, ,
	\end{IEEEeqnarray}		
	for $i\in \{0, \cdots n\}$.
By the first two assumptions on the devices, eqs. (\ref{eq:ass1}) and (\ref{eq:ass2}), we can write
$E[T_i|W_i,\Lambda]=\vect\alpha+\vect\beta\cdot \vE_i(W_i,\Lambda)+\gamma \cdot \vect \omega _i(W_i,\Lambda))+H(\vE_i(W_i,\Lambda))$. Using the Lemmas \ref{prop:T_bound} and \ref{prop:V_bound}, we then find that $\esp{T_i|W_i,\Lambda} \leq 0$, $\esp{T_i^2|W_i,\Lambda}\leq V$, and $T_i \leq \xi^+$, for all $i$. Since the variables $T_1^{i-1}$ are a function of $W_i$ and $\Lambda$, we can apply the Hoeffding type bound in Proposition~\ref{prop:ci1}, which states that $\Pr \left( \frac{1}{n} \sum_{i=1}^n T_i \leq t \right) \geq 1-\epsilon_t$. Next, we rewrite the sum $\sum_i T_i$, using the definition of $T_i$ and using the following two relations: (1) $\bracket{\xi}_{A^nX^n} = \frac{1}{n}\sum_i \xi(A_i,X_i)$ and (2) $ \sum_i \log \mu(A_i|X_i;W_i,\Lambda) = \log \prod_i \mu(A_i|X_i;W_i,\Lambda) = \log (A^n|X^n,\Lambda)$. We find that
	\begin{IEEEeqnarray}{rL}
		 \Pr \left(
		 	- \tfrac{1}{n}\log \mu(A^n|X^n,\Lambda) 
		 	\geq \bracket{\xi}_{A^n,X^n} 
		 		+ \tfrac{1}{n}\sum_i \vect \gamma \cdot \vect \omega_i(W_i,\Lambda)
		 		- t \right)\geq 1-\epsilon_t \,.
	\end{IEEEeqnarray}	
At last, we combine this with the upper-bound on the average energy \eqref{eq:maxavg_protocol}, which states that $\Pr\left(\frac{1}{n}\sum_{i=1}^n \vomega_i(W_i,\Lambda)\preceq \vect{\omega}_\avg\right)\geq 1-\epsilon_{\vect \omega}$ and the fact $\vect \gamma \preceq 0$, to replace (probabilistically) $\tfrac{1}{n}\sum_i \vect \gamma \cdot \vect \omega_i$ by $\gamma\cdot \vect{\omega}_\avg$ . Using the bound $\Pr(A\cap B) \geq \Pr(A) + \Pr(B) -1$, we find that the inequality \eqref{eq:rand_TF} holds with a probability greater than $1- \epsilon_\omega - \epsilon_t$ as required.
\end{proof}

\subsection{Protocol and security proof}
\label{subsec:protocol_sec_proof}
We now use Theorem \ref{prop:rand_TF} to prove that the following protocol is sound.
\begin{algorithm}[H]
\caption{A protocol for randomness certification based on an energy constraint}
\label{box:protocol}
\begin{flushleft}
\begin{algorithmic}[1]
	\STATEx \textbf{Arguments}
		\STATEx \hspace{\algorithmicindent}
		- Number of measurement rounds  $n$.
		
		\STATEx \hspace{\algorithmicindent}
		- Binary input distribution $p(x)>0$.
		
		\STATEx \hspace{\algorithmicindent}
		- Energy thresholds $\vect \omega_{\pk}\in \R^2$ \eqref{eq:maxpk_protocol} and $\vomega_{\avg} \in \R^2$ \eqref{eq:maxavg_protocol} with $\epsilon_\omega>0$.
		
		\STATEx \hspace{\algorithmicindent}
		- Security parameters $\epsilon_t, \epsilon_m, \epsilon_{Ext} >0 $ and $\epsilon = \epsilon_t + \epsilon_m + \epsilon_{Ext} + \epsilon_\omega$.
		
		\STATEx \hspace{\algorithmicindent}
		- Trade-off function $\vect \alpha, \vect \beta, \vect \gamma \in \R^2$ (Definition~\ref{def:TF}), with $\xi(a,x)$ \eqref{eq:estimator} and $t,V,\xi^+$~\eqref{eq:error_term}. 
		
		\STATEx \hspace{\algorithmicindent}
		- Threshold $r$, such that $r - t \leq 1$.
		
		\STATEx \hspace{\algorithmicindent}
		- Strong extractor $\mathcal{E}$ with parameters $(n,l,\sigma_h,\sigma,\epsilon_{Ext})$ where the bound on the min-entropy is
		\begin{IEEEeqnarray}{rL}
			\sigma_h = n\left( r  - \sqrt{2V}\sqrt{\frac{\log(1/\epsilon_t)}{n}} - \frac{\xi^+}{3} \frac{\log(1/\epsilon_t)}{n} - \frac{\log(1/\epsilon_m)}{n} \right) \, .
		\end{IEEEeqnarray}	
	\STATEx
		
	\STATE Repeat steps [2-3] $n$ times, with $i \in \{1,\cdots n\}$;
	\STATE \hspace{\algorithmicindent} Generate input $X_i$;
	\STATE \hspace{\algorithmicindent} Use device and record output $A_i$;
	\STATE Determine whether $\bracket{\xi}_{A^n X^n} + \vect\gamma\cdot\vect \omega_\avg \geq r$;
	\STATE \hspace{\algorithmicindent} if not, abort;
	\STATE \hspace{\algorithmicindent} if yes, apply extractor $\mathcal{E}$ with uniform seed $S$: $K = \mathcal{E}(A^n,S)$;
\end{algorithmic}
\end{flushleft}
\end{algorithm}

Protocol~\ref{box:protocol} is a standard randomness generation protocol that first involves a test to verify if a chosen estimator $\bracket{\xi}_{A^n X^n} = \frac{1}{n} \sum_i \xi(A_i,X_i) \geq r$ is greater than some predetermined threshold $r$. Typically, the choice of estimator and of the threshold $r$ will be made by solving the optimization problem \eqref{eq:opti_dual_obj} using an expected behaviour $\vE$ for the device. This is based on some prior information, such the way it was designed or an estimation obtained by sampling the behaviour a finite number of times. If the test is passed, then by Theorem~\ref{prop:rand_TF}, we have a bound on the probability of occurrence of $A^n$ of the form $-\log \mu(A^n|X^n\Lambda)\gtrsim nr$, which holds with high probability, independently of how the device actually behaves and up to the error term $t$. Formally, this is expressed as a lower-bound on the smooth min-entropy of $A^n$, which is made precise in Proposition~\ref{lm:esupe_fail_esmaxprob} below.

Conditioned on the passing of the test, we apply a strong extractor $\mathcal{E}$ to the raw output string $A^n$, using a uniform seed $S$, to produce the key $K$. A strong extractor depends on five parameters $(n,l,\sigma_h,\sigma,\epsilon_{Ext})$, where $n$ is the length (in bits) of the input random string, $l$ is the length of the additional (and typically small) seed, $\sigma_h$ is a lower-bound on the min-entropy of the input random string, $\sigma$ is the length of the output string, and $\epsilon_{Ext}$ denotes how close the final string is to uniform (in trace  distance), see \cite{vadhan_pseudorandomness_2012}. There exist various constructions for strong extractors, which in the best case can extract about $\sigma\approx \sigma_h$ random bits, up to some corrections.

We show below, mostly following \cite{ref:kzb2017,ref:zkb2018}, that the resulting protocol is $\epsilon$-sound.

\begin{theorem}
	\label{prop:soundness}
	Let the distribution $\mu$ of $A^nX^n\Lambda$ satisfy the assumptions of section \ref{subsec:assumptions}, and assume valid arguments for Protocol~\ref{box:protocol}. Let $\Pass$ stand for the event $\bracket{\xi}_{A^n,X^n} + \vect \gamma \cdot \vomega_\avg \geq r$ and denote its probability $\kappa = \Pr(\Pass)$. Then the final string $K = \mathcal{E}(A^n,S)$ of Protocol~\ref{box:protocol} is $\epsilon/\kappa$-close in trace distance to a uniform random string independent of the seed ($S$), the inputs ($X^n$), and the adversary's information on the device ($\Lambda$):
	\begin{IEEEeqnarray}{rL}
		\label{eq:soundness}
		\TV \left( \mu_{[KSX^n\Lambda|\mathrm{Pass}]},\mathrm{Unif}_{K \mathcal{S}}\otimes\mu_{[X^n\Lambda|\mathrm{Pass}]}\right)
			\leq \epsilon_{Ext} + (\epsilon_\omega + \epsilon_t + \epsilon_m)/\kappa\leq \epsilon/\kappa\,.
	\end{IEEEeqnarray}
	In particular $\Pr(\mathrm{Pass})\times \TV \left( \mu_{[KSX^n\Lambda|\mathrm{Pass}]},\mathrm{Unif}_{K \mathcal{S}}\otimes\mu_{[X^n\Lambda|\mathrm{Pass}]}\right)\leq \epsilon$, i.e., this defines a $\epsilon$-sound randomness generation protocol for which the probability of both passing the test and deviating from an ideal distribution is guaranteed to be small, see \cite{ref:pm2014}.
\end{theorem}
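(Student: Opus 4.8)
The plan is to follow the standard ``estimate, then extract'' template: turn the single-shot surprisal bound of Theorem~\ref{prop:rand_TF} into a lower bound on the smooth conditional min-entropy of $A^n$ given $(X^n,\Lambda)$ \emph{conditioned on passing the test}, and then invoke the defining property of the strong extractor $\mathcal{E}$. Everything is classical here ($\Lambda$ is a random variable by the assumptions of Section~\ref{subsec:assumptions}), so no quantum-side-information subtleties arise and an ordinary strong seeded extractor suffices.

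First I would record the algebraic identity $n(r-t)=\sigma_h+\log(1/\epsilon_m)$, which follows directly from the definition of $\sigma_h$ in Protocol~\ref{box:protocol} together with the form of $t$ in \eqref{eq:error_term}. On the event $\Pass$ one has $\bracket{\xi}_{A^nX^n}+\vect\gamma\cdot\vomega_\avg\geq r$, so combining with Theorem~\ref{prop:rand_TF}, which guarantees $-\tfrac1n\log\mu(A^n|X^n,\Lambda)\geq \bracket{\xi}_{A^nX^n}+\vect\gamma\cdot\vomega_\avg-t$ with probability at least $1-\epsilon_\omega-\epsilon_t$, one gets
\[
\Pr\!\left(\Pass\ \wedge\ -\log\mu(A^n|X^n,\Lambda)<\sigma_h+\log(1/\epsilon_m)\right)\leq \epsilon_\omega+\epsilon_t
\]
(the requirement $r-t\leq1$ on the arguments only ensures this is non-vacuous, the per-round surprisal being at most $\log|A|=1$). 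This is exactly the hypothesis needed to conclude, via Proposition~\ref{lm:esupe_fail_esmaxprob}, that the sub-normalised state $\mu_{[A^nX^n\Lambda|\Pass]}$ has smooth conditional min-entropy $H_{\min}^{\epsilon'}(A^n|X^n\Lambda)\geq\sigma_h$ with $\epsilon'=(\epsilon_\omega+\epsilon_t+\epsilon_m)/\kappa$, where $\kappa=\Pr(\Pass)$.

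The one delicate point -- the content of Proposition~\ref{lm:esupe_fail_esmaxprob}, which I would either cite or prove here -- is that conditioning on $\Pass$, an event depending on $A^n$, can inflate the conditional probabilities $\mu(a^n|x^n,\lambda)$ when $\Pr(\Pass|x^n,\lambda)$ is small; this is precisely what the extra slack $\epsilon_m$ absorbs. From $\mu_{[A^nX^n\Lambda|\Pass]}$ one deletes (i) the triples with $-\log\mu(a^n|x^n,\lambda)<\sigma_h+\log(1/\epsilon_m)$, of $\mu_{[\cdot|\Pass]}$-mass $\leq(\epsilon_\omega+\epsilon_t)/\kappa$ by the displayed bound, and (ii) the pairs $(x^n,\lambda)$ with $\Pr(\Pass|x^n,\lambda)<\epsilon_m$, of $\mu_{[\cdot|\Pass]}$-mass $\leq\epsilon_m/\kappa$ since $\sum_{(x^n,\lambda):\,\Pr(\Pass|\cdot)<\epsilon_m}\mu(x^n,\lambda)\Pr(\Pass|x^n,\lambda)<\epsilon_m$. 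On what remains, $\mu_{[a^n,x^n,\lambda|\Pass]}=\mu(a^n,x^n,\lambda)/\kappa$ with $\mu(a^n|x^n,\lambda)\leq\epsilon_m 2^{-\sigma_h}$ and $\mu(x^n,\lambda)\leq\kappa\,\mu_{[x^n,\lambda|\Pass]}/\epsilon_m$, hence $\sum_{x^n,\lambda}\max_{a^n}\mu_{[a^n,x^n,\lambda|\Pass]}\leq 2^{-\sigma_h}$, i.e.\ the guessing probability is $\leq 2^{-\sigma_h}$; the two deletions together cost $\epsilon'$ in trace distance, which gives the claimed smooth min-entropy bound.

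Finally I would apply $\mathcal{E}$ with uniform seed $S$ to $A^n$: since $\mathcal{E}$ is a strong $(n,l,\sigma_h,\sigma,\epsilon_{Ext})$ extractor and $\mu_{[\cdot|\Pass]}$ has $H_{\min}^{\epsilon'}(A^n|X^n\Lambda)\geq\sigma_h$ against the classical side information $(X^n,\Lambda)$, the triangle inequality together with monotonicity of $\TV$ under the (channel) map ``apply $\mathcal{E}$'' yields
\[
\TV\!\left(\mu_{[KSX^n\Lambda|\Pass]},\ \mathrm{Unif}_{K\mathcal{S}}\otimes\mu_{[X^n\Lambda|\Pass]}\right)\leq \epsilon_{Ext}+\epsilon'=\epsilon_{Ext}+\frac{\epsilon_\omega+\epsilon_t+\epsilon_m}{\kappa}\leq\frac{\epsilon}{\kappa},
\]
the last inequality using $\kappa\leq1$ and $\epsilon=\epsilon_t+\epsilon_m+\epsilon_{Ext}+\epsilon_\omega$; this is \eqref{eq:soundness}, and multiplying through by $\kappa=\Pr(\Pass)$ gives the ``in particular'' statement of $\epsilon$-soundness. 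The main obstacle is the passage from the surprisal tail bound to the smooth min-entropy bound -- correctly handling the distortion of $\mu(a^n|x^n,\lambda)$ under $\Pass$-conditioning via $\epsilon_m$ and the low-pass-probability set; the remaining steps are bookkeeping, the chief care being to line up $\sigma_h$ with $n(r-t)-\log(1/\epsilon_m)$ and to track which errors are smoothing errors (scaled by $1/\kappa$) versus the extractor error.
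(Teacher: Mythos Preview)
Your proposal is correct and follows the paper's proof: invoke Theorem~\ref{prop:rand_TF} for a surprisal tail bound, upgrade it to a smooth conditional min-entropy bound on $\mu_{[\cdot|\Pass]}$ (Proposition~\ref{lm:esupe_fail_esmaxprob}), and finish with the strong extractor plus the triangle inequality. The only difference is in how you argue Proposition~\ref{lm:esupe_fail_esmaxprob}: the paper first proves the \emph{average} bound $H_{\min}^{(\epsilon_\omega+\epsilon_t)/\kappa}\geq n(r-t)-\log(1/\kappa)$ using only your deletion~(i) (with a renormalization via Lemma~2 of \cite{ref:kzb2017} that preserves the $(X^n,\Lambda)$-marginals, as required by Definition~\ref{def:smoothminentropy}), and then applies the Markov step of Lemma~\ref{prop:markov_argument} to pass to the \emph{worst-case} bound $\sigma_h$, whereas you perform both deletions at once. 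Two small points to tidy up: your deletion~(ii) alters the $(X^n,\Lambda)$-marginal, so to fit the paper's definition (and to conclude $\nu_{[X^n\Lambda]}=\mu_{[X^n\Lambda|\Pass]}$ in the final step) you should redistribute that mass, e.g.\ uniformly over $A^n$; and while you phrase your conclusion as a guessing-probability (average) bound, your own inequalities actually give the stronger per-$(x^n,\lambda)$ bound $\mu_{[a^n|x^n,\lambda,\Pass]}\leq 2^{-\sigma_h}$ on the retained set, which is exactly the worst-case input the extractor needs.
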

In the above statement, the trace or total variation distance is defined as $\TV (\mu,\nu) = \frac{1}{2}\sum_x |\mu(x)-\nu(x)|$, for two distributions $\mu$ and $\nu$ of a random variable $X$. 

The proof of Theorem~\ref{prop:soundness} is done in two steps. First, in Proposition~\ref{lm:esupe_fail_esmaxprob}, we characterise the randomness in the raw string $A^n$, by bounding on the smoothed $X^n\Lambda$-conditional min-entropy of $A^n$, after conditioning on $\Pass$. In Definition~\ref{def:smoothminentropy} we give precise definitions for two variants of the smoothed conditional min-entropy, which are related by Lemma~\ref{prop:markov_argument}. Finally, we use this to prove Theorem~\ref{prop:soundness}.

\begin{definition}
	\label{def:smoothminentropy}
	Let $\mu$ be a distribution of $AZ$. 
	
	The smooth \emph{average} conditional min-entropy $H^{\epsilon}_{\min,\mu}(A|Z)$ is the maximum $k$ for which there exists a distribution $\nu$ of $AZ$, ($i$) with the same marginals $\mu[Z] = \nu[Z]$, ($ii$) such that $\TV(\nu,\mu)\leq\epsilon$, and ($iii$) with $-\log_2 \esp{\max_{a}\nu(a|z)} \geq k$ for all $a,z$. 
	
	The smooth \emph{worst-case} conditional min-entropy $H^{u,\epsilon}_{\min,\mu}(A|Z)$ is the maximum $k$ for which there exists a distribution $\nu$ of $AZ$, (1) with the same marginals $\mu[Z] = \nu[Z]$, (2) such that $\TV(\nu,\mu)\leq\epsilon$ and (3) with $-\log_2 \nu(a|z) \geq k$ for all $a,z$. 
\end{definition}

Note that, in the present definition of the smooth average conditional min-entropy, the requirement of equal marginals is not standard practice, but it leads to a slightly better security parameters for the protocol, when one chooses to impose equal marginals in the definition of soundness. The \emph{average} and \emph{worst-case} variants of the smooth conditional min-entropy in Definition~\ref{def:smoothminentropy} are related by the following standard lemma.

\begin{lemma}[Lemma 5 \cite{ref:kzb2017}]
	\label{prop:markov_argument}
	Let $\mu$ be a distribution of $AZ$ with $H_{\min,\mu}^{\epsilon_1}(A|Z) \geq \sigma$ and let $\epsilon_2>0$, then $H_{\min,\mu}^{u,\epsilon_1 + \epsilon_2} \geq \sigma - \log(\frac{1}{\epsilon_2})$. 
\end{lemma}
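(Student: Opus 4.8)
The plan is to prove Lemma~\ref{prop:markov_argument} by a standard Markov-type argument on the worst-case probability $\nu(a|z)$, starting from the near-optimal $\nu$ that witnesses $H_{\min,\mu}^{\epsilon_1}(A|Z)\geq\sigma$. Let $\nu$ be a distribution of $AZ$ with $\nu[Z]=\mu[Z]$, $\TV(\nu,\mu)\leq\epsilon_1$, and $-\log_2\esp{\max_a\nu(a|Z)}\geq\sigma$, i.e. $\sum_z\nu(z)\max_a\nu(a|z)\leq 2^{-\sigma}$. The goal is to build a new distribution $\nu'$ with $\nu'[Z]=\mu[Z]$, $\TV(\nu',\mu)\leq\epsilon_1+\epsilon_2$, and $\nu'(a|z)\leq 2^{-(\sigma-\log_2(1/\epsilon_2))}=2^{-\sigma}/\epsilon_2$ for all $a,z$.

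First I would single out the ``bad'' set of $z$'s, namely $\mathcal{B}=\{z:\max_a\nu(a|z)>2^{-\sigma}/\epsilon_2\}$. By Markov's inequality applied to the nonnegative random variable $\max_a\nu(a|Z)$ under the marginal $\nu[Z]=\mu[Z]$, we get $\Pr_{\nu}[Z\in\mathcal{B}]=\sum_{z\in\mathcal{B}}\nu(z)\leq \dfrac{\sum_z\nu(z)\max_a\nu(a|z)}{2^{-\sigma}/\epsilon_2}\leq\dfrac{2^{-\sigma}}{2^{-\sigma}/\epsilon_2}=\epsilon_2$. Next I would define $\nu'$ by leaving the conditional distributions $\nu'(\cdot|z)=\nu(\cdot|z)$ untouched for $z\notin\mathcal{B}$, and for $z\in\mathcal{B}$ replacing $\nu(\cdot|z)$ by the uniform distribution on $A$ (which has max-probability $1/|A|\leq 2^{-\sigma}/\epsilon_2$ — here one should note that without loss of generality $2^{-\sigma}/\epsilon_2\geq 1/|A|$, since otherwise the claimed min-entropy bound $\sigma-\log_2(1/\epsilon_2)$ exceeds $\log_2|A|$ and the statement is vacuous as no distribution achieves it, so the lemma holds trivially); keep the marginal $\nu'[Z]=\mu[Z]$ fixed. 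Then by construction $\nu'(a|z)\leq 2^{-\sigma}/\epsilon_2$ for every $a,z$, giving property (3).

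It remains to control the total variation distance: $\TV(\nu',\mu)\leq\TV(\nu',\nu)+\TV(\nu,\mu)\leq\TV(\nu',\nu)+\epsilon_1$, so I need $\TV(\nu',\nu)\leq\epsilon_2$. Since $\nu$ and $\nu'$ share the same $Z$-marginal and differ only on the conditionals for $z\in\mathcal{B}$, we have $\TV(\nu',\nu)=\sum_{z\in\mathcal{B}}\nu(z)\,\TV\!\big(\nu'(\cdot|z),\nu(\cdot|z)\big)\leq\sum_{z\in\mathcal{B}}\nu(z)\leq\epsilon_2$, using that total variation distance between two probability distributions is at most $1$ and the Markov bound above. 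Property (1) holds since $\nu'[Z]=\mu[Z]$ by construction. Hence $\nu'$ witnesses $H_{\min,\mu}^{u,\epsilon_1+\epsilon_2}(A|Z)\geq\sigma-\log_2(1/\epsilon_2)$, which is the claim.

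**The main obstacle** is really just bookkeeping: ensuring the modified $\nu'$ still satisfies the equal-marginals condition (1) of Definition~\ref{def:smoothminentropy} while simultaneously capping the worst-case conditional, and handling the degenerate corner case where the target entropy exceeds $\log_2|A|$. Since this is a known result (attributed to \cite{ref:kzb2017}), one could alternatively just cite it; but the short self-contained argument above suffices.
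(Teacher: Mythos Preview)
Your proof is correct and is the standard Markov argument for this result. The paper does not actually supply its own proof of this lemma---it simply quotes it as Lemma~5 of \cite{ref:kzb2017}---so there is nothing in the paper to compare against; your construction (flag the bad $z$'s via Markov, replace those conditionals by the uniform distribution, bound the extra total-variation cost by the mass of the bad set) is precisely the argument in the cited reference.

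One small correction on your edge-case remark: if $2^{-\sigma}/\epsilon_2 < 1/|A|$ the conclusion would be \emph{false}, not vacuous, since the worst-case conditional min-entropy can never exceed $\log_2|A|$. However this regime cannot occur under the hypothesis: $H_{\min,\mu}^{\epsilon_1}(A|Z)\geq\sigma$ forces $\sigma\leq\log_2|A|$, and with $0<\epsilon_2\leq 1$ (the only meaningful range for a smoothing parameter) one gets $\sigma-\log_2(1/\epsilon_2)\leq\log_2|A|$ automatically, so the uniform replacement always satisfies the required cap and no special case is needed.
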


\begin{prop}
	\label{lm:esupe_fail_esmaxprob}
 	Under the same assumptions as in Theorem~\ref{prop:soundness}, the distribution $\mu_{\Pass} = \mu_{[A^nX^n\Lambda|\Pass]}$, conditioned on the passing of the test, admits the following bounds on the smoothed $X^n\Lambda$-conditional min-entropies of $A^n$.
 	\begin{IEEEeqnarray}{rL}
 		H^{(\epsilon_\omega+\epsilon_t)/\kappa}_{\min,\mu_{\Pass}}(A^n|X^n\Lambda) &\geq n(r -t)-\log_2\tfrac{1}{\kappa} 
 		\label{eq:entropy_bound_av}\\
 		H^{u,(\epsilon_\omega+\epsilon_t+\epsilon_m)/\kappa}_{\min,\mu_{\Pass}}(A^n|X^n\Lambda) &\geq n(r -t)-\log_2\tfrac{1}{\epsilon_m} = \sigma_h
 		\label{eq:entropy_bound_wc}\,,
	\end{IEEEeqnarray}
	where the error term $t$ is defined as in \eqref{eq:error_term}.
\end{prop}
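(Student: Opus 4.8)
The plan is to combine the probabilistic surprisal bound from Theorem~\ref{prop:rand_TF} with a standard conditioning-on-an-event argument, and then pass from the average to the worst-case smooth min-entropy via Lemma~\ref{prop:markov_argument}. First I would unpack the definitions. By Theorem~\ref{prop:rand_TF}, the event
\[
\mathcal{B}=\Big\{-\tfrac{1}{n}\log\mu(A^n|X^n,\Lambda)\ \geq\ \langle\xi\rangle_{A^nX^n}+\vect\gamma\cdot\vect\omega_\avg-t\Big\}
\]
has probability at least $1-\epsilon_\omega-\epsilon_t$. On the passing event $\Pass$ we have $\langle\xi\rangle_{A^nX^n}+\vect\gamma\cdot\vect\omega_\avg\geq r$, so on $\mathcal{B}\cap\Pass$ we get the clean pointwise bound $\mu(A^n|X^n,\Lambda)\leq 2^{-n(r-t)}$, i.e.\ $-\log_2\mu(a^n|x^n,\lambda)\geq n(r-t)$ there.

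Next I would build the smoothing distribution $\nu$. Define $\nu$ on $A^nX^n\Lambda$ to agree with $\mu_\Pass=\mu_{[A^nX^n\Lambda|\Pass]}$ on the set where $\mathcal B$ holds and to ``redistribute'' the remaining mass so that (i) the $X^n\Lambda$-marginal is unchanged and (ii) the conditional $\nu(a^n|x^n,\lambda)$ never exceeds $2^{-n(r-t)}$ — concretely, move the mass of $\mu_\Pass$ lying in $\mathcal B^c$ onto atoms already satisfying the bound, which is always possible because $r-t\leq 1$ ensures $2^{-n(r-t)}\geq 2^{-n}$, leaving enough ``room''. The total variation cost of this surgery is the mass removed, namely $\mu_\Pass(\mathcal B^c)=\Pr(\mathcal B^c\mid\Pass)\leq \Pr(\mathcal B^c)/\Pr(\Pass)=(\epsilon_\omega+\epsilon_t)/\kappa$. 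This $\nu$ witnesses $-\log_2\esp{\max_{a^n}\nu(a^n|x^n,\lambda)}\geq n(r-t)$; however, the definition of smooth \emph{average} min-entropy in Definition~\ref{def:smoothminentropy} additionally subtracts the $\log_2\frac1\kappa$ coming from the conditioning step on $\Pass$ (a factor at most $1/\kappa$ blow-up in conditional probabilities relative to the unconditioned estimate), which yields exactly the bound \eqref{eq:entropy_bound_av}: $H^{(\epsilon_\omega+\epsilon_t)/\kappa}_{\min,\mu_\Pass}(A^n|X^n\Lambda)\geq n(r-t)-\log_2\frac1\kappa$.

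Finally, \eqref{eq:entropy_bound_wc} follows by applying Lemma~\ref{prop:markov_argument} (the Markov/average-to-worst-case lemma of \cite{ref:kzb2017}) with $\epsilon_1=(\epsilon_\omega+\epsilon_t)/\kappa$ and $\epsilon_2=\epsilon_m/\kappa$: the worst-case smooth min-entropy at smoothing $\epsilon_1+\epsilon_2=(\epsilon_\omega+\epsilon_t+\epsilon_m)/\kappa$ is at least $H^{\epsilon_1}_{\min}-\log_2\frac{1}{\epsilon_2}$. Here the bookkeeping works out so the two $\log_2\frac1\kappa$ terms (one from \eqref{eq:entropy_bound_av}, one hidden inside the choice $\epsilon_2=\epsilon_m/\kappa$ giving $-\log_2\frac{\kappa}{\epsilon_m}$) cancel, leaving $n(r-t)-\log_2\frac1{\epsilon_m}=\sigma_h$ as claimed.

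The main obstacle I anticipate is the smoothing surgery: one must verify carefully that the redistributed distribution $\nu$ can be chosen with \emph{exactly} the same $X^n\Lambda$-marginal as $\mu_\Pass$ while simultaneously capping every conditional at $2^{-n(r-t)}$, and that the TV cost is genuinely bounded by the removed mass rather than twice it. The condition $r-t\leq 1$ among the protocol arguments is precisely what guarantees feasibility of this capping (each conditional slice $\mu_\Pass(\cdot|x^n,\lambda)$ has total mass $1\leq 2^n\cdot 2^{-n(r-t)}$, so it fits under the cap after rearrangement). The rest — converting the high-probability statement of Theorem~\ref{prop:rand_TF} into a pointwise bound on a large-probability set, and tracking the $1/\kappa$ factors through conditioning — is routine once this construction is pinned down, and the average-to-worst-case passage is a black-box invocation of Lemma~\ref{prop:markov_argument}.
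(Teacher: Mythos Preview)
Your overall strategy matches the paper's: combine Theorem~\ref{prop:rand_TF} with a conditioning argument, build a smoothing distribution by excising the bad set, bound the TV cost by $\Pr(\mathcal B^c)/\kappa$, and then invoke Lemma~\ref{prop:markov_argument} with $\epsilon_2=\epsilon_m/\kappa$ so that the two $\log_2\tfrac1\kappa$ terms cancel. The TV bound and the worst-case step are correct.

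There is, however, a genuine gap in how you obtain the $-\log_2\tfrac1\kappa$ term in \eqref{eq:entropy_bound_av}. The event $\mathcal B$ gives a bound on the \emph{unconditioned} probability $\mu(a^n|x^n,\lambda)\leq 2^{-n(r-t)}$. But your $\nu$ is supposed to approximate $\mu_\Pass$, whose conditionals on the good set are $\mu_\Pass(a^n|x^n,\lambda)=\mu(a^n|x^n,\lambda)/\kappa_z$ with $\kappa_z=\mu(\Pass|x^n,\lambda)$. Since $\kappa_z$ can be arbitrarily small for some $(x^n,\lambda)$, you \emph{cannot} set $\nu=\mu_\Pass$ on the good set and simultaneously cap $\nu(a^n|x^n,\lambda)$ at $2^{-n(r-t)}$; the correct per-slice cap is $2^{-n(r-t)}/\kappa_z$. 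Your parenthetical ``a factor at most $1/\kappa$ blow-up'' is thus wrong (the blow-up is $1/\kappa_z$, not uniformly $1/\kappa$), and Definition~\ref{def:smoothminentropy} does not ``additionally subtract'' anything.

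The actual mechanism, as in the paper's proof, is that the \emph{average} min-entropy only requires bounding $\esp{\max_{a^n}\nu(a^n|Z)}$, and using $\mu(z|\Pass)=\mu(z)\kappa_z/\kappa$ one gets
\[
\sum_z \mu(z|\Pass)\,\frac{2^{-n(r-t)}}{\kappa_z}
=\frac{2^{-n(r-t)}}{\kappa}\sum_z \mu(z)
=\frac{2^{-n(r-t)}}{\kappa}\,.
\]
This collapse of the $z$-dependent caps under the expectation is where the $-\log_2\tfrac1\kappa$ genuinely comes from. With this correction (and the feasibility of the redistribution now reading $2^{-n(r-t)}/\kappa_z\geq 2^{-n}$, which is automatic since $\kappa_z\leq 1$), the rest of your argument goes through exactly as you describe.
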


\begin{proof}
	To simplify the notation, we write in this proof $A=A^n$ and $Z=(X^n,\Lambda)$. We first show the bound on the \emph{average} smooth conditional min-entropy \eqref{eq:entropy_bound_av}. We construct a distribution $\nu$ of $AZ$ that witnesses the claim $\esp{ {\max}_a \nu(a|x)} \leq 2^{-n(r-t)}/\kappa$ as follows. We define $\nu(az) = \mu(z|\Pass)\nu_z(a)$, so that it has the same $Z$ marginals as $\mu_{\Pass}$. The distribution $\nu_z(A)$ is obtained from a subnormalized distribution $\tilde \nu_z(A)$ defined as 
	\begin{IEEEeqnarray}{rL}
		\tilde \nu_z(a) = \mu(a|z,\Pass) \knuth{\phi(az)},
	\end{IEEEeqnarray}
	where $\knuth{\rho}$ for a logical expression $\rho$ denotes the $\{0,1\}$-valued function evaluating to 1 iff $\rho$ is true, and where $\phi(az)$ is the logical expression $-\frac{1}{n}\log \mu(a|z)\geq \bracket{\xi}_{a,z} + \vect\gamma\cdot\vect\omega_\avg-t$ or equivalently $\mu(a|z)\leq 2^{-n(\bracket{\xi}_{a,z} + \vect\gamma\cdot\vect\omega_\avg-t)}$. In other words, $\tilde \nu_z(a)$ is the conditional distribution $\mu(a|z,\Pass)$ but with the bad events removed. From Theorem~\ref{prop:rand_TF}, we have that $\mu(\phi) = \sum_{a,z} \mu(a,z) \phi(az) \geq 1 - \epsilon_t-\epsilon_\omega$. We also denote $P(az)$ the logical expression $\bracket{\xi}_{a,z} + \vect \gamma \cdot \vomega_\avg \geq r$, that establishes if the test is passed or not for a given input and output string. It satisfies $\mu(P) = \mu(\Pass) = \kappa$. We can then derive the following upper-bound
	\begin{IEEEeqnarray}{rL}
		\tilde \nu_z(a) &= \mu(a|z,\Pass) \knuth{\phi(az)}\\
			&= \frac{\mu(a,z)\knuth{P(a,z)}}{\mu(\Pass,z)} \knuth{\phi(az)}\\
			&= \frac{\mu(a|z)}{\mu(\Pass |z)}\knuth{\phi(az)}\knuth{P(a,z)}\\			
			&\leq  2^{-n(r-t)} \frac{1}{\kappa_z}
	\end{IEEEeqnarray}
where $\kappa_z = \mu(\Pass|z)$. Note that $2^{-n(r-t)}/\kappa_z\geq 2^{-n(r-t)}\geq 2^{-n}$, we can thus apply Lemma~2 in \cite{ref:kzb2017} to obtain the distributions $\nu_z(A)$ such that $\nu_z(A)\geq \tilde\nu_z(A)$, $\nu_z(A)\leq 2^{-n(r-t)}/\kappa_{z}$, and $\TV(\nu_z,\mu_{[A|z,\Pass]})\leq 1-w_{z}$, where $w_z = W(\tilde \nu_z(A))$ is the weight of the subnormalized distribution $\tilde \nu_z(A)$. 

As stated above, we can now define $\nu(az)=\nu_{z}(a)\mu(z|\Pass)$. Using an expression for the TV distance of distributions with same maginals (Equation~2 in \cite{ref:kzb2017}), we find
\begin{IEEEeqnarray*}{rL}
	\TV \left( \nu,\mu_{\Pass} \right)
    	&= \sum_{z}\TV \left( \nu_{z}, \mu_{[A|z,\Pass]}\right) \mu(z|\Pass)\\
    	&\leq \sum_{z}(1-w_{z})\mu(z|\Pass)\\
    	&= 1-\sum_{z} W(\tilde\nu_z(A))\mu(z|\Pass)\\
    	&= 1-\sum_{z}\sum_{a}\frac{\mu(az)}{\mu(z|\Pass)\mu(\Pass)}\knuth{\phi(az)}\knuth{P(az)}\mu(z|\Pass)\\
    	&= 1-\sum_{z}\sum_{a}\mu(az)\knuth{\phi(az)}\knuth{P(az)}/\mu(\Pass)\\
    	&= 1- \mu(\phi|\Pass)\\
    	&\leq \mu(\bar \phi)/\mu(\Pass)\\
    	&\leq (\epsilon_\omega+\epsilon_t)/\kappa
    	\label{eq:lm:esupe_fail_esmaxprob_1}
\end{IEEEeqnarray*}
For the average maximum probability of $\nu$, we get
\begin{IEEEeqnarray}{rL}
    \esp{{\max}_a \nu(a|Z)}
    	&=\sum_{z}\mu(z|\Pass)\max_{a}\nu_{z}(a)\notag\\
    	&\leq 2^{-n(r-t)}\sum_{z}\mu(z|\Pass)/\kappa_{z}\notag\\
    	&= 2^{-n(r-t)}\sum_{z}\mu(z)/\kappa = 2^{-n(r-t)}/\kappa,
\end{IEEEeqnarray}
which establishes that $H^{(\epsilon_\omega+\epsilon_\delta)/\kappa}_{\min,\mu_{\Pass}}(A|Z) \geq n(r-t)-\log_2\frac{1}{\kappa}$.

	We now treat the \emph{worst-case} smooth conditional min-entropy. Using Lemma~\ref{prop:markov_argument} with $\epsilon_1 = (\epsilon_t + \epsilon_\omega)/\kappa$ and $\epsilon_2 = \epsilon_m/\kappa$, we deduce that
	\begin{IEEEeqnarray}{rL}
		H_{min,\mu_{\Pass}}^{u,(\epsilon_\omega + \epsilon_t + \epsilon_m)/\kappa}(A|Z) \geq n(r-t)-\log_2\tfrac{1}{\epsilon_m} = \sigma_h\, ,
	\end{IEEEeqnarray}
	where we removed the dependence of the amount of entropy on $\kappa$. 
\end{proof}

The Proposition~\eqref{lm:esupe_fail_esmaxprob} allows us to complete the security proof of Theorem~\ref{prop:soundness}.

\begin{proof}[Proof of Theorem~\ref{prop:soundness}]
	Using the bound on the worst-case smooth conditional min-entropy \eqref{eq:entropy_bound_wc}, we deduce that there must exist a distribution $\nu$  of $A^nX^n\Lambda$ that satisfies $\nu_{[X^n\Lambda]} = \mu_{[X^n\Lambda|\Pass]}$, with $\TV(\nu_{[A^nX^n\Lambda]},\mu_{[A^nX^n\Lambda|\Pass]})\leq(\epsilon_\omega + \epsilon_t + \epsilon_m)/\kappa$ and $-\log\max_{a^n}\nu(a^n|x^n \lambda) \geq \sigma_h$, for all $x^n,\lambda$. Using a strong extractor $\mathcal{E}:\{0,1\}^n \times \{0,1\}^l \rightarrow \{0,1\}^\sigma$, which satisfies the extractor constraints with entropy $\sigma_h$ and security parameter $\epsilon_{Ext}$, we find that, for all $x^n,\lambda$, $\TV(\nu[KS|x^n,\lambda],\mathrm{Unif}_{KS}) \leq \epsilon_{Ext}$, where $K = \mathcal{E}(A^n,S)$ is the final key and $S$ a uniform seed. Since we have equal marginals $\nu_{[X^n\Lambda]} = \mu_{[X^n\Lambda|\Pass]}$, we can extend this to
	\begin{IEEEeqnarray}{rL}
		\TV(\nu_{[KSX^n\Lambda]},\mathrm{Unif}_{KS}\otimes \mu_{[X^n\Lambda|\Pass]}) \leq \epsilon_{Ext}.
	\end{IEEEeqnarray}
	On the other hand, by the data processing inequality, $\TV(\mu_{[KSX^n\Lambda|\Pass]},\nu_{[KSX^n\Lambda]}) \leq \TV(\nu_{[A^nX^n\Lambda]},\mu_{[A^nX^n\Lambda|\Pass]}) \leq (\epsilon_\omega + \epsilon_t + \epsilon_m)/\kappa$. We conclude by the triangular inequality and find that
	\begin{IEEEeqnarray}{rL}
		\TV(\mu_{[K S X^n\Lambda|\Pass]},\mathrm{Unif}_{K \mathcal{S}}\otimes\mu_{[X^n\Lambda|\Pass]})
			\leq \epsilon_{Ext} + (\epsilon_\omega + \epsilon_t + \epsilon_m)/\kappa\,.
	\end{IEEEeqnarray}
\end{proof}

\section{Conclusion}
\label{sec:conlusion}

In this paper, we have performed a complete analysis of a QRNG protocol based on the semi-device-independent scheme that was introduced in \cite{ref:correnergy}. Our results have been used in the experimental implementation of such a QRNG that was recently reported in \cite{ref:maxavQRNG}. With respect to previous semi-device-independent QRNG proposals, we have presented an efficient finite-statistic analysis that takes into account arbitrary shared randomness, statistical fluctuations of the devices, and memory effects, and which is based on a natural, physical hypothesis -- the energy constraints.

Our analysis implicitly assumes that the device has no quantum memory, a reasonable and realistic assumption in the semi-device-independent setting and given the status of current technology. This assumption appears in two places in our analysis.

First, in the fact that the source and the measurement device are not allowed to share prior entanglement. This is used in Section~\ref{sec:sdp} to characterize the set $\mathcal{Q}$ of quantum behaviours, and thus also in the computation of the entropy bounds in Section~\ref{sec:entropy}. Preliminary numerical explorations that we have carried out show that allowing entanglement would enlarge the quantum set, resulting in slightly lower entropy bounds. Shared entanglement could actually arise quite naturally in set-ups where the source sends a local oscillator to the measurement device, such as in Figure~\ref{fig:BPSK_scheme}, and where this mode is slightly entangled with the signal states. It would thus be interesting to generalize our results in this direction.

The second place where we implicitly assume that the device has no quantum memory is in the randomness analysis of Section~\ref{sec:security_proof}, where we consider an adversary with classical-side information, i.e., not entangled with the internal quantum systems of the device. We believe that it should be possible to generalize the randomness estimation techniques against quantum-side information introduced in \cite{knill_quantum_2018} to our energy constrained setting.

The above open questions aim at reducing the assumptions used to estimate the randomness produced of the specific scheme introduced in \cite{ref:correnergy} with binary inputs and outputs. Another direction for future research would be to consider more general randomness generating schemes based on energy constraints. In particular, a first natural generalisation would be to increase the number of outcomes. This is especially useful for implementations based on a homodyne measurement, such as the BPSK implementation in Figure~\ref{fig:BPSK_scheme}. In the present analysis the continuous measurement result has to be binned into positive/negative values, this works well, but a finer discretization of the quadrature would yield more randomness. Note that, in such a semi-DI analysis, one would also require more inputs as follows from the results of \cite{ref:ibn2019}. It would be interesting to know if there is a maximal amount of randomness that can be certified under an energy assumption in the limit of an infinite number of inputs and outputs. 

The randomness analysis of a energy-constrained QRNG protocols that we have introduced in Section~\ref{sec:security_proof} is generic and would apply to any scheme for which one can compute Trade-off Functions. The introduction of new protocols with more inputs and outputs would thus merely require a characterization of the corresponding quantum set and a corresponding way to compute Trade-off Functions, i.e., a modification of Sections~\ref{sec:sdp} and \ref{sec:entropy}. Unfortunately, there is no systematic way to do this for semi-DI scenarios, unlike in Bell-scenario, where one can resort to the NPA hierarchy \cite{ref:npa2008}. A possible approach would be to extend the mapping between our semi-device-independent scenario and the standard CHSH scenario presented in Appendix~\ref{sec:ent} to other scenarios. Note that if one has a semidefinite characterisation of the quantum set, then one can readily use the approximation algorithm of Section~\ref{sec:entropy} (extended to more outputs) to compute Trade-off Functions. 

Finally, the generation of certified randomness is one of the most immediate task to consider in a DI or semi-DI setup. It would be interesting to design and prove the security of more complex semi-DI protocols based on energy constraints, such as quantum key distribution. 	

\section*{Acknowledgements}

We thank Yanbao Zhang for interesting discussions.
We acknowledge support from the EU Quantum Flagship project QRANGE.
T.V.H. is supported by a FRIA grant from the Fond National de la Recherche Scientifique (Belgium). SP is a Senior Research Associate of the Fonds de la Recherche Scientifique - FNRS.

\appendix
\section*{Appendix}
\addcontentsline{toc}{section}{Appendices}
\renewcommand{\thesubsection}{\Alph{subsection}}

\subsection{Mapping to a Bell scenario}\label{sec:ent}

We now provide an explicit mapping between our prepare-and-measure scenario (with an energy assumption) and a Bell scenario (with a no-communication assumption). This provides an alternative explanation for the appearance of the SDP constraint (\ref{eq:sdp_char}) in our context.

Consider a standard Bell scenario, see Figure~\ref{fig:eb_setup}, with two binary measurement per party, characterized by the four correlators $\bracket{ A_x B_y } = \Tr[\rho_{AB} A_x B_y]$ for $x,y\in\{1,2\}$. Here $x$ and $y$ denote the two possible measurements by Alice and Bob, $A_x$ and $B_y$ are the corresponding quantum observables (with $A_x^2=B_y^2=I)$ and $\rho_{AB}$ is a bipartite state shared between Alice and Bob. We denote a tuple $\langle \mathbf{AB}\rangle=\left(\langle A_1B_1\rangle,\langle A_1B_2\rangle,\langle A_2B_1\rangle,\langle A_2B_2\rangle \right)$ specifying a value for each of the four correlators as a Bell behaviour and denote $\mathcal{Q}_{Bell}$ the set of all quantum Bell behaviours. 
\begin{figure}[h]
	\centering
	\includegraphics[scale=1]{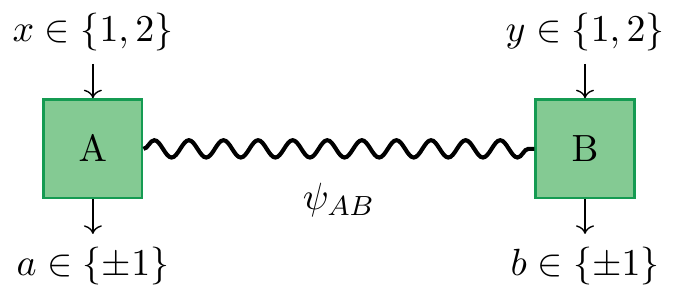}
	\caption{Bell scenario based on a no-communication assumption. We show that the correlations in this scenario are closely related to the ones in the prepare-and-measure scenario based on an energy assumption.\label{fig:eb_setup}}
\end{figure}

\begin{prop}
\label{theorem:isomorphism}
A prepare-and-measure behaviour $(\vect{E},\vect{\omega})$ is in $\setQ$ if and only if there exist a Bell behaviour $\bracket{\mathbf{AB}} \in \setQ_{Bell}$ such that
	\begin{IEEEeqnarray*}{rL}
		\IEEEyesnumber*
		\label{eq:isom}
		\IEEEyessubnumber*
		\langle A_x B_1 \rangle &= E_x\\
		\langle A_x B_2 \rangle &\leq 2\omega_x-1.
	\end{IEEEeqnarray*}
\end{prop}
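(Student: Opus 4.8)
The plan is to reduce both directions to the Bloch-sphere parametrisation already established for our scenario in Lemma~\ref{lemma1} and the Corollary following Theorem~\ref{prop:sdp_char}, and to a matching parametrisation of the CHSH quantum set. The key observation is that the SDP matrix $\Gamma$ in \eqref{eq:sdp_char} is, up to the substitutions $u=\vn_1\cdot\vn_2$, $E_x=\vn_x\cdot\vm$, $2\eta_x-1=\vn_x\cdot\vect{k}$, $v=\vm\cdot\vect{k}$, exactly the $4\times4$ Gram matrix that characterises $\setQ_{Bell}$: by the Tsirelson/NPA characterisation of the CHSH box, $\bracket{\mathbf{AB}}\in\setQ_{Bell}$ iff there exist unit vectors $\vect{a}_1,\vect{a}_2,\vect{b}_1,\vect{b}_2$ with $\bracket{A_xB_y}=\vect{a}_x\cdot\vect{b}_y$, i.e.\ iff the corresponding Gram matrix of $(\vect{a}_1,\vect{a}_2,\vect{b}_1,\vect{b}_2)$ is positive semidefinite. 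So at the level of Gram matrices, both sets are cut out by the same PSD condition on a symmetric matrix with unit diagonal, and the map \eqref{eq:isom} is precisely the identification $\vect{a}_x\leftrightarrow\vn_x$, $\vect{b}_1\leftrightarrow\vm$, $\vect{b}_2\leftrightarrow\vect{k}$, together with the relaxation of the $(3,4)$ and $(4,4)$-related entries (the correlator $\bracket{A_xB_2}$ is allowed to be anything $\le 2\omega_x-1$, which just says $\eta_x\le\omega_x$).

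First I would prove the ``only if'' direction. Take $(\vE,\vomega)\in\setQ$. By the Corollary to Theorem~\ref{prop:sdp_char}, it has a two-dimensional representation \eqref{eq:bloch_mapping} with unit vectors $\vn_1,\vn_2,\vect{k}$ and $|\vm|\le 1$ such that $E_x=\vn_x\cdot\vm$ and $2\omega_x-1=\vn_x\cdot\vect{k}$ — here I use that $O$ is a rank-one projector in the plane, so $\Tr[\rho_x O]=\tfrac12(1+\vn_x\cdot\vect{k})$ can be taken to equal $\omega_x$ on the boundary, and in general $\eta_x\le\omega_x$ suffices for the inequality in \eqref{eq:isom}. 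Embedding these three $\R^3$-vectors in $\R^4$ and, if $|\vm|<1$, replacing $\vm$ by a genuine unit vector $\vm'$ in a fourth dimension chosen so that $\vn_x\cdot\vm'=\vn_x\cdot\vm$ still holds (possible by adding a component orthogonal to $\sspan\{\vn_1,\vn_2,\vect{k}\}$), I obtain unit vectors $\vect{a}_x:=\vn_x$, $\vect{b}_1:=\vm'$, $\vect{b}_2:=\vect{k}$. By Tsirelson's theorem these define a quantum Bell behaviour $\bracket{\mathbf{AB}}\in\setQ_{Bell}$ with $\bracket{A_xB_1}=\vect{a}_x\cdot\vect{b}_1=E_x$ and $\bracket{A_xB_2}=\vect{a}_x\cdot\vect{b}_2=\vn_x\cdot\vect{k}=2\omega_x-1\ge\bracket{A_xB_2}$ after possibly tilting $\vect b_2$, so \eqref{eq:isom} holds (with the inequality, which is what is asked). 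Conversely, given a Bell behaviour in $\setQ_{Bell}$ satisfying \eqref{eq:isom}, Tsirelson gives unit vectors with $\bracket{A_xB_y}=\vect{a}_x\cdot\vect{b}_y$; set $u=\vect a_1\cdot\vect a_2$, $E_x=\vect a_x\cdot\vect b_1$, $2\eta_x-1=\vect a_x\cdot\vect b_2$ (so $\eta_x=\tfrac12(1+\bracket{A_xB_2})\le\omega_x$ by \eqref{eq:isom}), and $v=\vect b_1\cdot\vect b_2$; the Gram matrix of $(\vect a_1,\vect a_2,\vect b_1,\vect b_2)$ is then a PSD matrix of the exact form \eqref{eq:sdp_char}, so $(\vE,\vomega)\in\setQ$ by Theorem~\ref{prop:sdp_char}.

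The main obstacle, and the step needing the most care, is the bookkeeping around the inequality in \eqref{eq:isom}: one must check that replacing the exact energy equality $\Tr[\rho_x O]=\omega_x$ by the inequality $\eta_x\le\omega_x$ is compatible in both directions, and that when $|\vm|<1$ (non-extremal behaviour) the fourth-dimensional padding of $\vm$ into a unit vector does not disturb the inner products $\vn_x\cdot\vm$ or force an undesired change in $\vn_x\cdot\vect{k}$. This is handled by noting that $\sspan\{\vn_1,\vn_2,\vect{k}\}$ is at most three-dimensional, leaving room in $\R^4$, together with the footnote argument (below eq.~\eqref{eq:dec_M}) that a sub-unit measurement vector is a convex combination of unit-vector measurements — so on the Bell side one uses the convexity of $\setQ_{Bell}$. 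I would also remark, for completeness, that one should verify Tsirelson's characterisation applies in the precise form ``$\bracket{A_xB_y}=\vect a_x\cdot\vect b_y$ for unit vectors $\vect a_x,\vect b_y$ iff the behaviour is quantum,'' which is standard for the two-input two-output correlator scenario and is exactly Lemma 13 of \cite{ref:npa2008} invoked in the proof of Corollary~\ref{th:closedform}.
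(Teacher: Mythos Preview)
Your argument is correct but follows a genuinely different route from the paper's. The paper gives an \emph{explicit, constructive} mapping between quantum realisations: from an extremal prepare-and-measure behaviour in the qubit form of Lemma~\ref{lemma1}, it builds a CHSH realisation on the maximally entangled state $\ket{\phi_+}$ with $A_x=\vn_x\cdot\vsigma$, $B_1=\vm\cdot\vsigma$, $B_2=\vect{k}\cdot\vsigma$, and invokes the swap identity $\bracket{A_xB_y}_{\phi_+}=\tfrac12\Tr[A_xB_y]=\Tr[\rho_xB_y]$ to read off \eqref{eq:isom}; for the converse it uses Tsirelson's result that extremal points of $\setQ_{Bell}$ are realised by $\ket{\phi_+}$ and traceless qubit observables, and sets $\rho_x=(1+A_x)/2$, $M=B_1$, $O=(1+B_2)/2$. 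You instead stay at the abstract Gram-matrix level: you match the SDP condition \eqref{eq:sdp_char} already proved for $\setQ$ to Tsirelson's unit-vector characterisation of $\setQ_{Bell}$, pad the sub-unit $\vm$ into a fourth dimension orthogonal to $\vn_1,\vn_2,\vect{k}$, and read off $\eta_x\le\omega_x$ directly. Your route is more economical (it reuses Theorem~\ref{prop:sdp_char} wholesale and sidesteps the extremal-point convexity bookkeeping), while the paper's route delivers the explicit physical dictionary between the two scenarios, which is the advertised point of the appendix. One presentational remark: the clause ``$\ldots=2\omega_x-1\ge\langle A_xB_2\rangle$ after possibly tilting $\vect b_2$'' is garbled and unnecessary; once $\langle A_xB_2\rangle=\vn_x\cdot\vect{k}$ and $\Tr[\rho_xO]=\tfrac12(1+\vn_x\cdot\vect{k})\le\omega_x$, the inequality in \eqref{eq:isom} is immediate without any tilting.
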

 The trick to prove Proposition~\ref{theorem:isomorphism} is to view $O$ as a second observable on Bob's side. 

\begin{proof}
To show the equivalence between the sets $\setQ$ and $\setQ_{Bell}$ through the above mapping, it is sufficient to consider extremal behaviours.
Starting from a prepare-and-measure extremal behaviour $(\vE,\vect{\omega})$, we first show that there exists a Bell behaviour satisfying the relations \eqref{eq:isom}. For this, consider a representation for $(\vE,\vect{\omega})$ as in Lemma~\ref{lemma1}. Let $\ket{\phi_+}=\frac{1}{\sqrt{2}}\left(\ket{00}+\ket{11}\right)$ be the maximally entangled two qubit state and let us define the following measurements: $A_x=2\rho_x- 1=\vect{n}_x\cdot\vsigma$, $B_1=M=\vect{m}\cdot\vsigma$ and $B_2=2H-1=\vect{\omega}\cdot\vsigma$. The state $\ket{\phi_+}$ and the measurements $A_x$, $B_y$ define a quantum representation of a Bell behaviour. We find
	\begin{equation}	
		\label{eq:isomorphism}
		\bracket{ A_x B_y }_{\phi_+}= \frac{1}{2} \Tr[A_x B_y]= \Tr[\rho_x B_y]
	\end{equation}
	The first equality follows from the so-called swap trick and the second from the property $\Tr[B_y]=0$. This entails that $\bracket{A_x B_1}=E_x$ and $\bracket{A_x B_2}=2\Tr[\rho_x O]-1 \leq 2 \omega_x -1$, as in (\ref{eq:isom})
	
	For the proof of the converse, we use a celebrated result by Tsirelson \cite{ref:t1987} about extremal correlations in the quantum set $\mathcal{Q}_{Bell}$ : if $\bracket{\mathbf{AB}}$ is an extremal point in the (convex) set $\mathcal{Q}_{Bell}$, then it can be realized with a maximally entangled two-qubit state $\ket{\phi_+}$ and four qubit measurements $A_x,B_y$ with $\Tr[A_x]=\Tr[B_y]=0$ and $A_x^2=B_y^2=\id$. Now consider the prepare-and-measure qubit strategy defined by $\rho_x=(1+A_x)/2$, $M=B_1$, $O=(1+B_2)/2$. We have that $\rho_x$ and $O$ are rank-1 projectors hence they define, respectively, valid pure states and an energy operator. We find that
\begin{equation}
E_x=\Tr[\rho_x M]=\Tr[A_x B_1]=\langle A_xB_1\rangle
\end{equation}
and
\begin{equation}
2\omega_x-1 = \Tr[\rho_x (2H-1)] = \Tr[A_x B_2]=\langle A_xB_1\rangle\,.
\end{equation}
This leads to a valid prepare-and-measure strategy satisfying the relations \eqref{eq:isom}.
\end{proof}

Tsirelson showed that a Bell behaviour $\langle \mathbf{AB}\rangle$ is quantum if and only if there exist two real numbers $u,v$ such that
\begin{IEEEeqnarray}{rL}
	\label{eq:sdp_bell}
		\Gamma_{Bell}=\left(\begin{IEEEeqnarraybox}[][c]{c?c?c?c}
			1 & u & \langle A_1B_1\rangle &  \langle A_1B_2\rangle\\
			  & 1 &	\langle A_2B_1\rangle& \langle A_2B_2\rangle\\
			  &   & 1      & v\\
			  &   &        & 1
		\end{IEEEeqnarraybox}\right) \succeq 0	   \,.
\end{IEEEeqnarray}
Theorem~1 can then also be viewed as a consequence of this SDP characterization and the above mapping.

Interestingly, under this mapping, there is also a direct link between the classical set in our prepare-and-measure scenario and the classical set in the standard Bell scenario. Indeed, the two linear inequalities $|E_1 - E_2|\leq 2(\omega_1 + \omega_2)$ that bound the classical set \cite{ref:correnergy} are equivalent to the two CHSH inequalities $\pm(\bracket{A_1 B_1} -  \bracket{A_2 B_1}) - \bracket{A_1 B_2} - \bracket{A_2 B_2}\leq 2$ in the space of Bell correlators.

\subsection{Properties of the optimisation problem}

\label{sec:apprendix}

Let $\mathcal{S} \subset \R^{\dim(S)}$ be a (non-empty compact) convex set, let $f$ be a  continuous function over $\mathcal{S}$ and consider the following optimisation problem:
\begin{IEEEeqnarray}{R'r'L}
	\label{eq:opti_gen}
	\IEEEyesnumber
	\IEEEyessubnumber*
	{f^\star}(\vect x_0)=&
		 \min_{\{ \vect x^\lambda, p(\lambda) \} }
			& {\sum}_\lambda p(\lambda) f(\vect x^\lambda) \\
		& \text{subject to }
			& {\sum}_\lambda p(\lambda) \vect x^\lambda = \vect x_0\\
			&& \vect x^\lambda \in \mathcal{S}\\
			&& p(\lambda)_\lambda \in \mathcal{P}(\Lambda),
\end{IEEEeqnarray}
where the number of hidden variables $|\Lambda|$ is a priori unbounded. We recover \eqref{eq:opti_entropy} by setting $\vect x_0=(\vE,\vect{\omega}_{\text{avg}})$, $\vect x^\lambda=(\vE^\lambda,\vect{\omega}^\lambda)$, $\mathcal{S}=\mathcal{Q}_{\vect{\omega}_{\text{pk}}}$, $f(\vect x)=f(\vE,\vect{\omega})=f(\vE)=-\sum_{b,x} p(x)\frac{1+bE_x}{2}\log\frac{1+bE_x}{2}$. Similarly, one recovers the variant of \eqref{eq:opti_entropy} corresponding to the guessing probability with $f(\vE)=-\sum_{x} p(x)\max_b\frac{1+bE_x}{2}$, where the minus sign has been introduced to turn the maximization of the guessing probability in the minimization form \eqref{eq:opti_gen}. 

We proceed by showing some general properties of the optimisation problem \eqref{eq:opti_gen}. First note that, while there is no limitation on the number of hidden variables $\lambda$ in the optimisation problem \eqref{eq:opti_gen}, we can show by a simple argument that $\dim(\mathcal{S}) + 1$ are sufficient to reach the minimum. This implies in particular that it was correct to use a minimum instead of a infimum in \eqref{eq:opti_gen}.

\begin{prop}
	\label{prop:finite}
	There exists an optimal solution of \eqref{eq:opti_gen} with $|\Lambda| \leq \dim(\mathcal{S}) + 1$.
\end{prop}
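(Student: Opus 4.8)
The plan is to reduce the statement to a Carathéodory-type dimension count on the graph of $f$ over $\mathcal{S}$, followed by a perturbation argument that exploits optimality to remove one further point.

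Write $d = \dim(\mathcal{S})$ and consider the graph $G = \{(\vect x, f(\vect x)) : \vect x \in \mathcal{S}\} \subset \R^{d+1}$. Since $\mathcal{S}$ is compact and $f$ continuous, $G$ is compact, hence $\mathrm{conv}(G)$ is compact. The set $\{t \in \R : (\vect x_0,t) \in \mathrm{conv}(G)\}$ is then compact and non-empty (it contains $f(\vect x_0)$ because $\vect x_0 \in \mathcal{S}$), so it has a minimum $v$. By the classical Carathéodory theorem in $\R^{d+1}$, the point $(\vect x_0, v) \in \mathrm{conv}(G)$ is a convex combination $\sum_{i=1}^{N} p_i (\vect x_i, f(\vect x_i))$ of at most $N \le d+2$ points of $G$; this is a feasible point of \eqref{eq:opti_gen} with value $v$. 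Conversely, any feasible $\{p(\lambda),\vect x^\lambda\}$ yields $\sum_\lambda p(\lambda)(\vect x^\lambda,f(\vect x^\lambda)) = (\vect x_0,\, \sum_\lambda p(\lambda) f(\vect x^\lambda)) \in \mathrm{conv}(G)$, whence $v \le \sum_\lambda p(\lambda) f(\vect x^\lambda)$. So $f^\star(\vect x_0) = v$, the infimum is attained, and there is an optimal solution with $|\Lambda| \le d+2$; it remains to shave off one point.

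Suppose we hold an optimal solution $\{p_i, \vect x_i\}_{i=1}^N$ with all $p_i > 0$ and $N \ge d+2$. The linear map $\R^N \to \R^{d+1}$ sending $(\mu_i)_i$ to $(\sum_i \mu_i,\, \sum_i \mu_i \vect x_i)$ has non-trivial kernel, so there is $\vect\mu \ne \vect 0$ with $\sum_i \mu_i = 0$ and $\sum_i \mu_i \vect x_i = \vect 0$. As $\vect\mu \ne \vect 0$ and $\sum_i \mu_i = 0$, both strictly positive and strictly negative $\mu_i$ occur. For $\tau$ in the maximal interval $[\tau_-,\tau_+]$ with $\tau_- < 0 < \tau_+$ on which $p_i(\tau) := p_i - \tau\mu_i \ge 0$ for all $i$ (both endpoints are finite precisely because both signs occur among the $\mu_i$), the numbers $p_i(\tau)$ form a probability distribution with $\sum_i p_i(\tau)\vect x_i = \vect x_0$, so $\{p_i(\tau),\vect x_i\}$ is feasible, with objective $\sum_i p_i(\tau) f(\vect x_i) = f^\star(\vect x_0) - \tau \sum_i \mu_i f(\vect x_i)$, affine in $\tau$. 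Since it is minimized at the interior point $\tau = 0$, it must be constant, i.e.\ $\sum_i \mu_i f(\vect x_i) = 0$; evaluating at $\tau = \tau_+$, at least one $p_i(\tau_+)$ vanishes, and we obtain an optimal solution supported on fewer than $N$ points. Iterating until $N \le d+1 = \dim(\mathcal{S})+1$ proves the claim.

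The argument is elementary; the one place requiring care — which I would flag as the only real subtlety — is checking that each perturbation step genuinely reduces the support size, which rests on $\tau_+$ being finite (true because some $\mu_i > 0$) and on $\sum_i p_i(\tau) \equiv 1$, so the truncated solution is still a valid probability distribution. Note that convexity or concavity of $f$ is never used, only continuity, so the same bound applies verbatim to the Shannon-entropy objective and to the guessing-probability objective discussed in Section~\ref{sec:entropy}.
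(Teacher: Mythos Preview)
Your proof is correct. Both your argument and the paper's start by lifting to the graph $G=\{(\vect x,f(\vect x)):\vect x\in\mathcal S\}\subset\R^{d+1}$ and exploit Carath\'eodory, but they diverge in how the extra point is shaved off. The paper observes directly that optimality places $(\vect x_0,f^\star(\vect x_0))$ on the boundary of $\mathrm{conv}(G)$, invokes the supporting hyperplane theorem to conclude that every $\vect v^\lambda$ in an optimal decomposition lies in a common hyperplane of affine dimension $d$, and then applies Carath\'eodory inside that hyperplane to get $d+1$ points in one stroke. You instead apply Carath\'eodory in $\R^{d+1}$ to get $d+2$ points first, and then run an elementary kernel/perturbation argument --- the affineness of the objective in $\tau$ together with optimality at the interior point $\tau=0$ forces the slope to vanish --- to drop one point iteratively. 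Your route is more self-contained (no supporting hyperplane theorem, only linear algebra), while the paper's is more geometric and makes the boundary structure of $\mathrm{conv}(G)$ explicit; both are standard ways of proving the ``Carath\'eodory on the boundary'' phenomenon and yield the same bound. One small implicit step in your argument worth making explicit is that when writing the map $(\mu_i)\mapsto(\sum_i\mu_i,\sum_i\mu_i\vect x_i)$ as landing in $\R^{d+1}$, you are working in coordinates on the $d$-dimensional affine hull of $\mathcal S$; this is harmless but should perhaps be said.
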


\begin{proof}
	 Let $\mathcal F = \{(\vect x,f(\vect x))|\vect x \in \mathcal{S}\} \subset \R^{\dim(\mathcal{S})+1}$ be the graph of the function $f$ on $\mathcal{S}$ and ${\mathcal F}^\star$ its convex closure. Let $\vect x_0\in \mathcal{S}$ and let $\{ \vect x^\lambda,p(\lambda)|\lambda\in \Lambda \}$ be an optimal solution to the problem \eqref{eq:opti_gen}, with $p^\lambda >0$ for $\lambda \in \Lambda$ and with $|\Lambda|> \dim(\mathcal{S})+1$. Then by construction, the point $\vect v_0 = (\vect{x_0},{f^\star}(\vect x_0)) \in \R^{\dim(\mathcal{S})+1}$ is a convex combination of the points $\vect v^\lambda = (\vect{x}^\lambda,f(\vect x^\lambda)) \in  \mathcal{F}$, so that $\vect v_0 \in {\mathcal F^\star}$. Moreover $\vect v_0$ it is on the border of ${\mathcal F}^\star$ because, by the optimality, for all $\epsilon>0$, $(\vect x_0,f^\star(\vect x_0)- \epsilon)\notin {\mathcal{F}^\star}$. 
	 
	We now use the supporting hyperplane to ${\mathcal{F}^\star}$ at $\vect v_0$, i.e., the fact that there exists an affine function $s:\R^{\dim(\mathcal{S})+1}\rightarrow \R$, such that $s[\vect v]\geq 0$ for all $\vect v \in {\mathcal{F}^\star}$ and $s[\vect v_0] = 0$. By linearity we have $0 = \sum_\lambda p(\lambda) s[\vect v^\lambda]$, but since $s[\vect v^\lambda] \geq 0$ for $\vect v^\lambda \in {\mathcal{F}^\star}$, this implies $s[\vect v^\lambda] = 0$ for all $\lambda \in \Lambda$. Thus all the points $\vect v^\lambda$ live in a subspace of dimension $\dim(\mathcal{S})$. By Carathéodory's theorem, we thus find that $\vect v$ can be expressed as a (possibly different) convex combination of a subset $\Lambda' \subset \Lambda$  of size $|\Lambda'| \leq \dim(\mathcal{S})+1$ of the points $\vect v^\lambda$. 
\end{proof}

This result cannot be used, however, to limit a priori the number of hidden variables $\lambda$, because it does not tell us which finite set of extreme points should be considered for a given $\vect x_0$.
When the function $f$ is concave, however, the following straightforward property can help.

\begin{prop}
	If the function $f$ is concave, there exists an optimal solution of \eqref{eq:opti_gen} with $\vect x_\lambda \in \extr(\mathcal{S})$ extremal for all $\lambda\in \Lambda$.
\end{prop}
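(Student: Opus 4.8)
The plan is to take any optimal solution of \eqref{eq:opti_gen} and \emph{refine} each hidden value into a convex combination of extreme points of $\mathcal S$; concavity guarantees this refinement cannot increase the objective, while the averaging constraint is automatically preserved.

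First I would fix an optimal solution $\{p(\lambda),\vect x^\lambda\}_{\lambda\in\Lambda}$, which exists (with $\Lambda$ finite) by Proposition~\ref{prop:finite}. Since $\mathcal S\subset\R^{\dim(\mathcal S)}$ is compact and convex, Minkowski's theorem together with Carathéodory's theorem allow me to write, for each $\lambda$,
\[
\vect x^\lambda=\sum_{j=1}^{\dim(\mathcal S)+1} q^\lambda_j\,\vect e^\lambda_j,\qquad \vect e^\lambda_j\in\extr(\mathcal S),\quad q^\lambda_j\geq 0,\ \sum_j q^\lambda_j=1 .
\]
Then I would introduce the refined hidden variable $\lambda'=(\lambda,j)$ with distribution $p'(\lambda,j)=p(\lambda)\,q^\lambda_j$ and assign to it the value $\vect x^{\lambda'}=\vect e^\lambda_j\in\extr(\mathcal S)\subset\mathcal S$. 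This $\{p'(\lambda'),\vect x^{\lambda'}\}$ is feasible for \eqref{eq:opti_gen}: it is a valid probability distribution, each $\vect x^{\lambda'}$ lies in $\mathcal S$, and the average constraint is preserved because $\sum_{\lambda'}p'(\lambda')\vect x^{\lambda'}=\sum_\lambda p(\lambda)\sum_j q^\lambda_j\vect e^\lambda_j=\sum_\lambda p(\lambda)\vect x^\lambda=\vect x_0$.

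It then remains to compare objective values. Applying the defining two-point concavity inequality finitely many times gives $f(\vect x^\lambda)\geq\sum_j q^\lambda_j f(\vect e^\lambda_j)$ for every $\lambda$, hence $\sum_\lambda p(\lambda)f(\vect x^\lambda)\geq\sum_{\lambda,j}p(\lambda)q^\lambda_j f(\vect e^\lambda_j)=\sum_{\lambda'}p'(\lambda')f(\vect x^{\lambda'})$. Since $\{p(\lambda),\vect x^\lambda\}$ already achieves the minimum, equality holds throughout, so $\{p'(\lambda'),\vect x^{\lambda'}\}$ is again optimal, now with all values extremal (and one may invoke Proposition~\ref{prop:finite} once more to additionally prune the support to $|\Lambda'|\leq\dim(\mathcal S)+1$). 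There is no genuine obstacle here; the only point requiring care is that a general compact convex set need not have finitely many extreme points, so the decomposition into extreme points must be justified via Minkowski/Carathéodory rather than by a vertex enumeration, and everything else is straightforward bookkeeping.
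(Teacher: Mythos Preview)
Your argument is correct: decomposing each $\vect x^\lambda$ into extreme points via Minkowski--Carath\'eodory and invoking concavity (Jensen) is exactly the standard way to establish this. The paper itself does not give a proof, merely labeling the proposition a ``straightforward property'', so your write-up supplies precisely the details that are implicitly left to the reader; there is nothing to compare approaches against.
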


Thus if the extremal points of $\mathcal{S}$ are known and finite and the function $f$ is concave, the problem \eqref{eq:opti_gen} reduces to the search of  a finite number of optimal weights $p(\lambda)$, i.e., to a linear program. In the case of the computation of the conditional entropy or of the guessing probability, the function $f(\vect x)$ is concave; however, the number of extreme points of $\mathcal{S}=\mathcal{Q}$ is not finite.

\begin{prop}
	\label{prop:opti_dual_gen}
	For $\vect x_0 \in \mathcal{S}$, the optimisation problem \eqref{eq:opti_gen} admits the dual problem
	\begin{IEEEeqnarray}{R'r'L}
		\label{eq:opti_dual_gen}
		\IEEEyesnumber
		\IEEEyessubnumber*
		{f^\star_{dual}}(\vect x_0) =& 
			\sup_{\{t, \vect t\}}
				& t + \vect t \cdot \vect x_0\\
			&\st 
				& t + \vect t \cdot \vect x \leq f(\vect x) \fa \vect x \in  \mathcal{S} \label{eq:opti_dual_gen_cons}
	\end{IEEEeqnarray}
	which strong duality, i.e., ${f^\star_{dual}}(\vect x_0)={f^\star}(\vect x_0)$. Moreover the supremum becomes a maximum for $x_0 \in \interior(\mathcal S)$.
\end{prop}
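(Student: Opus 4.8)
The statement is an instance of Lagrangian (equivalently Fenchel) duality for a generalized moment problem, and the plan is to identify the primal optimal value $f^\star(\vect x_0)$ with the convex envelope of $f$ over $\mathcal{S}$, the dual optimal value $f^\star_{dual}(\vect x_0)$ with the supremum of affine minorants of $f$ at $\vect x_0$, and then to invoke the standard fact that these two coincide when $f$ is continuous and $\mathcal S$ compact. Note that $f^\star$ is exactly the convex-roof extension of $f$ alluded to above.

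First I would dispatch \textbf{weak duality}: for any dual-feasible $(t,\vect t)$ and primal-feasible $\{\vect x^\lambda,p(\lambda)\}$, averaging \eqref{eq:opti_dual_gen_cons} against $p(\lambda)$ and using $\sum_\lambda p(\lambda)\vect x^\lambda = \vect x_0$ gives $t + \vect t\cdot\vect x_0 \leq \sum_\lambda p(\lambda) f(\vect x^\lambda)$, hence $f^\star_{dual}(\vect x_0)\leq f^\star(\vect x_0)$.

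For \textbf{strong duality}, I would reuse the objects from the proof of Proposition~\ref{prop:finite}: the graph $\mathcal F=\{(\vect x,f(\vect x)):\vect x\in\mathcal S\}$ and its convex hull $\mathcal F^\star$, which is \emph{compact} because $\mathcal S$ is compact and $f$ continuous (so $\mathcal F$ is compact and, in finite dimension, so is its convex hull). Define $g(\vect x)=\min\{v:(\vect x,v)\in\mathcal F^\star\}$ on $\mathcal S$. Then (i) $g(\vect x_0)=f^\star(\vect x_0)$: the inequality $g(\vect x_0)\leq f^\star(\vect x_0)$ holds because any primal-feasible solution yields a point $(\vect x_0,\sum_\lambda p(\lambda) f(\vect x^\lambda))\in\mathcal F^\star$, and the reverse holds because $(\vect x_0,g(\vect x_0))\in\mathcal F^\star$ is, by Carathéodory, a finite convex combination of points of $\mathcal F$, i.e.\ a primal-feasible solution of value $g(\vect x_0)$. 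And (ii) $g$ is convex (since $\mathcal F^\star$ is convex), bounded (by $\sup_{\mathcal S}\lvert f\rvert$), and lower semicontinuous (if $\vect x_n\to\vect x_0$ and $g(\vect x_n)$ converges, a subsequential limit of $(\vect x_n,g(\vect x_n))\in\mathcal F^\star$ lies in $\mathcal F^\star$, forcing $g(\vect x_0)\leq\lim g(\vect x_n)$). A closed proper convex function equals the supremum of its affine minorants, and an affine $\ell$ minorizes $g$ on $\mathcal S$ iff it minorizes $f$ on $\mathcal S$ (one direction from $g\leq f$, the other from the averaging property defining $g$). Thus $f^\star(\vect x_0)=g(\vect x_0)=\sup\{t+\vect t\cdot\vect x_0:\ t+\vect t\cdot\vect x\leq f(\vect x)\ \forall\,\vect x\in\mathcal S\}=f^\star_{dual}(\vect x_0)$. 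For $\vect x_0\in\interior(\mathcal S)$, the finite convex function $g$ admits a subgradient $\vect t$ at the interior point $\vect x_0$; the affine map $\ell(\vect x)=g(\vect x_0)+\vect t\cdot(\vect x-\vect x_0)$ is then dual-feasible and attains the dual value, so the supremum is a maximum.

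The main obstacle is the ``no gap'' step, i.e.\ the identity $g=(\text{sup of affine minorants of }f)$ at points $\vect x_0$ on the \emph{boundary} of $\mathcal S$: the supporting hyperplane of $\mathcal F^\star$ at $(\vect x_0,g(\vect x_0))$ can there be vertical, which must be ruled out or circumvented. I would handle it by approaching $\vect x_0$ along a segment from a point of $\interior(\mathcal S)$: at interior points the supporting hyperplane is automatically non-vertical (else it would separate $\vect x_0$ within $\mathcal S$), so the identity holds there, and since $g$ as well as the envelope function are finite, convex and lower semicontinuous, they are continuous along that segment up to the endpoint, which transports the identity to $\vect x_0$. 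Everything else is routine convex analysis; alternatively the whole argument can be phrased as closedness of the moment set $\{(\sum_\lambda p(\lambda)\vect x^\lambda,\sum_\lambda p(\lambda) f(\vect x^\lambda))\}=\mathcal F^\star$ together with a standard moment-duality theorem.
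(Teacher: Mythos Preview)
Your proof is correct and rests on the same geometric idea as the paper's---a supporting hyperplane to a convex body built from $f$---but the packaging differs. The paper works directly with the epigraph $F^{+}=\{(\vect x,y)\in\mathcal S\times\R: y\geq f^{\star}(\vect x)\}$: for $\vect x_0\in\interior(\mathcal S)$ it takes a supporting hyperplane at $(\vect x_0,f^{\star}(\vect x_0))$, and for boundary $\vect x_0$ it uses an $\epsilon$-trick, strictly separating $F^{+}$ from the single point $(\vect x_0,f^{\star}(\vect x_0)-\epsilon)$ and observing that the separating hyperplane cannot be vertical (else it would strictly separate $\vect x_0$ from $\mathcal S\ni\vect x_0$), then letting $\epsilon\to 0$. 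You instead work with the compact convex hull $\mathcal F^{\star}$ of the graph, identify its lower envelope with $f^{\star}$, extract lower semicontinuity from compactness, and then invoke Fenchel--Moreau; your segment-continuity argument for the boundary is an alternative to the paper's $\epsilon$-trick. Your route makes the role of compactness of $\mathcal S$ (hence closedness of the moment set $\mathcal F^{\star}$) explicit and plugs into standard convex-analysis machinery, while the paper's route is self-contained and avoids naming the biconjugate theorem---though its strict-separation step tacitly relies on $F^{+}$ being closed, i.e.\ on the lower semicontinuity of $f^{\star}$ that you isolate and prove.
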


\begin{proof}
	First note that any feasible point $t,\vect t$ of \eqref{eq:opti_dual_gen} provides a lower-bound on ${f^\star}(\vect x_0)$, because ${f^\star}(\vect x_0) = {\sum}_\lambda p^\lambda f(\vect x^\lambda) \geq {\sum}_\lambda p^\lambda (t + \vect t \cdot \vect x^\lambda) = t + \vect t \cdot \vect x_0$.
	
	To show strong duality, we first treat the case $\vect x_0 \in \interior(\mathcal S)$ and construct a feasible solution of the dual problem that achieves $f^\star(\vect x_0)$. Consider the epigraph of $f^\star$, defined as $F^+ = \{ (\vect x,y)\in \mathcal S \times \R| y \geq f^\star(x) \}$. This is a convex set, since $f^\star(\vect x)$ is convex in $\vect x$ by construction, so we can consider the supporting hyperplane at $\vect v_0 = (\vect x_0 ,f^\star(\vect x_0)) \in F^+$. We have
	\begin{IEEEeqnarray}{rL}
		\label{eq:firstcon}
		t' + \vect t'\cdot \vect x - s y &\leq 0 \text{\,, for all } (\vect x,y) \in F^+\\
		\label{eq:seccon}
		t' + \vect t'\cdot \vect x - s y &= 0 \text{\,, for }(\vect x,y) = (\vect x_0,f^\star(\vect x_0))\,.
	\end{IEEEeqnarray}
	for some $(t',\vect t',s)$. Because of the definition of $F^+$, we must necessarily have $s \geq 0$ as $F^+$ is unbounded in the direction of increasing $y$, and, since $\vect x_0 \in \interior(\mathcal S)$, we must have $s \neq 0$. We can thus define $t = t'/s$, $\vect t = \vect t'/s$. The first condition \eqref{eq:firstcon} implies that $t,\vect t$ is a feasible solution: $t + \vect t\cdot \vect x \leq f^\star(\vect x) \leq f(\vect x)$ for all $\vect x\in \mathcal S$. The second implies that the supremum is obtained: $t + \vect t\cdot \vect x_0 = f^\star(\vect x_0)$.	
	
	When $x_0 \in S \backslash \interior(S)$ in on the border of $S$, the supporting hyperplane could in principle be vertical such that $s=0$, so that the supremum is not obtained for some finite $(t,\vect t)$. But let's show that we can approach it arbitrarily well. Let $\epsilon>0$, and let $\vect v_0^\epsilon = (\vect x_0,f^\star(x_0)-\epsilon) \notin F^+$. There must exist a separating hyperplane between the convex sets $F^+$ and the point $\vect v_0^\epsilon$, so that 
	\begin{IEEEeqnarray}{rL}
		t' + \vect t'\cdot \vect x  - s y &< 0 \text{\,, for all } (\vect x,y) \in F^+\\
		t' + \vect t'\cdot \vect x - s y &> 0 \text{\,, for }(\vect x,y) = (\vect x_0, f^\star(\vect x_0)-\epsilon)\,,
	\end{IEEEeqnarray}
	Because of the definition of $F^+$, we must have $s \geq 0$ and we cannot have $s=0$. Proceeding as above we find a feasible point $(t,\vect t)$ such that $t + \vect t\cdot \vect x_0 > f(\vect x_0) - \epsilon$. This proves strong duality.
\end{proof}

\end{document}